\DeclareMathAlphabet{\mathcal}{OMS}{cmsy}{m}{nx}
\newcommand{\switches}{\ensuremath{\mathcal{W}}}
\newcommand{\cands}{\ensuremath{\mathcal K}}
\newcommand{\stops}{\ensuremath{\mathcal S}}
\newcommand{\lines}{\ensuremath{\mathcal M}}
\newcommand{\firstStop}{\ensuremath{\underline{s}}}
\newcommand{\lastStop}{\ensuremath{\overline{s}}}
\newcommand{\sgn}{\operatorname{sgn}}
\newcommand{\RR}{\mathrm{R}}
\newcommand{\LL}{\mathrm{L}}
\newcommand{\consSwitches}{\ensuremath{\Gamma}}
\newcommand{\consLabels}{\ensuremath{P}}
\newcommand{\firstLabel}{\ensuremath{h}}
\newcommand{\lastLabel}{\ensuremath{t}}
\newtheorem{assumption}{Assumption}
\newcommand{\DYNALG}{\textsc{DpAlg}\xspace}
\newcommand{\SCALEALG}{\textsc{ScaleAlg}\xspace}
\newcommand{\GREEDYALG}{\textsc{GreedyAlg}\xspace}
\newcommand{\ILPALG}{\textsc{IlpAlg}\xspace}
\newcommand{\PROPERTY}{transitivity property\xspace}
\title{An Algorithmic Framework for Labeling Network Maps\thanks{A preliminary version of
    this paper has appeared in \emph{Proc.\ 21st Int.\ Conf.\ on Computing Combinatorics (COCOON'15)}, volume 9198 of \emph{Lect. Notes
      Comput. Sci.}, pages 689--700, Springer-Verlag.  This research
    was initiated during Dagstuhl Seminar 13151 “Drawing Graphs and
Maps with Curves” in April 2013}}
\author{Jan-Henrik Haunert\inst{1} \and Benjamin Niedermann\inst{2}
}
\institute{University of Osnabr\"{u}ck, Germany \and Karlsruhe Institute of Technology, Germany }
\begin{document}

\maketitle

\begin{abstract}
  Drawing network maps automatically comprises two challenging steps,
  namely laying out the map and placing non-overlapping labels. In
  this paper we tackle the problem of labeling an already existing
  network map considering the application of metro maps.  We present a
  flexible and versatile labeling model that subsumes different
  labeling styles. We show that labeling a single line of the network
  is NP-hard, even if we make very restricting assumptions about the
  labeling style that is used with this model.  For a restricted
  variant of that model, we then introduce an efficient algorithm that
  optimally labels a single line with respect to a given cost
  function. Based on that algorithm, we present a general and
  sophisticated workflow for multiple metro lines, which is
  experimentally evaluated on real-world metro maps.
\end{abstract}

\section{Introduction}\label{sec:introduction}
Label placement and geographic network visualization are classical
problems in cartography, which independently of each other have
received the attention of computer scientists.  Label placement
usually deals with annotating point, line or area features of interest
in a map with text labels such that the associations between the
features and the labels are clear and the map is kept legible
\cite{imhof}.  Geographic network visualization, on the other hand,
often aims at a geometrically distorted representation of reality that
allows information about connectivity, travel times, and required
navigation actions to be retrieved easily.  Computing a good network
visualization is thus related to finding a layout of a graph with
certain favorable properties \cite{wolff2013}.  For example, to avoid
visual clutter in metro maps, an \emph{octilinear} graph layout is
often chosen, in which the orientation of each edge is a multiple of
45$^\circ$ \cite{metroMap1,metroMap2,wang2011}.  Alternatively, one
may choose a \emph{curvilinear} graph layout, that is, to display the
metro lines as curves \cite{fink,goethem}.

Computing a graph layout for a metro map and labeling the stops
have been considered as two different problems that can be solved in
succession \cite{wang2011}, but also integrated solutions have been
suggested \cite{metroMap1,metroMap2}. Nevertheless, in practice,
metro maps are often drawn manually by cartographers or designers, as
the existing algorithms do not achieve results of sufficient quality
in adequate time. For example, N\"ollenburg and Wolff \cite{metroMap1}
report that their method needed 10 hours and 31 minutes to compute a
labeled metro map of Sydney that they present in their article, while
an unlabeled map for the same instance was obtained after 23
minutes---both results were obtained without proof of optimality but
with similar optimality gaps. On the other hand Wang and Chi \cite{wang2011} present an algorithm that creates the graph layout and labeling within one second, but they cannot guarantee that labels do not overlap each other or the metro lines.

An integrated approach to computing a graph layout and labeling the
stops allows consideration to be given to all quality criteria of
the final visualization. On the other hand, treating both problems
separately will probably reduce computation time. Moreover, we
consider the labeling of a metro map as an interesting problem on its
own, since, in some situations, the layout of the network is given as
part of the input and must not be changed. In a semi-automatic
workflow, for example, a cartographer may want to draw or alter a
graph layout manually before using an automatic method to place
labels, probably to test multiple different labeling styles with the
drawing. Hence, a labeling algorithm is needed that is
rather flexible in dealing with different labeling styles.
% for metro
%maps.

In this paper, we are given the layout of a metro map consisting of
several metro lines on which stops (also called stations) are located.
For each stop we are further given its name, which should be placed
close to its position. We first introduce a versatile and general
model for labeling metro maps; see Section~\ref{sec:model}.  Like many
labeling algorithms for point sets
\cite{aks-lpmis-98,cms-esapf-95,fw-ppalm-91}, our algorithm uses a
discrete set of candidate labels for each point.  Often, each label is
represented by a rectangle wrapping the text.  Since we also want to
use curved labels, however, we represent a label by a simple polygon
that approximates a \emph{fat curve}, that is, a curve of certain width
reflecting the text height.  We then prove that even in that
simple model labeling a single metro line is NP-hard considering
different labeling styles. Hence, we restrict the set of candidates
satisfying certain properties, which allows us to solve the problem on one metro line~$C$
in~$O(n^2)$ time, where~$n$ is the number of stops of~$C$;
see Section~\ref{sec:single-line}. This algorithm optimizes
the labeling with respect to a cost function that is based on Imhof's \cite{imhof}
classical criteria of cartographic quality.  Utilizing
that algorithm, we present an efficient heuristic for labeling a
metro map consisting of multiple metro lines;
see Section~\ref{sec:multi-lines}. 
Our method is similar to the heuristic presented by Kakoulis and Tollis~\cite{Kakoulis1998},
in the sense that it discards some label candidates 
to establish a set of preconditions that allow for an efficient exact solution.
Our model of quality is more general than the one of Kakoulis and Tollis, however,
as it not only takes the quality of individual labels but also the quality of pairs of labels for consecutive metro stations into account.
Finally, we evaluate our approach
presenting experiments conducted on realistic metro maps;
see Section~\ref{sec:evaluation}. 
Note that
``stops'' on ``metro lines'' can refer more generally to points of
interest on the lines of any kind of a network map. We address
labeling styles for octilinear graph layouts and curvilinear graph
layouts that use B\'ezier curves. The more general model behind our
method, however, subsumes but is not limited to these particular
styles.

\section{Labeling Model}\label{sec:model}

% \newcommand{\scalelabels}{0.5}
% \begin{figure*} 

%   \centering
%  \subfigure[]{
%     \includegraphics[page=2,scale=0.45]{model}
%     \label{fig:model:candidates}
%   } \hspace{0.05cm}\subfigure[]{
%     \includegraphics[page=1,scale=0.65]{model}
%     \label{fig:model:labeling}
%   }
%      \subfigure[]{
%      \includegraphics[page=5,scale=0.65]{dyn_prog}
%      \label{fig:containment:good}
%    }\hspace{0.05cm}
%    \subfigure[]{
%    \includegraphics[page=6,scale=0.65]{dyn_prog}
%    \label{fig:containment:bad}
%    }
%   \caption{Illustration of the model. 
%     \protect\subref{fig:model:candidates} Two stops with
%      candidate labels. \protect\subref{fig:model:labeling}
%     The metro line partitions the
%     labels into two groups; one contained in~$A_\LL$ and one contained
%     in~$A_\RR$. \subref{fig:containment:good} The candidates satisfy the \PROPERTY. \subref{fig:containment:bad} The candidates do not satisfy the \PROPERTY. }
% \end{figure*}

\newcommand{\LS}{\textsc{Ls}}
\newcommand{\MML}{\textsc{Metro\-Map\-Label\-ing}\xspace}
\newcommand{\RMLL}{\textsc{Soft\-Metro\-Li\-ne\-La\-bel\-ing}\xspace} We
assume that the metro lines are given by directed,
non-self-intersecting curves in the plane described by polylines,
which for example have been derived by approximating B\'{e}zier curves. We
denote that set of metro lines by $\lines$. Further, the stops of each
metro line~$C\in \lines$ are given by an ordered set~$\stops_C$ of
points on~$C$ going from the beginning to the end of~$C$.  For two
stops~$s,s'\in \stops_C$ we write~$s<s'$ if $s$ lies before~$s'$.
We denote the union of the stops among all metro lines by $\stops$ and
call the pair~$(\lines,\stops)$ a \emph{metro map}.  

% \begin{wrapfigure}[13]{l}{3.2cm}
% \includegraphics[page=2,scale=1.0]{candidates_small}
% \caption{ Fan of candidates created by fat B\'{e}zier curves.}
% \label{fig:curved_candidates_small}
% \end{wrapfigure} 
For each stop~$s\in \stops$ we are further given a name that
should be placed close to it. 
In contrast to previous work,
% e.g., \cite{metroMap4,metroMap1,metroMap2,metroMap3},
we do not follow traditional map labeling abstracting from the given
text by bounding boxes. Instead we model a \emph{label}~$\ell$ of a stop~$s\in \stops$ as a simple polygon. For
example, a label could have been derived by approximating a fat curve
prescribing the name of the stop; see Fig.~\ref{fig:curved_candidates}.
% We require that~$\ell$ does not intersect the line
% segments of~$C$ on which~$s$ lies.  Hence, $\ell$ emanates from~$s$
% either to the left hand side of~$l$ or to the right hand side of~$l$
% going along~$C$ from its beginning to its end; in the former (latter)
% case we say that~$\ell$ lies to the left (right) of~$C$ and~$\ell$ is
% a \emph{left label} (\emph{right label}). We further require that a
% left label $s$ does not intersect any right label $s$.
For each stop~$s$ we are given a set~$\cands_s$ of labels, which we
also call~\emph{candidates} of~$s$. The set $\bigcup_{s\in \stops} \cands_s$ is denoted by~$\cands$.

Since ``names should disturb other map content as little as
possible''\cite{imhof}, we strictly forbid overlaps between labels and lines as
well as label-label overlaps. Further, each stop must be labeled.
Hence, a set~$\mathcal L\subseteq \cands$ is called a
\emph{labeling} if
\begin{inparaenum}[(1)]
\item no two labels of~$\mathcal L$ intersect each other,
\item no label~$\ell\in \mathcal L$ intersects any metro line~$C\in
  \lines$, and
\item for each stop~$s\in \stops$ there is exactly one label~$\ell\in
  \mathcal L\cap \cands_s$.
\end{inparaenum}

\begin{definition}[\MML]

\noindent  \emph{Given:} Metro map~$(\lines,\stops)$, candidates~$\cands$ and cost function~$w\colon 2^{\cands}\to
\mathbb{R}^+$.

\noindent  \emph{Find, if it exists:} Optimal labeling~$\mathcal L$ of
$(\lines,\stops,\cands,w)$, i.e., $w(\mathcal L)\leq w(\mathcal L')$
for any labeling~$\mathcal L'\subseteq \cands$. 
\end{definition}

\newcommand{\segments}{H}
\newcommand{\CurvedLabels}{\textsc{Curved\-Style}\xspace}
\newcommand{\OctiLabels}{\textsc{Octi\-lin\-Style}\xspace}

% \begin{figure*}[t]
%    \centering
%     \includegraphics[page=3,scale=1.0]{candidates_small}
%     \caption{Construction of octilinear candidates for a stop~$s$ of an
%       ocitlinear metro map.  (a)~$s$ lies on a horizontal segment.
%       (b)~$s$ lies on a vertical segment. (c)~$s$ lies on a diagonal
%       segment. (d)~$s$ lies on a crossing of two diagonal segments. (e)~$s$~lies on a crossing of a vertical and
%       horizontal segment.  }
%   \label{fig:octilinear_candidates_small}
% \end{figure*}

\begin{figure}[t]
  \centering \subfigure[]{
    \includegraphics[page=1,scale=1]{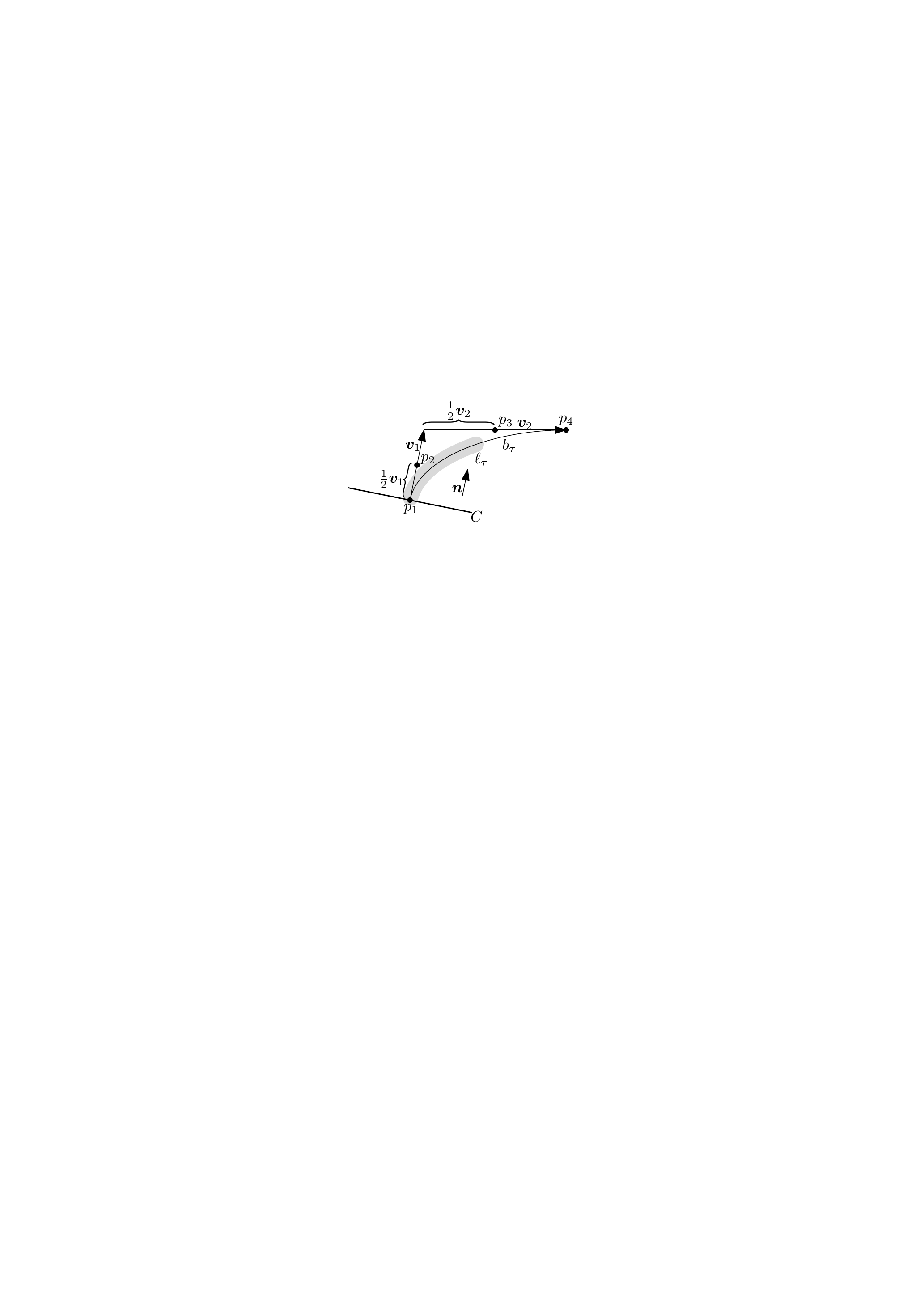}
    \label{fig:curved_candidates:single}
  }
\subfigure[]{
    \includegraphics[page=2,scale=0.8]{candidates}
    \label{fig:curved_candidates:multiple}
  }
  \caption{Construction of curved
    candidates. \protect\subref{fig:curved_candidates:single}
    Construction of a single
    label. \protect\subref{fig:curved_candidates:multiple}
    Candidates~$\cands_s=L_{1}\cup L_{-1}$ for stop~$s$. }
  \label{fig:curved_candidates} 
\end{figure}

The model allows us to create arbitrarily  shaped label candidates for a
metro map. In our evaluation we have considered two different
\emph{labeling styles}. 
The first style, \OctiLabels, creates for each
stop a set of octilinear rectangles as label candidates; see
Fig.~\ref{fig:octilinear_candidates}. We use that style for
octilinear maps.
The second style, \CurvedLabels, creates for each stop a set of fat B\'{e}zier curves as label candidates, which are then approximated by simple polygons; see Fig.~\ref{fig:curved_candidates}. 
We use that style for curvilinear metro maps, in order to adapt the curvilinear style of the metro map. 
The basic idea is that a label perpendicularly emanates from a stop with respect to its metro line and then becomes horizontal to
sustain legibility. In the following section we motivate our choice of candidates based on cartographic criteria and give detailed technical descriptions for both labeling styles.

\subsection{Two Examples of Labeling Styles}

We extracted the rules for
generating label candidates  from Imhof's ``general principles and
requirements'' for map labeling \cite{imhof}.
For schematic network maps, the need for ``legibility'' implies that
we must not destroy the underlying design principle with clutter. To
this end, we generate candidate labels that adhere to the schematics
of the network. That is, we use straight horizontal and diagonal
labels with octilinear layouts and curved labels with curvilinear
layouts. 

We now describe more precisely, how we defined the labeling styles \CurvedLabels and \OctiLabels, which are used for curvilinear layouts and octilinear layouts, respectively.

\textbf{Curvilinear Metro Maps.}
For \CurvedLabels, assume that the given metro map is
curvilinear.
In order to achieve a ``clear graphic association'' between a label
and the corresponding point $p$, we construct the simple polygon
prescribing a candidate label based on a curve $\ell$ (possibly a
straight-line segment) that emanates from $p$. The candidate label
itself is a continuous section of $\ell$ that does not directly start
in $p$ but at a certain configurable distance from it. We define the
end of the candidate label on $\ell$ based on the text length and
assign a non-zero width to the curve section to represent the text
height. In the case that $p$ lies on a single curved line $C$, we
require that $\ell$ and $C$ are perpendicular in $p$ to enhance the
angular resolution of the final drawing. By bending $\ell$ towards the
horizontal direction, we avoid steep labels. We approximate~$\ell$ by a simple polygon consisting of a constant number of line segments.

We now describe the construction of a single candidate more
specifically.  For each stop~$s$ of each metro line~$C$ we create a
constant number of curved labels adapting the curvilinear style of the metro
map. The basic idea is that a label perpendicularly emanates from~$s$
with respect to~$C$ and then becomes horizontal to sustain legibility;
see Fig.~\ref{fig:curved_candidates}. Let~$\vec n=(n_x,n_y)$ be the
normalized normal vector of~$C$ at~$s$.  Further, let $d\in\{-1,1\}$
and $c_1,c_2\in \mathbb{R}^+$ be pre-defined
constants. For~$\tau=(c_1,c_2,d)$ we define the fat cubic B{\'e}zier
curve~$b_{\tau}$ by the following four control points; see
Fig.~\ref{fig:curved_candidates}.
%\begin{align*}
 $ p_1= s,~ 
  p_2= s+ 0.5 \cdot \vec v_1,~%\\
  p_3= s+ \vec v_1 + 0.5 \cdot \vec v_2,~
  p_4= s+ \vec v_1 + \vec v_2$,
%\end{align*}
  where $\vec v_1=c_1\cdot \vec n$, $\vec v_2=\sgn(\vec n)\cdot
  (d\cdot c_2,0)$, and~$\sgn(\vec n)=1$ if~$n_x>0$ and~$\sgn(\vec
  n)=-1$ otherwise. We define the thickness of~$b_\tau$ to be the
  pre-defined height of a label.  Let~$\ell_\tau$ be the sub-curve
  of~$b_{\tau}$ that starts at $p_1$ and has the length of the name
  of~$s$ and let~$\ell'_\tau$ be the curve when mirroring~$\ell_\tau$
  at~$s$. Further, let~$l_{m}$ be the length of the longest name of a
  stop in~$\stops$ and let~$L_d=\{\ell_{\tau},\ell'_\tau\mid \tau
  \in\{(l_{m},l_{m},d),(\frac{l_{m}}{2},l_{m},d),(\frac{l_{m}}{4},l_{m},d)\}\}$.
  If~$\vec n$ has an orientation less than or equal to~$60^\circ$, we
  set~$\cands_s=L_1$ and otherwise $\cands_s=L_{1}\cup L_{-1}$. Hence,
  if~$\vec n$ is almost vertical and~$C$ is therefore almost
  horizontal at~$s$, we also add the labels~$L_{-1}$ pointing into the
  opposite $x$-direction than~$\vec n$. In our experiments we did not
  let the labels start at~$s$, but with a certain offset to~$s$, in
  order to avoid intersections with~$C$.

  \textbf{Octilinear Metro Maps.}  For \OctiLabels assume that the
  metro map is octilinear. We model the labels as horizontal and
  diagonal rectangles.  Let~$l$ be the line segment of~$C$ on
  which~$s$ lies and let~$R$ be an axis-aligned rectangle that is the
  bounding box of the name of~$s$. Further, let~$c$ be a circle
  around~$s$ with a pre-defined radius. We place the labels such that
  they touch the border of~$c$, but they do not intersect the interior
  of~$c$. Hence, the labels have a pre-defined offset to~$s$.

\newcommand{\octiLabelsScale}{0.99}
\begin{figure*}[t]
  \centering \subfigure[]{
    \includegraphics[page=3,scale=\octiLabelsScale]{candidates}
    \label{fig:octilinear_candidates:horizontal}
  }
  \subfigure[]{
    \includegraphics[page=5,scale=\octiLabelsScale]{candidates}
    \label{fig:octilinear_candidates:diagonal}
  }
  \subfigure[]{
    \includegraphics[page=4,scale=\octiLabelsScale]{candidates}
    \label{fig:octilinear_candidates:vertical}
  }
  \subfigure[]{
    \includegraphics[page=6,scale=\octiLabelsScale]{candidates}
    \label{fig:octilinear_candidates:dcrossing}
  }
  \subfigure[]{
    \includegraphics[page=7,scale=\octiLabelsScale]{candidates}
    \label{fig:octilinear_candidates:hvcrossing}  
  }
  \subfigure[]{
    \includegraphics[page=8,scale=\octiLabelsScale]{candidates}
    \label{fig:octilinear_candidates:hdcrossing}  
  }

  \caption{Construction of octilinear candidates for a
    stop~$s$. \protect\subref{fig:octilinear_candidates:horizontal}~$s$
    lies on a horizontal segment.
    \protect\subref{fig:octilinear_candidates:diagonal}~$s$ lies on a
    diagonal
    segment. \protect\subref{fig:octilinear_candidates:vertical}~$s$
    lies on a vertical
    segment. \protect\subref{fig:octilinear_candidates:dcrossing}~$s$
    lies on a crossing of two diagonal segments~$l$ and
    $l'$. \protect\subref{fig:octilinear_candidates:hvcrossing}~$s$
    lies on a crossing of a vertical and horizontal segment. \protect\subref{fig:octilinear_candidates:hdcrossing}~$s$
    lies on a crossing of a vertical and diagonal segment.}
  \label{fig:octilinear_candidates} 
\end{figure*}

If~$l$ is horizontal, we place five copies~$\ell_1,\dots,\ell_5$
of~$R$ above~$l$ as follows;~see
Fig.~\ref{fig:octilinear_candidates:horizontal}.  We place~$\ell_1$,
$\ell_2$ and $\ell_3$ such that the left-bottom corner of~$\ell_1$,
the midpoint of $\ell_2$'s bottom edge and the right-bottom corner
of~$\ell_3$ coincides with the topmost point of~$c$. We rotate~$\ell_4$ by~$45^\circ$ counterclockwise and place it at~$c$ such that the midpoint of its left side touches~$c$, i.e., that midpoint lies on a diagonal through~$s$.   Finally, $\ell_5$ is obtained by mirroring
$\ell_4$ at the vertical line through~$s$. Mirroring~$\ell_1,\dots,\ell_5$ at the horizontal
line through~$s$, we obtain the rectangles~$\ell'_1,\dots,\ell'_5$,
respectively. We then set~$\cands_s=\{\ell_i,\ell'_i\mid1\leq i \leq
5\}$.

If~$l$ is diagonal, we create the candidates in the same manner as in
the case that~$l$ is horizontal; see
Fig.~\ref{fig:octilinear_candidates:diagonal}. However, we only create the candidates~$\ell_2$ and $\ell'_2$ and the candidates that are horizontally aligned.

If~$l$ is vertical, we place three copies~$\ell_1,\dots,\ell_3$ of~$R$
to the right of~$l$ as follows; see
Fig.~\ref{fig:octilinear_candidates:vertical}. We rotate~$\ell_1$ by
$45^\circ$ counterclockwise and $\ell_2$ by $45^\circ$ clockwise. We
place~$\ell_1$, $\ell_2$, $\ell_3$ at~$c$ such that the midpoints
of their left edges touch~$c$. Mirroring~$\ell_1$, $\ell_2$,
$\ell_3$ at the vertical line through~$s$, defines the
rectangles~$\ell'_1$, $\ell'_2$ and $\ell'_3$. We
set~$\cands_s=\{\ell_i,\ell'_i\mid1\leq i \leq 3\}$.

In case that~$s$ is a crossing of two metro lines, we create the
candidates differently. If~$s$ is the crossing of two diagonals, we
create the candidates as shown in
Fig.~\ref{fig:octilinear_candidates:dcrossing}.  If~$s$ is the
crossing of a horizontal and a vertical segment, we create the labels
as shown in Fig.~\ref{fig:octilinear_candidates:hvcrossing}. If $s$ is
the crossing of a diagonal and horizontal segment, we create the
labels as shown in Fig.~\ref{fig:octilinear_candidates:hdcrossing}. We
analogously create the labels, if $s$ is the crossing of a vertical
and a diagonal segment.

\textbf{Remark:} If a stop lies on multiple metro lines, then we
can apply similar constructions, where the labels are placed on the
angle bisectors of the crossing lines.

\newcommand{\theoremNP}{
\MML is NP-hard, if the labels are based on \OctiLabels or \CurvedLabels, even if the  map has only one metro line.
}
\newcommand{\theoremNPComplete}{
\MML is NP-complete, if the labels are based on~\OctiLabels or~\CurvedLabels, even if the metro map has only one metro line.
}

\section{Computational Complexity}
\label{sec:comp-compl}

We first study the computational complexity of \MML assuming that the
labels are either based on~\OctiLabels or \CurvedLabels. In particular
we show that the problem is NP-hard, if the metro map consists of only
one line. The proof uses a reduction from the NP-complete problem
\emph{monotone planar 3SAT}\cite{l-pftu-82}. Based on the given style,
we create for the set~$\mathcal C$ of 3SAT clauses a metro
map~$(\lines,\stops)$ such that~$(\lines,\stops)$ has a labeling if
and only if~$\mathcal C$ is satisfiable. The proof can be easily
adapted to other labeling styles. Note that the complexity of labeling
points using a finite set of axis-aligned rectangular label candidates
is a well-studied NP-complete problem, e.g.,
see~\cite{Fowler1981133,fw-ppalm-91}. However, since we do not
necessarily use axis-aligned rectangles as labels and since for the
considered labeling styles the labels are placed along metro lines, it
is not obvious how to reduce a point-feature labeling instance on an
instance of \MML. In order to show the NP-hardness, we prove that it
is NP-complete to decide whether a metro map $(\lines,\stops)$ has a
labeling based on the given labeling style.

\begin{theorem}
  \theoremNPComplete
\end{theorem}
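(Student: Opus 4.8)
The plan is to establish both membership in NP and NP-hardness. Membership is the easy direction: given a set $\mathcal L \subseteq \cands$, one checks in polynomial time that each stop in $\stops$ is covered by exactly one label of $\mathcal L$, that no two chosen labels intersect, and that no chosen label intersects any curve in $\lines$. Since every label is a simple polygon with a constant number of vertices, each pairwise test reduces to a constant number of segment-intersection tests, so the total verification is polynomial in $|\cands|$. Hence the decision version of \MML (does a labeling exist?) lies in NP. For the hardness direction I would reduce from monotone planar 3SAT, exploiting its standard rectilinear embedding: the variables sit on a horizontal spine, every all-positive clause is attached from above and every all-negative clause from below, and the variable--clause incidence graph is planar. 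The goal is to translate such an instance into a metro map with a single line $C$ whose stops admit a labeling if and only if the formula is satisfiable.

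First I would design a variable gadget. The single line $C$ must route through the entire construction without self-intersection, so each variable is realized by a stretch of $C$ that carries a chain of stops. Using the style-specific candidate set (octilinear rectangles for \OctiLabels, fat curves for \CurvedLabels), I would arrange consecutive stops so that adjacent labels conflict whenever they are placed on the same side of $C$; this forces the labeling of the chain into one of two globally consistent \emph{phases}, which I identify with \emph{true} and \emph{false}. Because a variable must feed several clauses both above and below the spine, the chain has to transport this phase as a signal along $C$: the alternating conflict pattern acts like a row of dominoes, so that fixing the label of the first stop determines the labels of all stops in the chain. Where the line turns upward toward a positive clause or downward toward a negative clause, the local geometry is arranged so that the outgoing signal encodes the variable's value correctly.

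Next I would build a clause gadget. For each clause I would introduce a stop, together with a small amount of supporting structure, whose candidate labels can all be placed simultaneously with the incident variable signals exactly when at least one of the three literals is in its satisfying phase; if all three literals arrive in the falsifying phase, every candidate of the clause stop is blocked by a neighboring label or by $C$ itself, so no local labeling exists. Routing is where planarity pays off: because the incidence graph admits the monotone rectilinear embedding, the wires leaving each variable toward its clauses can be threaded so that the whole figure is traced by one non-self-intersecting curve, respecting the model's requirement that $C$ be simple. Assembling the variable and clause gadgets along this curve yields a metro map of size polynomial in the formula, and by construction a satisfying assignment corresponds exactly to a valid labeling and vice versa. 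Combined with membership in NP, this gives NP-completeness, and the hardness direction in particular yields the companion NP-hardness statement for the optimization version.

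The hard part will be the gadget design itself. Unlike generic point-labeling hardness proofs, the label shapes here are not free parameters: they are dictated by \OctiLabels or \CurvedLabels, so I cannot simply postulate convenient candidate rectangles. I must instead exploit the specific offsets, orientations, and side choices these styles provide to manufacture a faithful two-state variable and a correct ``at-least-one-true'' clause, while simultaneously honoring the global constraint that every gadget and every wire lie on a single simple curve that none of the labels may cross. Verifying that the forced conflicts are exactly the intended ones---neither too few, which would admit illegal labelings, nor too many, which would rule out legal ones---is the delicate, style-dependent core of the argument, and I would expect to handle \OctiLabels and \CurvedLabels by closely parallel but separate geometric constructions.
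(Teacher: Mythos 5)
Your proposal follows essentially the same route as the paper: NP membership by guess-and-check, then a reduction from planar monotone 3SAT in which a single metro line threads through chain-like variable gadgets whose candidate labels are forced into one of two alternating ``phases'' encoding truth values, and clause gadgets that admit a label exactly when at least one incoming signal is in the satisfying phase. The two concrete ingredients you defer as ``the hard part'' are precisely what the paper supplies: an explicit \emph{fork} gadget that duplicates a variable's phase so it can feed several clauses (fan-out is not achievable by a bare chain alone), and a spanning-tree argument that merges the simple polygons enclosing the individual gadgets into one simple polygon, which is then cut open to obtain the single non-self-intersecting metro line.
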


% \rephrase{Theorem}{\ref{thm:npc}}{\theoremNPComplete}

% \begin{theorem}\label{thm:npc}
%   For a given metro map~$(\lines,\stops)$ consisting of one metro
%   line~$C$ it is NP-complete to decide whether there exists a labeling
%   for $(\lines,\stops)$, if the labels are either based on \OctiLabels
%   or \CurvedLabels. In particular the problem is NP-complete if all
%   stops's names have the same length.
% \end{theorem}

\begin{figure}[t]
\centering
\includegraphics{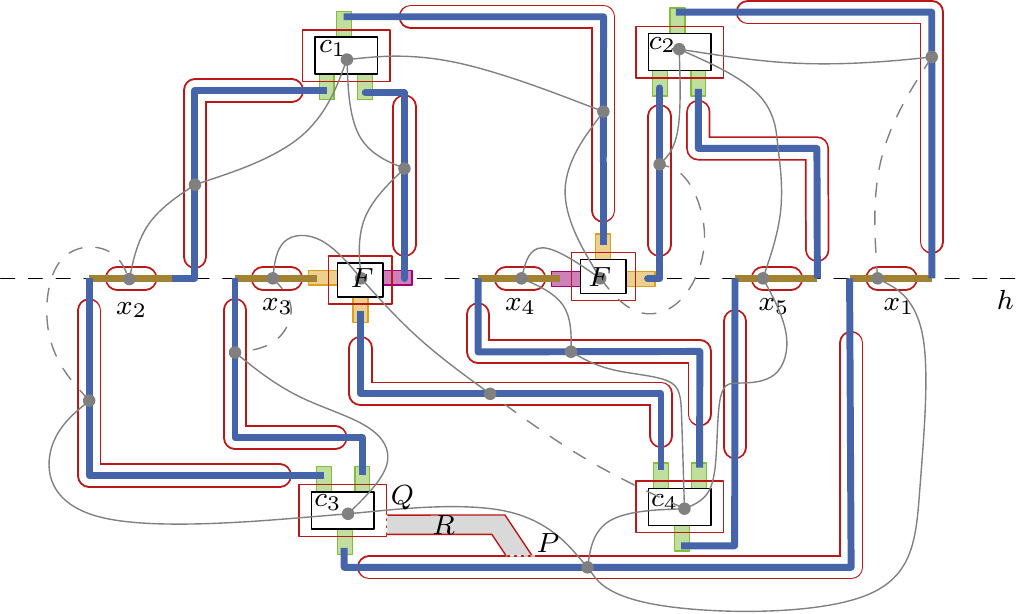}
\caption{Illustration of NP-completeness proof. (a)~3-SAT formula
  $\varphi$ with clauses $c_1=x_4 \vee x_1 \vee x_5$, $c_2=x_2 \vee
  x_4 \vee x_3$, $c_3=\bar x_2 \vee \bar x_1 \vee \bar x_3$ and
  $c_4=\bar x_3 \vee \bar x_5 \vee \bar x_4$ represented as a metro
  map. Truth assignment is $x_1=\mathit{true}$, $x_2=\mathit{true}$,
  $x_3=\mathit{false}$, $x_4=\mathit{false}$ and
  $x_5=\mathit{false}$. The gray graph~$G$ represents the adjacencies
  of the used gadgets. The solid lines represent a spanning tree
  of~$G$ that can be used to merge the polygons to one simple
  polygon. Exemplarily the polygons $P$ and $Q$ are merged into one
  polygon $R$. The single components are illustrated in Fig.~\ref{fig:gadgets}.}
 \label{fig:np-map}
\end{figure}

\begin{figure}[t]
\centering
\includegraphics[page=2]{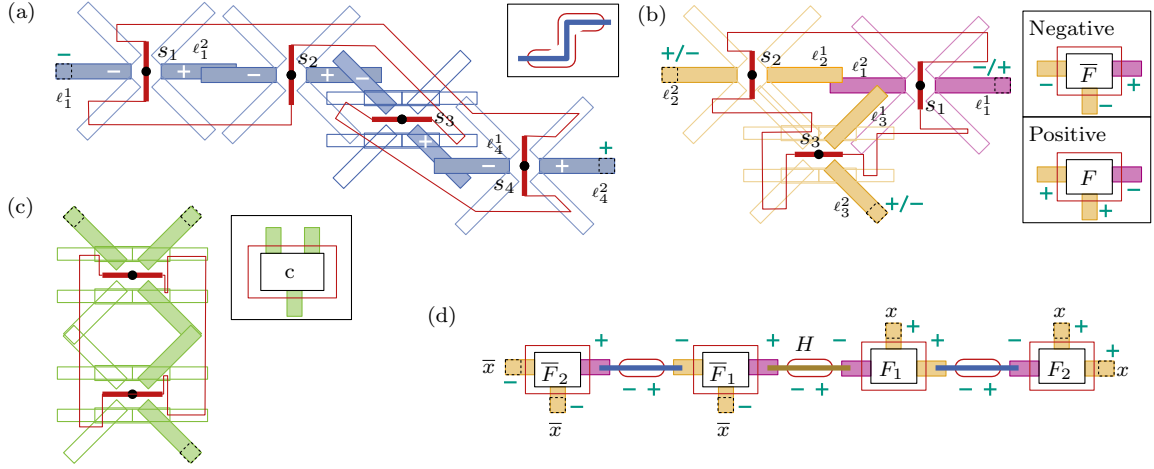}
\caption{Illustration of the gadgets. Selectable labels are filled, while all other labels are not filled. Ports are marked with a dashed square. (a) Chain gadget of length 4. (b) Fork gadget. (c) Clause gadget. (d) Variable gadget with three negative and three positive ports.}
\label{fig:gadgets}
\end{figure}

\begin{proof}
  For the illustrations we use~\OctiLabels, but the same
  constructions can be done based on~\CurvedLabels; see end of proof.
  We first show that the problem deciding whether $(\lines,\stops)$
  has a labeling lies in~NP. We first create for each stop~$s\in
  \stops$ its candidates~$\cands_s$ based on the given labeling
  style. Recall that each candidate has constant size. We then guess for
  each stop~$s\in \mathcal S$ the label~$\ell_s$ that belongs to the
  desired labeling~$\mathcal L$. Obviously, we can decide in
  polynomial time whether~$\{\ell_s\mid s\in \stops\}$ is a labeling
  of~$(\lines,\stops)$ performing basically intersection tests.

  We now perform a reduction from the NP-complete \textsc{Planar
    Monotone 3-Sat} problem~\cite{l-pftu-82}. Let $\varphi$ be a
  Boolean formula in conjunctive normal form such that it consists
  of~$n$ variables and $m$ clauses and, furthermore, each clause
  contains at most three literals. The formula $\varphi$ induces the
  graph~$G_\varphi$ as follows. $G_\varphi$ contains for each variable
  a vertex and it contains for each clause a vertex.  Two vertices~$u$
  and $v$ are connected by an edge $\{u,v\}\in E$ if and only if $u$
  represents a variable~$x$ and $v$ represents a clause~$c$, such that
  $x$ is contained in $c$. We call a clause of~$\varphi$
  \emph{positive} (\emph{negative}) if it contains only positive
  (negative) literals.

  The formula
  $\varphi$ is an instance of \textsc{Planar Monotone 3-Sat} if it
  satisfies the following requirements.
  \begin{compactenum}
  \item $\varphi$ is monotone, i.e., each clause is either positive or
    negative, and
  \item the graph $G_\varphi$ is planar and has a rectilinear plane
    embedding such that
      \begin{compactenum}
      \item the vertices representing variables are placed on a
        horizontal line~$h$,
        \item the vertices representing negative clauses are placed
          below~$h$,
        \item the vertices representing positive clauses are placed
          above~$h$,
        \item the edges are drawn on their respective side of~$h$. 
      \end{compactenum}
  \end{compactenum}
  \textsc{Planar Monotone 3-Sat} then asks whether~$\varphi$ is satisfiable.

  Using only stops lying on single horizontal and
  vertical segments, we construct a metro map~$(\lines,\stops)$ that
  mimics the embedding of~$G_\varphi$. In particular~$\lines$ will
  only consist of one metro line~$C$ that connects all stops such that
  the stops and their candidates simulate the variables and clauses
  of~$\varphi$. We will prove that~$(\lines,\stops)$ has a labeling if
  and only if $\varphi$ is satisfiable. We refer to
  Fig.~\ref{fig:np-map} for a sketch of the construction. We first
  define gadgets simulating variables, clauses and connecting
  structures. Each gadget consists of a set of stops that lie on the
  border of a simple polygon~$P$. Later on we use this polygon~$P$ to
  prescribe the shape of the metro line~$C$.

  \textbf{Chain.} The \emph{chain gadget} represents and transmits
  truth values from variables to clauses mimicking the embeddings of
  the edges in~$G_\varphi$.  A chain consists of an even number of
  stops~$s_1,\dots,s_k$ that lie on vertical and horizontal segments;
  see Fig.~\ref{fig:gadgets}(a). Hence, with respect to the given
  labeling style each stop~$s_i$ has a predefined set of
  candidates~$\cands_{s_i}$. For each stop~$s_i$ there are two
  specially marked candidates $\ell^1_i$ and $\ell^2_i$ that lie on
  opposite sides of $s_i$'s segment; for an example see the filled blue
  labels in Fig~\ref{fig:gadgets}(a). We say that those labels are
  \emph{selectable}, because we define the gadget such that those
  labels are the only labels that can be selected for a labeling. To
  that end, we lay out the metro line such that it does not intersect
  any selectable label, but all labels that are not selectable. The
  stops are placed such that the following conditions are satisfied.
  \begin{compactenum}[(1)]
    \item The label~$\ell^2_i$ intersects the label~$\ell^1_{i+1}$ for $1\leq i <k$. 
    \item Except the intersections mentioned in~(1), there is no
      intersection between selectable labels of different stops.
    \item The segments of the stops are connected by polylines s.t.\
      the result is a simple polygon~$P$ intersecting all labels
      except the selectable labels.
  \end{compactenum}
  The labels~$\ell_1^1$ and $\ell_k^2$ do not intersect any selectable
  labels; we call them the \emph{ports} of the chain. Later on, we use
  the ports to \emph{connect} other gadgets with the chain, i.e., we
  arrange the gadgets such that two of their ports intersect, but no other selectable label.  Further, we
  assign a polarization to each selectable label. The
  labels~$\ell^1_1,\dots,\ell^1_k$ are~\emph{negative} and the
  labels~$\ell^2_1,\dots,\ell^2_k$ are \emph{positive}.
   
  Consider a labeling~$\mathcal L$ of a chain assuming that~$P$ is
  interpreted as a metro line; we can cut~$P$ at some point in order
  to obtain an open curve. By construction of~$P$ only selectable
  labels are contained in~$\mathcal L$. In particular we observe that
  if the negative port~$\ell^1_1$ is not contained in~$\mathcal L$,
  then the positive labels~$\ell^2_1,\dots,\ell^2_k$ belong
  to~$\mathcal L$. Analogously, if the positive port~$\ell^2_k$ is not
  contained in~$\mathcal L$, then the negative
  labels~$\ell^1_1,\dots,\ell^1_k$ belong to~$\mathcal L$. We use this
  behavior to represent and transmit truth values through the chain.

  \textbf{Fork.} The \emph{fork gadget} splits an incoming chain
  into two outgoing chains and transmits the truth value represented
  by the incoming chain into the two outgoing chains. A fork consists
  of three stops~$s_1$, $s_2$ and $s_3$ such that~$s_1$ and $s_2$ are
  placed on vertical segments and~$s_3$ is placed on a horizontal
  segment; see Fig.~\ref{fig:gadgets}(b). Analogously to the chain,
  each stop~$s_i$ with ($1\leq i \leq 3$) has two \emph{selectable}
  labels~$\ell^1_i$ and $\ell^2_i$. We arrange the stops such that the
  following conditions are satisfied.
  \begin{compactenum}
  \item The labels~$\ell^1_2$ and~$\ell^1_3$
    intersect~$\ell^2_1$. Apart from those two intersections no
    selectable label intersects any other selectable label. 
  \item The segments of the stops are connected by polylines s.t.\ the
    result is a simple polygon~$P$ intersecting all labels except the
    selectable labels.
 \end{compactenum}
 The label~$\ell_1^1$ is the \emph{incoming port} and the labels
 $\ell^2_2$ and $\ell_3^2$ are the \emph{outgoing ports} of the
 fork. We distinguish two types of forks by assigning different
 polarizations to the selectable labels. In the \emph{negative}
 (\emph{positive}) fork, the labels~$\ell^1_1$, $\ell^1_2$ and
 $\ell^1_3$ are \emph{positive} (\emph{negative}) and the labels
 are~$\ell^2_1$, $\ell^2_2$ and $\ell^2_3$ are \emph{negative}
 (\emph{positive}). Hence, the incoming port is positive
 (negative) and the outgoings ports are negative
 (positive).

 Consider a labeling~$\mathcal L$ of a fork assuming that $P$ is
 interpreted as a metro line. By construction of~$P$ only selectable
 labels belong to~$\mathcal L$. Further, if the incoming
 port~$\ell^1_1$ does not belong to~$\mathcal L$, then the 
 outgoing ports~$\ell^2_1$ and $\ell^2_1$ belong to~$\mathcal
 L$. Finally, if one outgoing port does not belong to~$\mathcal L$,
 then the incoming port belongs to~$\mathcal L$.
 
 \textbf{Clause.} The \emph{clause gadget} represents a clause~$c$
 of the given instance. It forms a chain of length 2 with the addition
 that it has three ports instead of two ports; see
 Fig~\ref{fig:gadgets}(c). To that end one of both stops has three
 selectable labels; one intersecting a selectable label of the other
 stop, and two lying on the opposite side of the stop's segment
 without intersecting any selectable label of the other stop.  The
 gadget is placed at the position where the vertex of~$c$ is located
 in the drawing of~$G_\varphi$; see Fig.~\ref{fig:np-map}.  We observe
 that a labeling~$\mathcal L$ of a clause gadget always contains at
 least one port. Further, we do not assign any polarization to its
 selectable labels.

 \textbf{Variable.} The variable gadget represents a single
 variable~$x$. It forms a composition of chains and forks that are
 connected by their ports; see Fig~\ref{fig:gadgets}(d). More
 precisely, let $s$ be the number of clauses in which the negative
 literal~$\bar x$ occurs and let~$t$ be the number of clauses in which
 the positive literal~$x$ occurs.  Along the horizontal line~$h$ on
 which the vertex of~$x$ is placed in the drawing of~$G_\varphi$, we
 place a horizontal chain~$H$.  Further, we place a sequence of
 negative forks~$\overline F_1,\dots,\overline F_{s-1}$ to the left
 of~$H$ and a sequence of positive forks $F_1,\dots,F_{t-1}$ to the
 right of~$H$. The negative incoming port of~$F_1$ is connected to the
 positive port of~$H$ by a chain. Two consecutive forks~$F_i$
 and~$F_{i+1}$ are connected by a chain~$H'$ such that~$H'$ connects
 a positive outgoing port of~$F_i$ with the negative incoming port
 of~$F_{i+1}$. Analogously, the positive incoming port of~$\overline
 F_1$ is connected to the negative port of~$H$ by a chain. Two
 consecutive forks~$\overline F_i$ and~$\overline F_{i+1}$ are
 connected by a chain~$H'$ such that~$H'$ connects a negative
 outgoing port of~$\overline F_i$ with the positive incoming port
 of~$\overline F_{i+1}$.

 We observe that the gadget has $s+t$ free ports. Further, we can
 arrange the forks such that the free ports of~$F_{1},\dots,F_{t-1}$
 lie above~$h$ and the free ports of~$\overline F_{1},\dots,\overline
 F_{s-1}$ lie below~$h$. 

 Consider a labeling~$\mathcal L$ of a variable. By construction of
 the forks and chains, if one positive free port is not contained
 in~$\mathcal L$, then all negative free ports must be contained
 in~$\mathcal L$. Analogously, if one negative free port
 is not contained in~$\mathcal L$, then all positive free ports must
 be contained in~$\mathcal L$.

 Using additional chains we connect the positive free ports with the
 positive clauses and the negative ports with the negative clauses
 correspondingly; see Fig.~\ref{fig:np-map}. More precisely, assume
 that the variable~$x$ is contained in the positive clause~$c$;
 negative clauses can be handled analogously. With respect to the
 drawing of~$G_\varphi$, a positive free port of~$x$'s gadget is
 connected with the negative port of a chain whose positive port is
 connected with a free port of $c$'s gadget.  Note that we can easily
 choose the simple polygons enclosing the gadgets such that they do
 not intersect by defining them such that they surround the gadgets
 tightly.

 \textbf{One Metro Line.} We construct the polygons enclosing 
 the single gadgets such that they do not intersect each other.  We
 now sketch how the polygons can be merged to a single simple
 polygon~$P$. Cutting this polygon at some point we obtain a polyline
 prescribing the desired metro line.
 
 We construct a graph $H=(V,E)$ as follows. The polygons of the gadgets
 are the vertices of the graph and an edge $(P,Q)$ is contained in~$E$
 if and only if the corresponding gadgets of the polygons~$P$ and~$Q$
 are connected by their ports; see
 Fig.~\ref{fig:np-map}. Since~$G_\varphi$ is planar and the gadgets
 mimic the embedding of~$G_\varphi$, it is not hard to see that~$H$ is
 also planar. We construct a spanning tree~$T$ of~$H$. If an edge
 $(P,Q)$ of~$H$ is also contained in~$T$, we \emph{merge} $P$ and $Q$
 obtaining a new simple polygon~$R$; see for an
 example~Fig.~\ref{fig:np-map}. To that end we cut~$P$ and $Q$ in
 polylines and connect the four end points by two new polylines such
 that the result is a simple polygon. We in particular ensure that the
 new polygon does not intersect any other polygon and that~$R$
 intersects the same labels as~$P$ and $Q$ together. In~$T$ we
 correspondingly contract the edge. Note that by contracting edges,
 $T$ remains a tree. We repeat that procedure until $T$ consists of a
 single vertex, i.e., only one simple polygon is left.

 \textbf{Soundness.} It is not hard to see that our construction is
 polynomial in the size of the given 3SAT formula~$\varphi$.

 Assume that~$\varphi$ is satisfiable. We show how to construct a
 labeling~$\mathcal L$ of the constructed metro map. For each
 variable~$x$ that is \emph{true} (\emph{false}) in the given truth
 assignment, we put all negative (positive) labels of the
 corresponding variable gadget and its connected chains into~$\mathcal
 L$. By construction those labels do not intersect. It remains to
 select labels for the clause gadgets. Consider a positive clause $c$;
 negative clauses can be handled analogously. Since~$\varphi$ is
 satisfiable, $c$ contains a variable~$x$ that is \emph{true} in the
 given truth assignment of~$\varphi$. The set $\mathcal L$ contains
 only negative labels of the chain connecting the gadget of~$x$ with
 the gadget of~$c$, but no positive labels of that chain. Hence, we
 can add the port of~$c$'s gadget that is connected to that chain
 without creating intersections. For the second stop of the clause we
 put that selectable label into~$\mathcal L$ that is not a port. We
 can apply this procedure to all positive and negative clauses without
 creating intersections, which yields the labeling~$\mathcal L$ of the
 constructed metro map.

 Finally, assume that we are given a labeling~$\mathcal L$ of the
 constructed metro map. Consider the clause gadget of a positive
 clause~$c$; negative clauses can be handled analogously. By
 construction $\mathcal L$ contains at least one port~$\ell$ of that
 gadget. This port is connected to a chain, which is then connected to
 a gadget of a variable~$x$. We set that variable~$x$ \emph{true}. We
 apply this procedure to all clauses; for negative clauses we set the
 corresponding variable to~\emph{false}. Since~$\ell$ is contained
 in~$\mathcal L$, only negative labels of that chain can be contained
 in~$\mathcal L$, but no positive labels. Hence, the  positive ports of the variable gadget are also not contained
 in~$\mathcal L$. By the previous reasoning this implies that all
 negative ports of the gadget are contained in~$\mathcal
 L$. Consequently, by applying a similar procedure to negative
 clauses, it cannot happen that~$x$ is set to \emph{false}.
 Altogether, this implies a valid truth assignment of~$\varphi$.

 \textbf{Remarks:} Fig.~\ref{fig:gadgets-curved} illustrates the
 construction of the gadgets for \CurvedLabels. Note that only the
 fork gadget, the clause gadget and the chain gadget rely on the
 concrete labeling style. Further, using \CurvedLabels, a stop $s$
 lying on a vertical segment~$l_v$ has exactly two different
 distinguish labels; one that lies to the left of~$l_v$ and one that
 lies to the right of $l_v$.\qed

\begin{figure}[t]
\centering
\includegraphics[page=3]{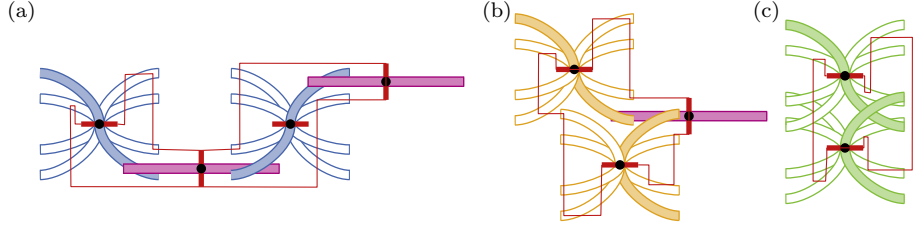}
\caption{Illustration of the gadgets based on \OctiLabels. Selectable labels are filled, while all other labels are not filled. (a) Chain gadget of length 4. (b) Fork gadget. (c) Clause gadget.}
\label{fig:gadgets-curved}
\end{figure}

\end{proof}

\section{Labeling Algorithm for a Single Metro Line}\label{sec:single-line}

\begin{figure}[t]
 \centering
 \includegraphics[page=1]{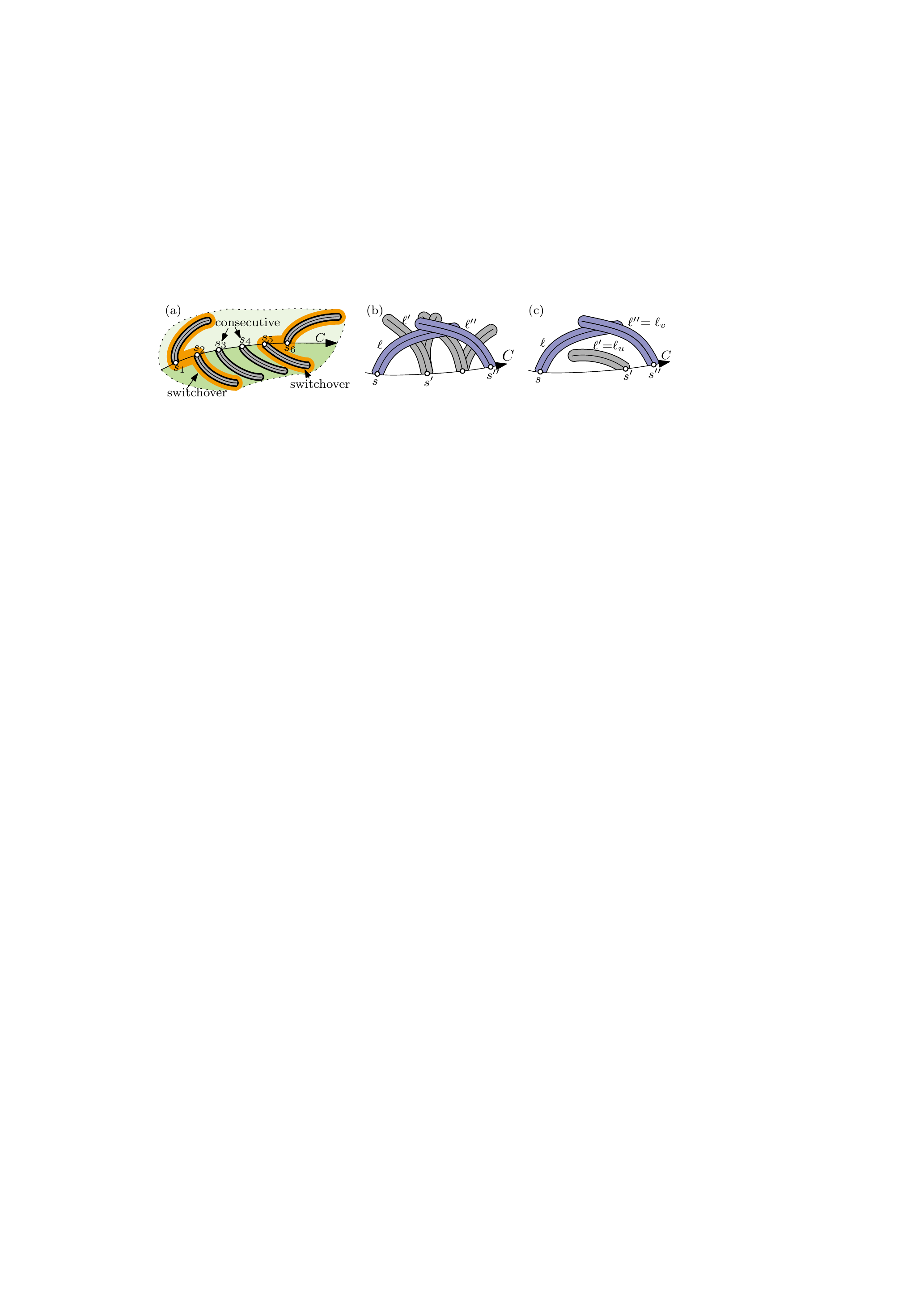}
 \caption{\protect(a) Consecutive stops and switchovers. (b)~The candidates satisfy the \PROPERTY. (c)~The candidates do not satisfy the \PROPERTY.}
 \label{fig:model_small}
\end{figure} 

We now study the case that the given
instance~$I=(\lines,\stops,\cands,w)$ consists only of one metro
line~$C$. Based on cartographic criteria we introduce three additional assumptions on~$I$, which allows us to efficiently solve~\MML.

For each stop~$s \in S$, we assume that each candidate~$\ell\in
\cands_s$ is assigned to one side of~$C$; either $\ell$ is a
\emph{left candidate} assigned to the left side of~$C$, or $\ell$ is a
\emph{right candidate} assigned to the right side of~$C$. For
appropriately defined candidate sets those assignments correspond with
the geometric positions of the candidates, i.e., left (right)
candidates lie on the left (right) hand side of~$C$.
\begin{assumption}[Separated Labels]\label{assumption:separation}
Candidates that are assigned to different sides of~$C$ do not intersect.
\end{assumption}
This assumption is normally not a real restriction, because for
appropriately defined candidate sets and realistic metro lines, the
line~$C$ separates both types of candidates geometrically. We further
require what we call the \emph{transitivity property}.
\begin{assumption}[Transitivity Property]\label{assumption:transitivity-property}
For any three
stops~$s,s',s''\in \stops$ with~$s < s' < s''$ and any three
candidates $\ell \in\cands_s$, $\ell' \in\cands_{s'}$ and $\ell''
\in\cands_{s''}$ assigned to the same side of~$C$, it holds that if
neither $\ell$ and $\ell'$ intersect nor $\ell'$ and $\ell''$
intersect then also $\ell$ and $\ell''$ do not intersect; see also Fig.~\ref{fig:model_small}(b)--(c).
\end{assumption}

In our experiments we established
Assumption~\ref{assumption:separation} and
Assumption~\ref{assumption:transitivity-property} by removing
candidates greedily. In Section~\ref{sec:evaluation} we show that for
real-world metro maps and the considered candidate sets we remove only few labels,
which indicates that those assumptions have only a little influence on
the labelings.

Two stops~$s,s'\in \stops$ with~$s < s'$ are
\emph{consecutive} if there is no other stop~$s''\in \stops$
with~$s<s''<s'$; see Fig.~\ref{fig:model_small}(a).
For two consecutive stops~$s_1,s_2 \in \stops$ we say that each two
candidates~$\ell_1\in \cands_{s_1}$ and~$\ell_2\in \cands_{s_2}$
are~\emph{consecutive} and denote the set that contains each pair of consecutive labels
in~$\mathcal L\subseteq \cands$ by~$\consLabels_{\mathcal L}\subseteq
\mathcal L\times\mathcal L$. Further, two consecutive labels~$\ell_1,\ell_2\in
\cands_C$ form a \emph{switchover}~$(\ell_1,\ell_2)$ if they are assigned to
opposite sides of~$C$, where $(\ell_1,\ell_2)$ denotes an ordered
set indicating the order of the stops of~$\ell_1$ and $\ell_2$. Two
switchovers of~$C$ are \emph{consecutive} in~$\mathcal L\subseteq
\cands$ if there is no switchover in~$\mathcal L$ in between of both.
We define the set of all switchovers in~$\cands$ by~$\switches$ and
the set of consecutive switchovers in $\mathcal L\subseteq \cands$ by
$\consSwitches_{\mathcal L} \subseteq \switches \times \switches$.

Based on cartographic criteria extracted from Imhof's ``general
principles and requirements'' for map labeling~\cite{imhof}, we
require a cost function $w\colon 2^\cands\to \mathbb{R}^+$ of the
following form; see also Section~\ref{sec:weighting-function}
for a detailed motivation of~$w$.
\begin{assumption}[Linear Cost Function]\label{assumption:weighting-function}
For any~$\mathcal L\subseteq K$ we require
\[
w(\mathcal L)=\sum_{\ell \in \mathcal L}w_1(\ell) +
\smashoperator[r]{\sum_{(\ell_1,\ell_2)\in P_{\mathcal L}}}w_2(\ell_1,\ell_2) +
\smashoperator[r]{\sum_{(\sigma_1,\sigma_2)\in \consSwitches_{\mathcal L}}} w_3(\sigma_1,\sigma_2),
\]
where~$w_1\colon \mathcal L \to \mathbb{R}$ rates a single label,
$w_2\colon P_{\mathcal L} \to \mathbb{R}$ rates two consecutive labels
and $w_3\colon \consSwitches_{\mathcal L} \to \mathbb{R}$ rates two
consecutive switchovers. 
\end{assumption}
In particular, we define~$w$ such that it penalizes the following structures to sustain readbility. 
\begin{inparaenum}[(1)]
 \item Steep or highly curved labels.
 \item Consecutive labels that lie on different sides of~$C$, or that are shaped differently.
 \item Consecutive switchovers that are placed close to each other.
\end{inparaenum}

If~$I=(\{C\},\stops,\cands,w)$ satisfies Assumption~\ref{assumption:separation}--\ref{assumption:weighting-function}, we call \MML also~\RMLL. We now introduce an
algorithm that solves this problem in~$O(n^2 k^4)$ time,
where~$n=|\stops|$ and~$k=\max\{|\cands_s|\mid s\in\stops\}$. Note
that~$k$ is typically constant. 
% For illustrations we use \OctiLabels
% or \CurvedLabels, but the algorithm works also with all other sets of label candidates that respect the requirements of our model.
We assume w.l.o.g.\ that~$\cands$ contains only candidates that do not intersect~$C$. 

\begin{figure}[t]
 \centering
 \includegraphics[page=1]{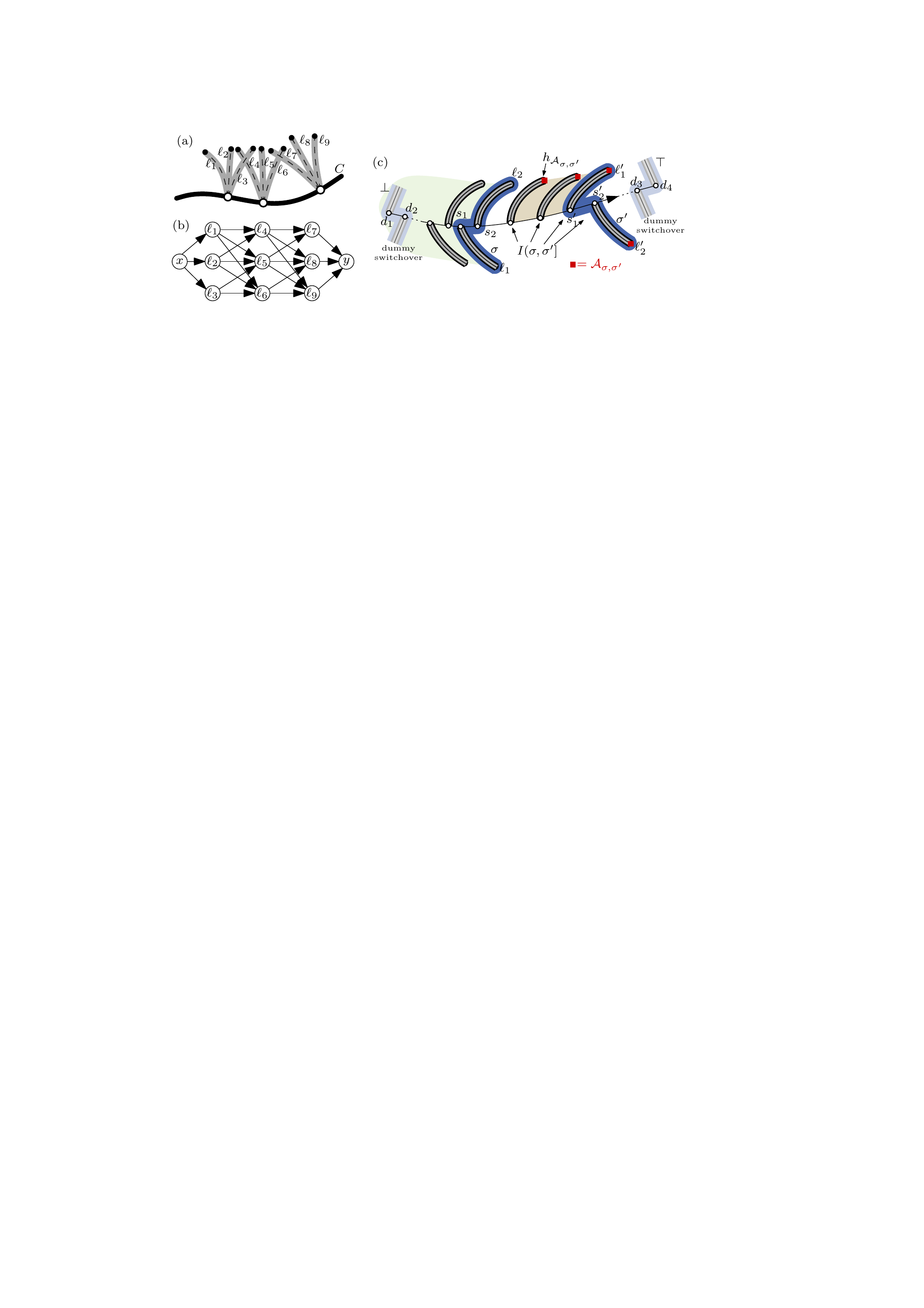}
 \caption{Illustrations for labeling a single metro line. (a) A
   one-sided instance and (b) the acyclic directed graph $G$ based
   on its labels. (c) A two-sided instance with a labeling. The
   switchovers~$\sigma'$ and~$\sigma$ separate the labeling into a
   two-sided and a one-sided instance.}
\label{fig:dyn-prog}
\end{figure}

\paragraph{Labels on One Side.} We first assume that all candidate
labels in $\cands$ are assigned either to the left or to the right
side of~$C$; without loss of generality to the left side of~$C$. For two
stops~$s,s' \in \stops$ we denote the instance restricted to the stops
$\{s,s'\}\cup \{s''\in \stops\mid s<s''<s'\}$ by~$I[s,s']$.  We denote
the first stop of~$C$ by~$\firstStop$ and the last stop
by~$\lastStop$. The transitivity property directly yields the next
lemma.  \newcommand{\contentLemmaOneSidedTransProp}{ Let~$s$, $s'$ and
  $s''$ be stops with~$s<s'<s''$,~$\mathcal L$ be a labeling
  of~$I[\firstStop,s']$, $\ell\in \mathcal L\cap \cands_{s}$ and
  $\ell'\in \mathcal L\cap\cands_{s'}$. Any~$\ell''\in \cands_{s''}$
  intersecting~$\ell$ also intersects~$\ell'$.  }
\begin{lemma}\label{lem:one-sided-trans-prop}
\contentLemmaOneSidedTransProp
\end{lemma}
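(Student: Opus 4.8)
The plan is to obtain the lemma as an immediate contrapositive of the \PROPERTY (Assumption~\ref{assumption:transitivity-property}), exploiting the fact that in the one-sided setting every candidate is assigned to the same (left) side of~$C$. First I would record the only nontrivial input: since $\mathcal L$ is a labeling of $I[\firstStop,s']$ and both $\ell$ and $\ell'$ belong to $\mathcal L$, condition~(1) in the definition of a labeling guarantees that $\ell$ and $\ell'$ do not intersect.

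Next I would argue by contraposition. Suppose, for contradiction, that some $\ell''\in\cands_{s''}$ intersects $\ell$ but does \emph{not} intersect $\ell'$. Because we are in the one-sided case, all three labels $\ell,\ell',\ell''$ are assigned to the left side of~$C$, and the stops satisfy $s<s'<s''$, so the hypotheses of Assumption~\ref{assumption:transitivity-property} are met: $\ell$ and $\ell'$ do not intersect, and $\ell'$ and $\ell''$ do not intersect. The \PROPERTY then forces $\ell$ and $\ell''$ not to intersect, contradicting the assumption that $\ell''$ intersects $\ell$. Hence every $\ell''\in\cands_{s''}$ intersecting $\ell$ must also intersect $\ell'$.

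There is essentially no obstacle here; the statement was flagged as a direct consequence of the \PROPERTY. The only points that need a line of justification are that the three candidates indeed lie on a common side of~$C$ (which holds because the one-sided restriction places every candidate in $\cands$ on the left side) and that the stop ordering $s<s'<s''$ matches the ordering demanded by Assumption~\ref{assumption:transitivity-property}, so that the property applies verbatim to the triple $(\ell,\ell',\ell'')$.
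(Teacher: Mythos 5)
Your argument is correct and is essentially identical to the paper's own proof: both assume for contradiction that some $\ell''$ intersects $\ell$ but not $\ell'$, note that $\ell,\ell'\in\mathcal L$ cannot intersect, and then invoke Assumption~\ref{assumption:transitivity-property} for the three same-side labels to conclude that $\ell$ and $\ell''$ cannot intersect, a contradiction. If anything, your write-up states the final contradiction more cleanly than the paper, which misattributes it to the non-intersection of $\ell$ and $\ell'$.
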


\begin{proof}
Recall for the proof that we assume that~$I$ satisfies Assumptions~\ref{assumption:separation}--\ref{assumption:weighting-function}.

Assume for the sake of contradiction that there is a candidate~$\ell''\in \cands_{s''}$ such that $\ell''$ intersects~$\ell$ but not~$\ell'$; see Fig.~\ref{fig:model_small}(c). Since~$\mathcal L$ is a labeling, the labels~$\ell$ and~$\ell'$ do not intersect. Hence, neither~$\ell$ and $\ell'$ nor $\ell'$ and~$\ell''$ intersect. Since all three labels are assigned to the same side of~$C$, the transitivity property holds, which directly contradicts that~$\ell$ and~$\ell'$ do not intersect.\qed    
\end{proof}

Hence, the lemma states that~$\ell'$ separates~$\mathcal L$ from the candidates of the stops succeeding~$s'$. We use this observation as follows.
Based on $\cands$ we define a directed acyclic graph $G=(V,E)$;
see Fig.~\ref{fig:dyn-prog}(a)--(b). This graph
contains a vertex~$u$ for each candidate~$\ell\in \cands$ and the two
vertices~$x$ and $y$. We call~$x$ the \emph{source} and $y$ the
\emph{target} of~$G$.
 Let~$\ell_u$ denote the candidate that belongs to the vertex
$u\in V\setminus\{x,y\}$. For each pair $u, v \in V\setminus \{x,y\}$
the graph contains the edge~$(u,v)$ if and only if the stop of
$\ell_u$ lies directly before the stop of~$\ell_v$ and, furthermore,
$\ell_u$ and $\ell_v$ do not intersect. Further, for each vertex~$u$
of any candidate of~$\firstStop$ the graph contains the edge $(x,u)$,
and for each vertex~$u$ of any candidate of~$\lastStop$ the graph
contains the edge~$(u,y)$. For an edge $(u,v)\in E$ we define its
cost~$w_e$ as follows.  For~$u\neq x$ and~$v\neq y$ we
set~$w_e=w_1(\ell_v)+w_2(\ell_u,\ell_v)$. For $x=u$ we
set~$w_e=w_1(\ell_v)$ and for $v=y$ we set $w_e=0$. 

An \emph{$x$-$y$ path}~$P\subseteq E$ in~$G$ is a path in~$G$
that starts at~$x$ and ends at~$y$. Its costs are~$w(P)=\sum_{e\in P}w_e$.
The $x$-$y$ path with minimum costs among all $x$-$y$~paths is the \emph{shortest}~$x$-$y$ path.

\newcommand{\contentLemmaOneSided}{
   For any $x$-$y$ path~$P$ in~$G$ there is a labeling~$\mathcal L$ of~$I$ with~$w(P)=w(\mathcal L)$ and for 
   any labeling~$\mathcal L$ of $I$ there is an $x$-$y$ path~$P$ in~$G$ with~$w(P)=w(\mathcal L)$.  
}
\begin{lemma}\label{lemma-one-sided}
  \contentLemmaOneSided
\end{lemma}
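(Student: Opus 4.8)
The plan is to establish a cost-preserving correspondence in both directions between $x$-$y$ paths in $G$ and labelings of $I$, exploiting the fact that a one-sided instance has no switchovers: all candidates lie on the same side of $C$, so $\consSwitches_{\mathcal L}=\emptyset$, the $w_3$-term of the cost function vanishes, and $w(\mathcal L)$ reduces to $\sum_{\ell\in\mathcal L}w_1(\ell)+\sum_{(\ell_1,\ell_2)\in\consLabels_{\mathcal L}}w_2(\ell_1,\ell_2)$.

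First I would treat the direction from a path to a labeling. Let $P=(x,u_1,\dots,u_m,y)$ be an $x$-$y$ path and set $\mathcal L=\{\ell_{u_1},\dots,\ell_{u_m}\}$. By construction the edge $(x,u_1)$ forces $\ell_{u_1}$ to be a candidate of $\firstStop$, the edge $(u_m,y)$ forces $\ell_{u_m}$ to be a candidate of $\lastStop$, and every internal edge $(u_i,u_{i+1})$ forces the stop of $\ell_{u_i}$ to lie directly before the stop of $\ell_{u_{i+1}}$. Hence the path visits the stops in strictly increasing order, stepping from each stop to the next, so $\mathcal L$ contains exactly one candidate per stop of $C$; this settles labeling condition~(3). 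Condition~(2) is immediate from the standing assumption that $\cands$ contains only candidates disjoint from $C$.

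The crux is condition~(1), that no two labels of $\mathcal L$ intersect. Consecutive labels $\ell_{u_i}$ and $\ell_{u_{i+1}}$ are disjoint because the edge $(u_i,u_{i+1})$ exists. For non-consecutive labels I would argue by induction that $\{\ell_{u_1},\dots,\ell_{u_i}\}$ is a labeling of $I[\firstStop,s_i]$, where $s_i$ denotes the stop of $\ell_{u_i}$: having added $\ell_{u_{i+1}}$, which is disjoint from $\ell_{u_i}$, Lemma~\ref{lem:one-sided-trans-prop} (applied with $s''=s_{i+1}$, $s'=s_i$, and $s=s_j$ for each $j<i$) shows that $\ell_{u_{i+1}}$ is also disjoint from every earlier label $\ell_{u_j}$. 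This is exactly where Assumption~\ref{assumption:transitivity-property} does the work, and it is the one nontrivial step of the proof. Once $\mathcal L$ is known to be a valid labeling, matching the costs is routine: summing the edge costs charges each label's $w_1$-value exactly once (on the edge entering its vertex, with the trailing edge into $y$ contributing nothing), yielding $\sum_{\ell\in\mathcal L}w_1(\ell)$, and collects the $w_2$-contributions of the internal edges into precisely $\sum_{(\ell_1,\ell_2)\in\consLabels_{\mathcal L}}w_2(\ell_1,\ell_2)$, so $w(P)=w(\mathcal L)$.

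For the converse direction, given a labeling $\mathcal L$ I would order its labels $\ell_{s_1},\dots,\ell_{s_m}$ by their stops $\firstStop=s_1<\dots<s_m=\lastStop$ and let $u_i$ be the vertex of $\ell_{s_i}$. Since $\mathcal L$ is a labeling its labels are pairwise disjoint, so in particular consecutive ones are, whence each internal edge $(u_i,u_{i+1})$ together with the boundary edges $(x,u_1)$ and $(u_m,y)$ is present in $G$; thus $P=(x,u_1,\dots,u_m,y)$ is an $x$-$y$ path. The same telescoping cost computation gives $w(P)=w(\mathcal L)$, completing the correspondence. This direction is straightforward precisely because the validity of $\mathcal L$ is handed to us; all the difficulty is concentrated in re-establishing validity from a path in the first direction, via the transitivity property.
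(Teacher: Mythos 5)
Your proof is correct and takes essentially the same route as the paper's: both directions construct the obvious correspondence between paths and labelings, use Lemma~\ref{lem:one-sided-trans-prop} (via the transitivity property) to show that non-adjacent labels along the path cannot intersect, and verify the cost equality by the same telescoping sum over edge weights. Your explicit induction and the observation that the $w_3$-term vanishes in the one-sided case only make explicit what the paper leaves implicit.
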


\begin{proof}
  Recall for the proof that we assume that~$I$ satisfies
  Assumptions~\ref{assumption:separation}--\ref{assumption:weighting-function}.

  Let $P=(V_P,E_P)$ be an $x$-$y$ path in $G$ and let $\mathcal
  L=\{\ell_v\in \cands \mid v\in V_P \}$, where $V_P$ denotes the
  vertices of $P$ and $E_P$ the edges of~$P$. We show that~$\mathcal
  L$ is a labeling of~$C$ with~$w(\mathcal L)=w(P)$. Obviously, for
  each stop~$s\in \stops$ the set~$\mathcal L$ contains exactly one
  candidate $\ell \in \cands_s$.  By construction for each edge $(u,v)
  \in E_P$ the labels~$\ell_u$ and $\ell_v$ do not intersect. Hence,
  by Lemma~\ref{lem:one-sided-trans-prop} the label~$\ell_v$ cannot
  intersect any label~$\ell\in\mathcal L$ of any stop that occurs
  before the stop of~$\ell_u$. Hence, the set $\mathcal L$ is a
  labeling.  Let~$\ell_1,\dots,\ell_n$ be the labels in~$\mathcal L$
  in the order of their stops. It holds
  \begin{equation}\label{eqn:one-sided}
    w(P)=\sum_{\mathclap{e\in E_P}}w_e=w_1(\ell_1)+\sum^n_{i=2}(w_1(\ell_i)+w_2(\ell_{i-1},\ell_i))=w(\mathcal L)
  \end{equation}
  Now, let~$\mathcal L$ be an arbitrary labeling of~$C$. We show that
  there is an~$x$-$y$ path~$P$ with~$w(P)=w(\mathcal L)$. Let $s$ and
  $s'$ be two consecutive stops with~$s<s'$ and let~$\ell$ and $\ell'$
  be the corresponding labels in~$\mathcal L$. Since~$\ell$
  and~$\ell'$ do not intersect, the corresponding vertices~$u$ and
  $u'$ of $\ell$ and $\ell'$ are adjacent in~$G$. Hence, the labels in~$\mathcal L$
  induce a path~$P$ in $G$. Let~$\ell_1,\dots,\ell_n$ be the labels
  in~$\mathcal L$ in the order of their stops. Using
  Equation~\ref{eqn:one-sided} we obtain~$w(P)=w(\mathcal
  L)$.\qed
\end{proof}

\newcommand{\SP}{\textsc{MinPath}\xspace}  The lemma in particular proves that a
shortest $x$-$y$~path~$P$ in~$G$ corresponds with an optimal labeling
of~$I$. Due to \cite[Chapter~24]{Cormen},~$P$ can be constructed in~$O(|V|+|E|)$
time using a dynamic programming approach, which we call \SP. In
particular \SP considers each edge only once. There are $O(n\cdot k)$
vertices in~$G$ and each vertex has at most~$k$ incoming edges,
which implies that there are $O(n\cdot k^2)$ edges. Since \SP
considers each edge only once, we compute the edges of~$G$ on
demand, which saves storage.
%, which implies
%$O(n k)$ space consumption in total.

\begin{theorem}\label{thm:one-sided-case}
If $I$ is one-sided, \RMLL can be optimally solved in~$O(n k^2)$ time and~$O(nk)$ space.
\end{theorem}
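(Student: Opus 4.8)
The plan is to observe that, for a one-sided instance, \RMLL reduces immediately to a single shortest-path computation in the directed acyclic graph $G$ defined above, and then simply to account for the resources this takes. By Lemma~\ref{lemma-one-sided} the $x$-$y$ paths of $G$ and the labelings of $I$ are in cost-preserving correspondence in both directions; hence a shortest $x$-$y$ path in $G$ corresponds to an optimal labeling, and it suffices to compute one such path and read off the labels attached to its interior vertices. The bulk of the argument is therefore already carried by the two preceding lemmas, and what remains is the complexity analysis.

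First I would note that $G$ is acyclic: every edge leads from a candidate of one stop to a candidate of the immediately following stop along $C$, so the stop order along $C$ furnishes a topological order of the candidate vertices. I would then apply the standard DAG shortest-path algorithm \SP, which scans the vertices in this topological order and relaxes each edge exactly once, running in $O(|V|+|E|)$ time. Next I would bound $|V|$ and $|E|$: there is one vertex per candidate plus the source and target, so $|V|=O(nk)$, and each vertex has at most $k$ incoming edges---one per candidate of its preceding stop---so $|E|=O(nk^2)$. Substituting gives $O(nk+nk^2)=O(nk^2)$ time.

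The only step requiring a little care is the $O(nk)$ space bound, since storing $G$ in full would already cost $\Theta(nk^2)$. Here I would exploit that \SP inspects each edge only once: instead of materializing the edge set, I would generate the incoming edges of each vertex on the fly while advancing through the topological (stop) order, retaining only the $O(nk)$ vertices together with their tentative distance and predecessor labels. Because the endpoints and weight of an edge depend only on the two consecutive stops involved, each edge can be recomputed in constant time, leaving the running time unchanged while reducing the working storage to $O(nk)$. This space-saving observation, rather than the routine shortest-path machinery, is the main point to get right.
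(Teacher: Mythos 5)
Your proposal is correct and follows essentially the same route as the paper: it invokes Lemma~\ref{lemma-one-sided} to reduce the problem to a shortest $x$-$y$ path in the DAG $G$, bounds $|V|=O(nk)$ and $|E|=O(nk^2)$ via the at-most-$k$ incoming edges per vertex, and obtains the $O(nk)$ space bound by generating edges on demand during the single topological sweep, exactly as the paper does with \SP. The only implicit point worth making explicit is that recomputing an edge on the fly involves an intersection test between two candidate polygons, which is constant time because each candidate has constant complexity.
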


\paragraph{Labels on Both Sides.} If candidates lie on
both sides of the metro line, we solve the problem utilizing the
algorithm for the one-sided case.

Consider a labeling~$\mathcal L$ of $I$ and let $\sigma$, $\sigma'$
be two switchovers in~$\mathcal L$ such that~$\sigma$ lies
before~$\sigma'$ and no other switchover lies in between both; see
Fig.~\ref{fig:dyn-prog}(c). Roughly spoken, $\sigma$ and~$\sigma'$ induce a two-sided
instance that lies before~$\sigma$ and a one-sided instance that lies
in between both switchovers $\sigma$
and~$\sigma'$. 
\newcommand{\contentLemmaTwoSidedTransProp}{
 Let~$s$, $s'_1$, $s'_2$ and $s''$ be stops with~$s<s'_1<s'_2<s''$; $s'_1$ and~$s'_2$ are consecutive. Let~$\mathcal L$ be a labeling of~$I[\firstStop,s'_2]$, $\ell\in \mathcal L\cap \cands_s$, $\ell'_1\in \mathcal L\cap\cands_{s'_1}$, $\ell'_2\in \mathcal L\cap \cands_{s'_2}$ s.t.~$(\ell'_1,\ell'_2)$ is a switchover.
Any $\ell''\in \cands_{s''}$ intersecting~$\ell$ intersects~$\ell'_1$ or $\ell'_2$.
}
\begin{lemma}\label{lem:two-sided-trans-prop}
\contentLemmaTwoSidedTransProp
\end{lemma}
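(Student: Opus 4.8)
The plan is to reduce the claim to the one-sided transitivity statement of Lemma~\ref{lem:one-sided-trans-prop} by first exploiting Assumption~\ref{assumption:separation}. The crucial opening observation is that, since $\ell''\in\cands_{s''}$ intersects $\ell\in\cands_s$, these two candidates cannot be assigned to opposite sides of~$C$: by Assumption~\ref{assumption:separation} candidates on different sides never intersect. Hence $\ell$ and $\ell''$ lie on the same side of~$C$, and without loss of generality I assume both are left candidates (the situation is symmetric under exchanging the two sides).

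Next I would use that $(\ell'_1,\ell'_2)$ is a switchover, which by definition means $\ell'_1$ and $\ell'_2$ are assigned to opposite sides of~$C$. Consequently exactly one of them is a left candidate, i.e.\ lies on the same side as $\ell$ and $\ell''$. This splits the argument into two symmetric cases according to whether the left candidate among the two is $\ell'_1$ or $\ell'_2$.

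In the case that $\ell'_1$ is the left candidate, I consider the three stops $s<s'_1<s''$ together with the left candidates $\ell,\ell'_1,\ell''$. The restriction of $\mathcal L$ to $I[\firstStop,s'_1]$ is a labeling of that sub-instance containing $\ell$ and $\ell'_1$, so Lemma~\ref{lem:one-sided-trans-prop} applies with middle stop $s'_1$: since $\ell''$ intersects $\ell$, it must also intersect $\ell'_1$. Symmetrically, if $\ell'_2$ is the left candidate, then $\ell,\ell'_2,\ell''$ are all left candidates at stops $s<s'_2<s''$, and applying Lemma~\ref{lem:one-sided-trans-prop} with middle stop $s'_2$ to the labeling $\mathcal L$ of $I[\firstStop,s'_2]$ yields that $\ell''$ intersects $\ell'_2$. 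In either case $\ell''$ intersects $\ell'_1$ or $\ell'_2$, which is exactly the assertion.

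I do not expect a genuine obstacle here: the essential content of the lemma is that a switchover cannot let a later candidate reach back past both of its surrounding labels, and the separation assumption immediately confines the intersecting candidate $\ell''$ to one side, collapsing the two-sided situation to the already-established one-sided transitivity. The only point requiring care is bookkeeping—invoking Lemma~\ref{lem:one-sided-trans-prop} with the \emph{correct} middle stop (the one whose label lies on $\ell''$'s side) and with the appropriate restriction of $\mathcal L$, so that its hypothesis of being a labeling of $I[\firstStop,\cdot]$ is satisfied.
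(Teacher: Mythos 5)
Your proposal is correct and follows essentially the same route as the paper: Assumption~\ref{assumption:separation} forces $\ell$ and $\ell''$ onto the same side of~$C$, and then the one-sided transitivity argument applied to whichever of $\ell'_1,\ell'_2$ lies on that side yields the claim. The only cosmetic difference is that you invoke Lemma~\ref{lem:one-sided-trans-prop} as a black box, while the paper re-derives the same contradiction inline from Assumption~\ref{assumption:transitivity-property}.
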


\begin{proof}
  Recall for the proof that we assume that~$I$ satisfies
  Assumptions~\ref{assumption:separation}--\ref{assumption:weighting-function}.

  Assume for the sake of contradiction that there is a label~$\ell''
  \in \cands_{s''}$ such that~$\ell''$ intersects~$\ell$ without
  intersecting~$\ell'_1$ and $\ell'_2$. Since~$\ell$ and~$\ell''$
  intersect each other, due to Assumption~\ref{assumption:separation}
  both are assigned to the same side of~$C$; w.l.o.g., let~$\ell$ and
  $\ell''$ be assigned to the left hand side of~$C$. Further,
  w.l.o.g., let~$\ell'_1$ be a left candidate and~$\ell'_2$ a right
  candidate; analogous arguments hold for the opposite
  case. Since~$\mathcal L$ is a labeling, the labels~$\ell$ and
  $\ell'_1$ do not intersect. Hence, neither $\ell$ and $\ell'_1$ nor
  $\ell'_1$ and~$\ell''$ intersect. Since~$\ell$, $\ell'_1$ and
  $\ell''$ are assigned to the same side of~$C$, the transitivity
  property must hold. However, this contradicts that~$\ell$
  and~$\ell''$ intersect.\qed
\end{proof}

Hence, the lemma yields that for the
one-sided instance we can choose any labeling; as long as this
labeling does not intersect any label of~$\sigma$ or~$\sigma'$, it
composes with~$\sigma$, $\sigma'$ and the labeling of the two-sided instance
to one labeling for the instance up to~$\sigma'$. We use that observation as follows.

Let~$\sigma=(\ell_1,\ell_2)$ and $\sigma'=(\ell'_1,\ell'_2)$ be two
switchovers in~$\switches$. Let~$s_1$ and $s_2$ be the stops
of~$\ell_1$ and~$\ell_2$, and let~$s'_1$ and $s'_2$ be the stops
of~$\ell'_1$ and~$\ell'_2$, respectively; see
Fig.~\ref{fig:dyn-prog}(c). We assume that~$\sigma<\sigma'$,
i.e., $s_1<s'_1$. Let~$I(\sigma,\sigma']$ be the instance restricted
to the stops~$\{s\in \stops \mid s_2< s < s'_1\}\cup\{s'_1,s'_2\}$,
where $(\sigma,\sigma']$ indicates that the stops of
$\sigma'$ belong to that instance, while the stops of $\sigma$ do not.

The switchovers~$\sigma$ and $\sigma'$ are~\emph{compatible}
if~$\ell_2$ and $\ell'_1$ are assigned to the same side of~$C$, and there is a
labeling for~$I[s_1,s'_2]$ such that it contains $\ell_1$, $\ell_2$,
$\ell'_1$ and $\ell'_2$ and, furthermore, $\sigma$ and $\sigma'$ are the only
switchovers in that labeling.  Let~$\mathcal L$ be the optimal
labeling among those labelings.  We denote the labeling~$\mathcal
L\setminus\{\ell_1,\ell_2\}$ of~$I(\sigma,\sigma']$ by~$\mathcal
A_{\sigma,\sigma'}$. 
Utilizing Theorem~\ref{thm:one-sided-case}, we obtain~$\mathcal
A_{\sigma,\sigma'}$ in $O(n\cdot k^2)$ time.

For any labeling $\mathcal L$ of an instance $J$
let~$\firstLabel_{\mathcal L}\in \mathcal L$ be the label of the first
stop in~$J$ and let~$\lastLabel_{\mathcal L} \in \mathcal L$ be the
label of the last stop in~$J$; $\firstLabel_{\mathcal L}$ is the
\emph{head} and $\lastLabel_{\mathcal L}$ is the \emph{tail}
of~$\mathcal L$.  For technical reasons we extend~$\stops$ by the
\emph{dummy stops} $d_1$, $d_2$, $d_3$ and $d_4$ such that~$d_1 < d_2
< s < d_3 < d_4$ for any stop~$s\in \stops$. For $d_1$ and $d_2$ we
introduce the \emph{dummy switchover}~$\bot$ and for $d_3$ and $d_4$
the dummy switchover~$\top$. We define that~$\bot$ and $\top$ are
compatible to all switchovers in~$\switches$ and that
$\bot$ and $\top$ are compatible, if there is a one-sided labeling
for~$I$.  Conceptually, each dummy switchover consists of two labels
that are assigned to both sides of~$C$. Further, neither $\bot$
nor~$\top$ has any influence on the cost of a labeling. Hence,
w.l.o.g.\ we assume that they are contained in any labeling.

Similar to the one-sided case we define a directed acyclic graph
$G'=(V',E')$. This graph contains a vertex~$u$ for each
switchover~$\switches\cup\{\bot,\top\}$. Let~$\sigma_u$ denote the
switchover that belongs to the vertex $u\in V$. In particular let~$x$
denote the vertex of~$\bot$ and~$y$ denote the vertex of~$\top$.  For
each pair $u, v \in V$ the graph contains the edge~$(u,v)$ if and only
if~$\sigma_u$ and $\sigma_v$ are compatible and $\sigma_u<\sigma_v$.
The cost~$w_e$ of an edge~$e=(u,v)$ in $G'$ is $w_e=
w(\mathcal
A_{\sigma_u,\sigma_v})+w_3(\sigma_u,\sigma_v)+w_2(\ell^u_2,\firstLabel_{\mathcal
  A_{\sigma_u,\sigma_v}})$, where $\sigma_{u}=(\ell^u_1,\ell^u_2)$. In the special case that~$\sigma_u$ and
$\sigma_v$ share a stop, we set
$w_e=w_3(\sigma_u,\sigma_v)+w_2(\ell^v_1,\ell^v_2)+w_1(\ell^v_2)$,
where $\sigma_v=(\ell^v_1,\ell^v_2)$.

Let $P$ be an $x$-$y$ path in $G'$ and let~$e_1=(x=v_0,v_1),e_2=(v_1,v_2),\dots,e_{l}=(v_{l-1},v_{l}=y)$ be the edges of $P$. For a vertex~$v_i$ of~$P$ with $0\leq i \leq l$ we write~$\sigma_i$ instead of
$\sigma_{v_i}$. We denote the set $\bigcup_{i=1}^l A_{\sigma_{i-1},\sigma_i}$ by $\mathcal L_P$. 

\newcommand{\contentLemmaTwoSided}{
 a)~The graph $G'$ has an $x$-$y$ path if and only if~$I$ has a labeling.\\
 b)~Let~$P$ be a shortest $x$-$y$ path in $G'$, then $\mathcal L_P$ is an optimal labeling of~$I$.
 % with
 % $w(\mathcal L_P)=w(P)$, and
 % for any labeling~$\mathcal L$ of~$I$ it holds $w(\mathcal L)\geq w(\mathcal L_P)$.

}
\begin{lemma}\label{lem:two-sided}
 \contentLemmaTwoSided
\end{lemma}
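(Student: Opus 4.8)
The plan is to establish a cost-faithful correspondence between labelings of $I$ and $x$-$y$ paths in $G'$. The guiding structural fact is that, reading a labeling along $C$, the side used changes exactly at its switchovers, so the labeling splits at its switchovers into maximal one-sided pieces. A path in $G'$ will record precisely the ordered list of switchovers used, and its edge costs will account for the optimal one-sided filling of each gap together with the inter-switchover and boundary penalties. Throughout I write the switchovers of a labeling in order as $\sigma_1 < \dots < \sigma_r$ and pad them by the dummies $\sigma_0 = \bot$ and $\sigma_{r+1} = \top$; for a path $P$ I write its vertices as $\bot = \sigma_0, \sigma_1, \dots, \sigma_l = \top$. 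For a switchover $\sigma_i = (\ell^i_1, \ell^i_2)$ I let $s_2(\sigma_i)$ be the stop of its tail $\ell^i_2$, and I recall that the half-open instances $I(\sigma_{i-1}, \sigma_i]$ partition all stops.

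For part (a) I would first prove ``labeling $\Rightarrow$ path''. Given a labeling $\mathcal{L}$, I claim every consecutive pair $(\sigma_{i-1}, \sigma_i)$ of its padded switchover sequence is compatible, which yields the edges of an $x$-$y$ path. The restriction of $\mathcal{L}$ to $I[s_1(\sigma_{i-1}), s_2(\sigma_i)]$ is the required witness: it contains all four switchover labels and, since no switchover of $\mathcal{L}$ lies strictly between $\sigma_{i-1}$ and $\sigma_i$, it has exactly $\sigma_{i-1}$ and $\sigma_i$ as switchovers; in particular $\ell^{i-1}_2$ and $\ell^i_1$ lie on the same side of $C$. The dummy boundaries are covered by the convention that $\bot, \top$ are compatible with everything (and with each other exactly when a one-sided labeling of $I$ exists, which handles the switchover-free case). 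For the converse, I would define $\mathcal{L}_P = \bigcup_{i=1}^{l} \mathcal A_{\sigma_{i-1}, \sigma_i}$ and show by induction on $i$ that the partial union is a valid labeling of the prefix instance ending at $s_2(\sigma_i)$. Since the pieces $I(\sigma_{i-1}, \sigma_i]$ partition the stops, each stop is labeled exactly once, and compatibility ($\ell^{i-1}_2$ and $\ell^i_1$ on the same side) guarantees that gluing two pieces does not create a spurious switchover. The one nontrivial point is the absence of intersections between labels of different pieces, and this is exactly where Lemma~\ref{lem:two-sided-trans-prop} enters: a label $\ell''$ of a later stop that intersected an earlier label $\ell$ would, by the lemma applied to the separating switchover $\sigma_i$, have to intersect $\ell^i_1$ or $\ell^i_2$, contradicting the compatibility witness in which $\ell''$ coexists with both switchover labels. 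I expect this gluing step to be the main obstacle, since it must chain the transitivity lemma across arbitrarily many switchovers while keeping the boundary labels $\ell^i_1, \ell^i_2$ straight.

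For part (b) I would do the cost bookkeeping by decomposing $w$ along the same segmentation, using the linearity of Assumption~\ref{assumption:weighting-function}. Every label lies in a unique piece, so the $w_1$ sum and all within-piece $w_2$ pairs are accounted for inside the standalone costs $w(\mathcal A_{\sigma_{i-1}, \sigma_i})$; moreover each piece carries exactly one switchover ($\sigma_i$) and hence contributes no $w_3$ term internally. The only remaining contributions are, per boundary, the inter-switchover cost $w_3(\sigma_{i-1}, \sigma_i)$ and the boundary label pair $w_2(\ell^{i-1}_2, \firstLabel_{\mathcal A_{\sigma_{i-1}, \sigma_i}})$, which are precisely the two further summands of the edge cost $w_e$; the dummy terms vanish by the no-cost-influence convention. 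This gives $w(\mathcal L_P) = w(P)$. To conclude optimality of $\mathcal L_P$ for a shortest $P$, I would take any labeling $\mathcal L$, build its path $P_{\mathcal L}$ as in part (a), and apply the identical decomposition: the contribution of segment $i$ to $w(\mathcal L)$ is $w_3(\sigma_{i-1}, \sigma_i) + w(\mathcal B_i) + w_2(\ell^{i-1}_2, \firstLabel_{\mathcal B_i})$, where $\mathcal B_i$ is the one-sided piece that $\mathcal L$ happens to use. Since $\mathcal A_{\sigma_{i-1}, \sigma_i}$ is by definition the cost-optimal compatible piece between these two switchovers, each segment term is at least the corresponding edge cost, whence $w(P_{\mathcal L}) \le w(\mathcal L)$. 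Combining the two facts for a shortest path $P$ gives $w(\mathcal L_P) = w(P) \le w(P_{\mathcal L}) \le w(\mathcal L)$ for every labeling $\mathcal L$, so $\mathcal L_P$ is optimal. The shared-stop special case is identical, the modified edge cost $w_3(\sigma_u, \sigma_v) + w_2(\ell^v_1, \ell^v_2) + w_1(\ell^v_2)$ simply avoiding a double count of the single label at the stop shared by $\sigma_u$ and $\sigma_v$.
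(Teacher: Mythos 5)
Your proposal is correct and follows essentially the same route as the paper's proof: consecutive switchovers of a labeling are compatible and hence yield an $x$-$y$ path, $\mathcal L_P$ is shown to be a valid labeling by induction with Lemma~\ref{lem:two-sided-trans-prop} handling cross-piece intersections, the cost identity $w(\mathcal L_P)=w(P)$ comes from the same segment-wise decomposition, and optimality follows from the per-segment optimality of $\mathcal A_{\sigma_{i-1},\sigma_i}$ (which the paper phrases as a proof by contradiction rather than your direct chain of inequalities, a purely cosmetic difference).
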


\begin{proof}
Recall for the proof that we assume that~$I$ satisfies Assumption~\ref{assumption:separation}--\ref{assumption:weighting-function}.

  By construction of~$G'$, it directly follows that~$G'$ has
  an~$x$-$y$ path~$P$ if and only if $I$ has a labeling~$\mathcal L$.
  We first show that $\mathcal L_P$ is a labeling of~$I$ with
  $w(\mathcal L_P)=w(P)$. Afterwards we prove that for any labeling~$\mathcal L$ of
  $I$ it holds $w(\mathcal L)\geq w(\mathcal L_P)$.

  Let~$e_1=(v_0,v_1),e_2=(v_1,v_2),\dots,e_{l}=(v_{l-1},v_{l})$ be the
  edges of $P$ with~$v_0=x$ and $v_{l}=y$. For a vertex~$v_i$ with
  $0\leq i \leq l$ we write~$\sigma_i$ instead of $\sigma_{v_i}$. We
  show by induction over~$m\in \mathbb N$ with~$1\leq m \leq l$ that
  \[
   \mathcal L_m:=\bigcup^{m}_{i=1}\mathcal A_{\sigma_{i-1},\sigma_i}
  \]
  is a labeling of~$I(\sigma_0,\sigma_{m}]$ with~$w(\mathcal
  L_m)=\sum_{i=1}^{m}w_{e_i}$. Altogether this implies that $\mathcal L_P=\mathcal
  L_l$ is a labeling with $w(\mathcal L_P)=w(P)$.

  For~$m=1$ we have~$\mathcal L_1=\mathcal A_{\sigma_0,\sigma_1}$. By
  the construction of $G'$ the set $\mathcal A_{\sigma_0,\sigma_1}$ is
  a labeling. Since $\sigma_0=(\ell_0,\ell_1)$ is a dummy switchover
  we have $w_3(\sigma_0,\sigma_1)=0$ and
  $w_2(\ell_2,\firstLabel_{\mathcal A_{\sigma_0,\sigma_1}})=0$. Hence,
  it holds $w(\mathcal A_{\sigma_0,\sigma_1})=w_{e_1}$.
  
  Now, consider the set~$\mathcal L_m$ for $m>1$. We first argue
  that~$\mathcal L_m$ is a labeling of~$I(\sigma_0,\sigma_m]$. By
  induction the set~$\mathcal L_{m-1}\subseteq \mathcal L_m$ is a labeling
  of~$I(\sigma_0,\sigma_{m-1}]$. Further, by construction the set
  $\mathcal L_{m}\setminus \mathcal
  L_{m-1}=A_{\sigma_{m-1},\sigma_{m}}$ is a labeling of the
  instance~$I(\sigma_{m-1},\sigma_m]$. Since~$\sigma_{m-1}$ and
  $\sigma_m$ are compatible, no label of $\mathcal L_{m}\setminus
  \mathcal L_{m-1}$ intersects any label of~$\sigma_{m-1}$. Then by
  Lemma~\ref{lem:two-sided-trans-prop} no two labels in $\mathcal L_m$
  intersect each other.

  We now show that $w(\mathcal L_m)=\sum_{i=1}^{m}w_{e_i}$. By
  induction we have~$w(\mathcal
  L_{m-1})=\sum_{i=1}^{m-1}w_{e_i}$. Since $\mathcal L_m=\mathcal
  L_{m-1} \cup \mathcal A_{\sigma_{m-1},\sigma_{m}}$ it holds
 \begin{equation}
   w(\mathcal L_m)=w(\mathcal L_{m-1} \cup \mathcal A_{\sigma_{m-1},\sigma_{m}}).\label{equation:base}
 \end{equation}
 We distinguish two cases. First assume that~$\sigma_{m-1}$ and
 $\sigma_m$ do not have any stop in common. Let~$\sigma_{m-1}=(\ell^{m-1}_1,\ell^{m-1}_2)$, then we derive from
 Equation~(\ref{equation:base})
 \begin{align*}
   w(\mathcal L_m)=&
                   w(\mathcal L_{m-1} \cup \mathcal A_{\sigma_{m-1},\sigma_{m}})\\
                  =&w(\mathcal L_{m-1})+w(\mathcal A_{\sigma_{m-1},\sigma_{m}})+w_3(\sigma_{m-1},\sigma_m)+w_2(\ell^{m-1}_2,h_{\mathcal A_{\sigma_{m-1},\sigma_{m}}})\\
                  \stackrel{(I)}{=}& w(\mathcal L_{m-1})+w_{e_m}\stackrel{(II)}{=}\sum_{i=1}^mw_{e_i}.
 \end{align*}
 Equality~(I) holds due to the definition of $w_{e_m}$ and
 Equality~(II) is by induction true. Now assume that~$\sigma_{m-1}$
 and $\sigma_m$ have a stop in
 common. Let~$\sigma_m=(\ell^m_1,\ell^m_2)$, then we derive from
 Equation~(\ref{equation:base})

 \begin{align*}
   w(\mathcal L_m)=&
   w(\mathcal L_{m-1} \cup \mathcal A_{\sigma_{m-1},\sigma_{m}})\\
   =&w(\mathcal L_{m-1})+w_3(\sigma_{m-1},\sigma_m)+w_2(\ell^m_1,\ell^m_2)+w_1(\ell^m_2)\\
   \stackrel{(III)}{=}& w(\mathcal
   L_{m-1})+w_{e_m}\stackrel{(IV)}{=}\sum_{i=1}^mw_{e_i}.
 \end{align*}
 Equality~(III) holds due to the definition of $w_{e_m}$ and
 Equality~(IV) is by induction true. Altogether we obtain that
 $\mathcal L_l$ is a labeling with $w(\mathcal L_l)=w(P)$.

 Finally we show, that there is no other labeling~$\mathcal L$
 with~$w(\mathcal L)<w(\mathcal L_P)$. Assume for the sake of
 contradiction that there is such a labeling~$\mathcal L$.
 Let~$\bot=\sigma_0,\sigma_1,\dots,\sigma_l=\top$ be the switchovers
 in~$\mathcal L$, such that $\sigma_i < \sigma_{i+1}$ for each~$0\leq
 i < l$. We observe that two consecutive switchovers~$\sigma_i$
 and~$\sigma_{i+1}$ are compatible. Hence, for any~$i$ with~$1\leq i
 \leq l$ there is an edge~$e_i=(u,v)$ in~$G'$
 with~$\sigma_u=\sigma_{i-1}$ and $\sigma_{v}=\sigma_{i+1}$. Consequently,
 the edges~$e_1,\dots,e_l$ form an $x$-$y$ path~$Q$. By the first two claims of
 this lemma, there is a labeling~$\mathcal L_Q$ of~$I$
 with~$w(Q)=w(\mathcal L_Q)$. Since~$P$ is a shortest~$x$-$y$ path it
 holds
 \[
   w(\mathcal L) < w(\mathcal L_P)=w(P) \leq  w(Q)= w(\mathcal L_Q) 
 \]
 We now show that $w(\mathcal L_Q)\leq w(\mathcal L)$ deriving a
 contradiction. To that end recall
 that
 \[
\mathcal L_Q=\bigcup_{i=1}^l\mathcal
 A_{\sigma_{i-1},\sigma_i}.\] For $1\leq i \leq l$ the set $\mathcal
 A_{\sigma_{i-1},\sigma_i}$ is an optimal labeling of the
 instance~$I(\sigma_{i-1},\sigma_{i}]$ such that $\mathcal
 A_{\sigma_{i-1},\sigma_i} \cup \sigma_{i-1}$ is a labeling
 and~$\sigma_{i}$ is the only switchover contained in~$\mathcal
 A_{\sigma_{i-1},\sigma_i}$. Let~$\mathcal L_i \subseteq \mathcal L$ and $\mathcal
 L^Q_i\subseteq \mathcal L_Q$ be the labelings restricted
 to~$I(\sigma_{i-1},\sigma_i]$. In particular we have~$\mathcal
 L^Q_i=\mathcal A_{\sigma_{i-1},\sigma_i}$ and~$\sigma_i$ is the only
 switchover in~$\mathcal L_i$. If we had $w(\mathcal L_Q)> w(\mathcal
 L)$, there must be two consecutive switchovers~$\sigma_{i-1}$ and
 $\sigma_{i}$ such that $w(\mathcal L_i)<w(\mathcal L^Q_i)$. However,
 this contradicts the optimality of~$\mathcal
 A_{\sigma_{i-1},\sigma_i}$.  Consequently, it holds~$w(\mathcal L_Q)\leq w(\mathcal L)$
 yielding the claimed contradiction.  \qed
\end{proof}

By Lemma~\ref{lem:two-sided} a
shortest $x$-$y$ path~$P$ in~$G'$ corresponds with an optimal
labeling~$\mathcal L$ of~$C$, if this exists. Using \SP we
construct~$P$ in~$O(|V'|+|E'|)$ time.  Since~$\switches$
contains~$O(nk^2)$ switchovers, the graph $G'$ contains $O(nk^2)$
vertices and $O(n^2 k^4)$ edges.  As \SP considers each edge only
once, we compute the edges of~$G'$ on demand, which needs $O(nk^2)$
storage. We compute the costs of the incoming edges of
a vertex~$v\in V'$ utilizing the one-sided case. Proceeding naively,
we need $O(nk^2)$ time per edge, which yields $O(n^3k^6)$ time in
total. 

Reusing already computed information, we improve that result as follows.
Let~$(u_1,v),\dots,(u_k,v)$ denote the incoming edges of~$v$ such
that~$\sigma_{u_1}\leq\dots\leq\sigma_{u_k}$, i.e., the stop of
$\sigma_{u_i}$'s first label does not lie after the stop of
$\sigma_{u_j}$'s first label with $i<j$. Further, let~$\sigma_v=(\ell^1_v,\ell^2_v)$ and let~$G_i$ be
the graph for the one-sided instance~$I[s_i,s]$ considering only candidates that lie on the same side as~$\ell^1_v$, where~$s_i$ is
the stop of the second label of~$\sigma_{u_i}$ and $s$ is the
stop of~$\ell^1_v$. Let~$P_i$ be the shortest
$x_i$-$y_i$ path in $G_i$, where~$x_i$ and $y_i$ denote the source
and target of~$G_i$, respectively.  We observe that excluding the
source and target, the graph $G_i$ is a sub-graph of~$G_1$
for all $1\leq i \leq k$. Further, since a sub-path of a shortest path
is also a shortest path among all paths having the same end vertices,
we can assume without loss of optimality that when excluding~$x_i$ and
$y_i$ from~$P_i$, the path~$P_i$ is a sub path of~$P_1$ for all~$1\leq
i \leq k$. We therefore only need to compute~$G_1$ and~$P_1$ and
can use sub-paths of~$P_1$ in order to gain the costs of all
incoming edges of~$v$. Hence, we basically apply for each vertex~$v\in
V$ the algorithm for the one-sided case once using $O(nk^2)$ time per
vertex. We then can compute the costs in~$O(1)$ time per edge, which
yields the next result.

\begin{theorem}
\RMLL can be optimally solved in~$O(n^2 k^4)$ time and~$O(nk^2)$ space.
\end{theorem}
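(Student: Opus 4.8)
The plan is to build on Lemma~\ref{lem:two-sided}, which reduces \RMLL to computing a shortest $x$-$y$ path in the acyclic graph $G'=(V',E')$: such a path exists precisely when $I$ is labelable, and a shortest one yields an optimal labeling $\mathcal L_P$. I would therefore run \SP on $G'$; being a linear-time routine for directed acyclic graphs it inspects every edge once and finishes in $O(|V'|+|E'|)$ time once the edge weights are available. Since $V'$ has one vertex per switchover, $|V'|=O(nk^2)$, and since every ordered pair of compatible switchovers may induce an edge, $|E'|=O(n^2k^4)$. The substantive work is thus supplying each weight $w_e$, whose dominant term is the cost $w(\mathcal A_{\sigma_u,\sigma_v})$ of an optimal one-sided labeling of the instance squeezed between the two switchovers.

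Computing each $\mathcal A_{\sigma_u,\sigma_v}$ by an independent call to the one-sided algorithm of Theorem~\ref{thm:one-sided-case} costs $O(nk^2)$ and, over all $O(n^2k^4)$ edges, would give a prohibitive $O(n^3k^6)$ bound. The key is to amortize these calls across the incoming edges of each target vertex. Fix a vertex $v$ with $\sigma_v=(\ell^1_v,\ell^2_v)$ and let $s$ be the stop of $\ell^1_v$; by compatibility every incoming edge $(u,v)$ needs a one-sided labeling lying on the side of $\ell^1_v$ and ending at $\ell^1_v$, differing only in where it begins, namely at the second label of $\sigma_u$. Hence all these sub-instances share the same right endpoint --- the stop $s$ together with the side of $\ell^1_v$ --- and the associated one-sided graphs $G_u$ are nested sub-graphs of a single largest one, $G_1$, obtained from the earliest admissible starting stop.

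I would then invoke the optimal-substructure property of shortest paths: any sub-path of the shortest $x_1$-$y_1$ path $P_1$ in $G_1$ is itself shortest between its endpoints. Thus a single run of the one-sided algorithm for $v$ --- costing $O(nk^2)$ and producing $P_1$ together with the shortest distances into its common target --- certifies $w(\mathcal A_{\sigma_u,\sigma_v})$, and therefore $w_e$, for every incoming edge of $v$ in $O(1)$ time each. Summing one such run over the $O(nk^2)$ vertices of $G'$ gives $O(n^2k^4)$ time in total, which matches the cost of \SP on $G'$. For the space bound I would generate the edges of $G'$ on demand rather than store them, which is legitimate because \SP consults each edge only once; together with the $O(nk)$ working space of each one-sided call this keeps the total at $O(nk^2)$. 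The main obstacle is making the reuse rigorous: one must verify that the graphs $G_u$ genuinely nest and that restricting attention to sub-paths of $P_1$ forfeits no optimality, so that the single per-vertex one-sided computation really delivers all incoming edge weights simultaneously.
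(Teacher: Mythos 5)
Your proposal matches the paper's proof essentially step for step: reduce to a shortest $x$-$y$ path in $G'$ via Lemma~\ref{lem:two-sided}, bound $|V'|=O(nk^2)$ and $|E'|=O(n^2k^4)$, generate edges on demand for the space bound, and amortize the edge-weight computations by running the one-sided algorithm once per target vertex and extracting all incoming edge weights as sub-paths of a single shortest path $P_1$ in the largest nested one-sided graph $G_1$. The nesting of the graphs $G_u$ and the appeal to optimal substructure that you flag as needing verification are asserted in exactly the same (informal) way in the paper.
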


\begin{figure*}[t]
  \centering
  \includegraphics[width=\textwidth,page=1]{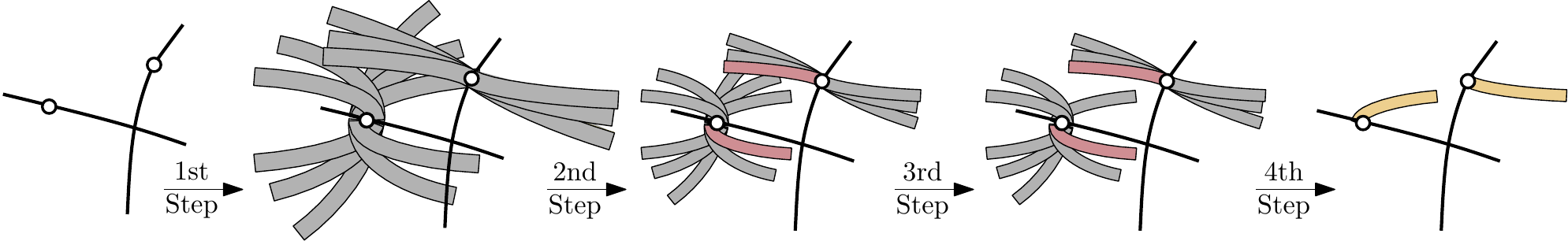}
  \caption{Schematic illustration of the presented workflow.
    \ 1st Step: generation of candidates. 2nd
    Step: scaling and creation of initial labeling (red labels). 3rd Step: pre-selection of candidates. 4th
    Step: Solving the metro lines independently.
   }
\label{fig:approach:pipeline} 
\end{figure*}

\section{Cost Function}
\label{sec:weighting-function}

In this section we motivate the cost function introduced in
Section~\ref{sec:single-line}.  For a given metro map $(\{C\},\stops)$
that consists of a single metro line~$C$, and generated
candidates~$\cands$, we rate each labeling~$\mathcal L \subseteq
\cands$ using the following cost function:
\[
w(\mathcal L)=\sum_{\ell \in \mathcal L}w_1(\ell) +
\smashoperator[r]{\sum_{(\ell_1,\ell_2)\in P_{\mathcal L}}}w_2(\ell_1,\ell_2) +
\smashoperator[r]{\sum_{(\sigma_1,\sigma_2)\in \consSwitches_{\mathcal L}}} w_3(\sigma_1,\sigma_2),
\]
where~$w_1\colon \mathcal L \to \mathbb{R}$ rates a single label,
$w_2\colon P_{\mathcal L} \to \mathbb{R}$ rates two consecutive labels
and $w_3\colon \consSwitches_{\mathcal L} \to \mathbb{R}$ rates two
consecutive switchovers; see Assumption~\ref{assumption:weighting-function}. The definition of this function relies on the following considerations, which are based on Imhof's ``general principles and requirements'' for map labeling \cite{imhof}. 

To respect that some (e.g., steep and highly curved) labels
are more difficult to read than others, we introduce a cost
$w_1(\ell)$ for each candidate label $\ell$. 
 
Imhof further notes that ``names should assist directly in revealing
spatial situation'' and exemplifies this principle with maps that show
text only while still conveying the most relevant geographic
information.  To transfer this idea to metro maps, we favor solutions
where the labels for each two consecutive stops on a metro line have
similar properties.  That is, the two labels should be placed on the
same side of the line and their slopes and curvatures should be
similar.  In a map satisfying this criterion, a user need not find the
point-text correspondence on a one-to-one basis.  Instead, the user
can identify metro lines and sequences of stops based on label groups,
which, for example, makes it easier to count the stops till a
destination.  (Of course, this is also an improvement in terms of
legibility.)  In our model, we consider the similarity of consecutive
labels by introducing a cost $w_2(\ell_1, \ell_2)$ for each pair
$(\ell_1,\ell_2)$ of candidates that belong to consecutive stops on $C$. We penalize consecutive candidates that lie on opposite
sides of the metro line, because those disturb the overall label
placement.  We add this cost to the objective value of a solution if
both candidates are selected.  Since we minimize the total costs of
the solution, the cost for a pair of candidates should be low if
they are similar. Further, if~$C$ has labels that do not
lie on the same side of~$C$, the implied switchovers should occur in
regular distances and not cluttered. Hence, for each pair $\sigma_1$
and $\sigma_2$ of two consecutive switchovers in a solution, we add a
cost~$w_3(\sigma_1,\sigma_2)$ to the objective value of the solution
that depends on the distance between $\sigma_1$ and $\sigma_2$; the
smaller the distance, the greater the cost
of~$w_3(\sigma_1,\sigma_2)$.

We now describe more precisely how we defined~$w_1$, $w_2$ and $w_3$
for our evaluation. The definitions depend on the applied labeling
style.

\textbf{Curved Labels.} Using \CurvedLabels for the labels we define
the cost functions as follows. For a label~$\ell\in \cands$
let~$p^1_\ell=(x^1_\ell,y^1_\ell)$ be the start point of~$\ell$ and
let~$p^2_\ell=(x^2_\ell,y^2_\ell)$ be the end point of~$\ell$; recall
that we derived the labels from B\'{e}zier curves. Let~$\vec v_\ell$
be the vector connecting~$p^1_\ell$ with $p^2_\ell$ and
let~$\alpha\in[0,2\pi]$ denote the angle of~$\vec v_\ell$. We define
\[
\delta_\ell =\begin{cases}
               \alpha_\ell & 0 \leq \alpha \leq \frac{\pi}{2}\\
               \pi-\alpha_\ell & \frac{\pi}{2} < \alpha \leq \pi\\
               \alpha_\ell-\pi & \pi < \alpha \leq \frac{3}{2}\pi\\
               2\pi-\alpha_\ell & \frac{3}{2}\pi < \alpha < 2\pi\\
            \end{cases}
\] 
Hence, the angle~$\delta_\ell$ is a measure for how horizontal the vector~$\vec v_\ell$ is, whereby the smaller the value of~$\delta_\ell$, the more horizontal is $\vec v_\ell$.

We defined the cost function $w_1$ rating a single label~$\ell\in
\cands$ to be $w_1=10\cdot \delta_\ell$. Hence, we penalize steep
labels. We defined~$w_2$ rating two consecutive labels~$\ell_1$ and
$\ell_2$ as follows. If~$\ell_1$ and $\ell_2$ point into different
$x$-direction, $w_2$ is 150. Else, if~$\ell_1$ and $\ell_2$ are
switchovers, the cost $w_2$ is 0. In all other cases, $w_2$ is the difference
between the angle $\alpha_{\ell_1}$ and $\alpha_{\ell_2}$. Hence, in
the latter case we penalize labels that are differently aligned.
Finally,~$w_3$ rating two switchovers $\sigma_1$ and $\sigma_2$ of the
same line is defined as $w_3=\frac{200}{d_{\sigma_1,\sigma_2}}$,
where $d_{\sigma_1,\sigma_2}$ is the number of stops in between
$\sigma_1$ and $\sigma_2$. In particular $w_3$ effects that a labeling
with equally sized sequences of labels lying on the same side of their
metro line are rated better than a labeling where the sequences are
sized irregularly.

\textbf{Octilinear Labels.} Using \OctiLabels for the labels we define the 
cost functions as follows. Recall that we use $\OctiLabels$ for 
octilinear metro maps.  For a label $\ell$ let $l_\ell$ be the segment on 
which its stop is placed.
If $\l_\ell$ is horizontal, but $\ell$ is not diagonal, we set $w_1=200$.
If $\l_\ell$ is vertical or diagonal, but $\ell$ is not horizontal, we set~$w_1=100$. In all other cases we set $w_1=0$. The functions $w_2$ and $w_3$ are defined in the same way as for \CurvedLabels.

\section{Multiple Metro Lines}\label{sec:multi-lines}
In this section we consider the problem that we are given a metro map
$(\lines,\stops)$ consisting of multiple metro lines. We present an
algorithm that creates a labeling for~$(\lines,\stops)$ in two
phases. Each phase is divided into two steps;
see Fig.~\ref{fig:approach:pipeline} for a schematic illustration. In the
first phase the algorithm creates the set~$\cands$ of label candidates
and ensures that there exists at least one labeling for the metro map.
In the second phase it then computes a labeling~$\mathcal L$ for
$(\lines,\stops,\cands)$. To that end it makes use of the labeling
algorithm for a single metro line; see Section~\ref{sec:single-line}.
In order to rate~$\mathcal L$, we extend the cost function~$w$
for a single metro line on multiple metro lines, i.e., for any
$\mathcal L\subseteq \cands$ we require
\[
w(\mathcal L)=\smashoperator[r]{\sum_{C\in \lines}}w(C) \text{ with }
w(C)=\sum_{\ell \in \mathcal L_C}w_1(\ell) +
\smashoperator[r]{\sum_{(\ell_1,\ell_2)\in P_{\mathcal L_C}}}w_2(\ell_1,\ell_2) +
\smashoperator[r]{\sum_{(\sigma_1,\sigma_2)\in \consSwitches_{\mathcal L_C}}} w_3(\sigma_1,\sigma_2),
\]
where $\mathcal L_C=\mathcal L\cap \cands_C$ is the labeling restricted to metro line~$C\in\lines$,~$w_1\colon \mathcal L_C \to \mathbb{R}$ rates a single label,
$w_2\colon P_{\mathcal L_C} \to \mathbb{R}$ rates two consecutive labels
and $w_3\colon \consSwitches_{\mathcal L_C} \to \mathbb{R}$ rates two
consecutive switchovers of~$C$. In particular~$w$ satisfies Assumption~\ref{assumption:weighting-function} for a single metro line.

Altogether, the workflow yields a heuristic that relies on the conjecture that
using optimal algorithms in single steps is sufficient to obtain good labelings. In our evaluation we call that approach \DYNALG.

\subsection{First Phase -- Candidate Generation}
\label{sec:basic-approach}
  First, we create the label
  candidates~$\cands$. We then enforce that there is a
  labeling~$\mathcal L$ for the given
  instance~$I=(\lines,\stops,\cands)$.

\paragraph{1st Step -- Candidate Creation.}
Depending on the labeling style, we generate a discrete set of
candidate labels for every stop.  Hence, we are now given the
instance~$(\lines,\stops,\cands,w)$. In particular we assume that each
candidate~$\ell\in \cands$ is assigned to one side of its metro
line~$C$, namely to the left or right side of~$C$, and, furthermore, $w$ is a cost function satisfying Assumption~\ref{assumption:weighting-function} for each metro line~$C$.

\paragraph{2nd Step -- Scaling.}
Since each stop of a metro map must be labeled, we first apply a
transformation on the given candidates to ensure that there is at
least one labeling of the metro map.
To that end we first determine for each stop~$s \in \stops$ of each
metro line~$C\in \lines$ two candidates of~$\cands_s$ that are assigned to
opposite sides of~$C$. More specifically, among all candidates
in~$\cands_s$ that are assigned to the right hand side of~$C$ and that do not
intersect any metro line of~$\lines$, we take that
candidate~$\ell_{\RR}\in \cands_s$ with minimum costs, i.e.,
$w_1(\ell)$ is minimal.  If such a candidate does not exist, because
each label of~$\cands_s$ intersects at least one metro line, we take a
pre-defined label~$\ell_{\RR}\in \cands_s$ that is assigned to the right hand
side of~$C$. In the same manner we choose a candidate~$\ell_{\LL}\in
\cands_s$ that is assigned to the left hand side of~$C$.
Let~$D_s=\{\ell_{\RR},\ell_{\LL}\}$, we now enforce that there is a
labeling~$\mathcal L$ of $I$ such that
 for each stop~$s\in \stops$ it
contains a label of~$D_{s}$.

We check whether the set~$\bigcup_{s \in \stops}D_s$ admits a
labeling~$\mathcal L$ for~$I$. Later, we describe more specifically
how to do this. If~$\mathcal L$ exists, we continue with the third
step of the algorithm using $I$ and $\mathcal L$ as input.  Otherwise,
we scale all candidates of~$\cands$ smaller by a constant factor and
repeat the described procedure.  Sampling a pre-defined scaling range
$[x_\mathrm{min},x_\mathrm{max}]$, we find in that manner a scaling
factor~$x\in[x_\mathrm{min},x_\mathrm{max}]$ for the candidates that
admits a labeling~$\mathcal L$ of~$I$. We choose $x$ as large as
possible. If we could not find $x$, e.g, because we have chosen
$[x_\mathrm{min},x_\mathrm{max}]$ or the sampling not appropriately,
we abort the algorithm, stating that the algorithm could not find a
labeling.

Next, we describe how to check whether there is a labeling for
$\bigcup_{s \in \stops}D_s$.  Since for each~$s\in \stops$ the
set~$D_s$ contains two candidates, we can make use of a 2SAT
formulation to model the labeling problem.  For each stop
$s \in \stops$ and each candidate $\ell\in D_s$, we introduce the Boolean
variable $x_{\ell}$.
Those Boolean variables induce the set
$\mathcal L=\{\ell \in \bigcup_{s\in S}D_s \mid x_\ell \text{ is true}\}$.
The following formulas are satisfiable if and only
if~$\mathcal L$ is a labeling of $\lines$. 
\begin{align*}
  \lnot x_\ell &\quad \forall s\in \stops\ \forall \ell\in D_s \text{ s.t. } \exists
C\in \lines \text{ that intersects } \ell\\
  \lnot x_\ell \vee \lnot x_{\ell'} &\quad  \forall \ell, \ell' \in
\bigcup\nolimits_{s\in S}D_s \text{ s.t. } \ell \text{ and } \ell' \text{ intersect.}\\
  x_\ell \vee x_{\ell'}, \lnot x_\ell \vee \lnot x_{\ell'}  &\quad
\forall s\in \stops\ \forall \ell, \ell'\in D_s
%\lnot x_\ell \vee \lnot x_{\ell'} &\quad
%\forall s\in \stops\ \forall \ell, \ell'\in D_s
\end{align*}
The first formula ensures that there is no label of the solution that
intersects any metro line. The second one avoids overlaps between
labels, while the two last formulas enforce that for each stop $s\in
\stops$ there is exactly one label of~$D_s$ that is contained in the
solution. 

According to~\cite{two-sat} in linear time with respect to the number
of variables and formulas, the satisfiability can be checked. We
introduce $O(n)$ variables and instantiate the second formula $O(n^2)$
times, because each pair of candidates may overlap. The remaining
formulas are instantiated in $O(n^2)$ time. Hence, the total running time is in $O(n^2\cdot
t)$, where~$t$ denotes the number of scaling
steps.

\subsection{Second Phase -- Candidate Selection}
We assume that we are given the scaled instance
$I=(\lines,\stops,\cands)$ and the labeling $\mathcal L$ for $I$ of
the previous phase.  We apply a pre-selection on the candidates by
discarding candidates such that no two stops' candidates of different
metro lines intersect and each metro line satisfies
Assumption~\ref{assumption:separation} and
Assumption~\ref{assumption:transitivity-property}. 
We never remove a label in $\mathcal L$ from the candidate set, however, to
ensure that there is always a feasible solution.
 Finally,
considering the metro lines independently, we select for each stop a
candidate to be its label using the dynamic program described in
Section~\ref{sec:single-line}.

\paragraph{3th Step -- Candidate Pre-Selection.}
We first ensure that~$I$ satisfies
Assumption~\ref{assumption:separation} and
Assumption~\ref{assumption:transitivity-property}. If two candidates
$\ell$ and $\ell'$ of the same metro line~$C$ intersect and if they
are assigned to opposite sides of~$C$, we delete one of both labels as
follows. If $\ell \in \mathcal L$, we delete $\ell'$, and if $\ell'\in
\mathcal L$ we delete $\ell$. Otherwise, if none of both is contained
in $\mathcal L$, we delete that label with higher costs; ties are
broken arbitrarily. Afterwards, Assumption~\ref{assumption:separation}
is satisfied.  Further, for each metro line $C$ of $I$ we iterate
through the stops of~$C$ from its beginning to its end. Doing so, we
delete candidates from~$\cands$ violating
Assumption~\ref{assumption:transitivity-property} as described as
follows. Let~$s$ be the currently considered stop. For each
candidate~$\ell\in \cands_s$ we check for each stop~$s'$ with~$s'<s$
whether there is a label~$\ell'\in\cands_{s'}$ that
intersects~$\ell$. If~$\ell'$ exists, we check whether each
candidate~$\ell''\in \cands_{s''}$ of any stop~$s''$ with $s'<s''<s$
intersects $\ell$ or~$\ell'$. If this is not the case, we delete
$\ell$ if $\ell$ is not contained in~$\mathcal L$, and otherwise we
delete~$\ell'$. Note that not both can be contained in~$\mathcal
L$. By construction each metro line of instance~$I$ then satisfies
Assumption~\ref{assumption:transitivity-property}, where $\mathcal
L\subseteq \cands$.

Finally, we ensure that the metro lines in~$\lines$ become
\emph{independent} in the sense that no candidates of stops belonging
to different metro lines intersect and no candidate intersects any
metro line. Hence, after this step, the metro lines can independently
be labeled such that the resulting labelings compose to a labeling of
$I$. 

We first rank the candidates of~$\cands$ as follows. For each metro
line~$C\in \lines$ we construct a labeling~$\mathcal L_C$ using
the dynamic program for the two sided case as presented in
Section~\ref{sec:single-line}. Due to the previous step, 
those labelings exist. Note that for two metro
lines~$C,C' \in \lines$ there may be labels $\ell\in \mathcal L_C$
and~$\ell'\in \mathcal L_{C'}$ that intersect each other. For each
candidate~$\ell\in\cands$ we set~$val_\ell=1$ if $\ell\in \mathcal
L_{C}$ for a metro line~$C\in \lines$, and $val_\ell=0$ otherwise. A
candidate~$\ell\in\cands$ has a \emph{smaller rank} than a
candidate~$\ell'\in \cands$, if~$val_\ell>val_{\ell'}$ or
$val_\ell=val_{\ell'}$ 
and $w_1(\ell)\leq w_1(\ell')$; ties are broken
arbitrarily.

We now greedily remove candidates from~$\cands$ until all metro lines
are independent. We create a conflict graph~$G=(V,E)$
such that the vertices of~$G$ are the candidates and the edges model intersections between
candidates, i.e., two vertices are adjacent if and only if the
corresponding labels intersect. Then, we delete all vertices whose
corresponding labels intersect any metro line. Afterwards, starting with~$\mathcal I=\emptyset$,
we construct an independent set~$\mathcal I$ on $G$ as follows.
First, we add all vertices of~$G$
to~$\mathcal I$, whose labels are contained in~$\mathcal L$, and
delete them and their neighbors from~$G$. Since the labels in
$\mathcal L$ do not intersect, $\mathcal I$ is an
independent set of the original conflict graph. In the
increasing order of their ranks, we remove each vertex~$v$
and its neighbors from~$G$. Each time we add~$v$ to
$\mathcal I$; obviously sustaining that~$\mathcal I$ is an independent set in~$G$. 
We then update for each stop~$s\in \stops$ its
candidate set~$\cands_s$ to~$\cands_s \gets  \{\ell\in \cands_s \mid \text{vertex of $\ell$ is contained in }\mathcal I\}$.
Since all labels of~$\mathcal L$ are contained in $\mathcal I$, there is a
labeling for $(\lines, \stops, \cands,w)$ based on the
new candidate set~$\cands$.

\paragraph{4th Step -- Final Candidate Selection.}
Let $I=(\lines,\stops,\cands,w)$ be the instance after applying the
third step and $\mathcal L$ be the labeling that has been created in
the first step.  By the previous step 
the metro lines are in the sense independent that candidates of stops
belonging to different metro lines do not intersect. Further, they all
satisfy Assumption~\ref{assumption:separation} and
Assumption~\ref{assumption:transitivity-property}.
Hence, we use the dynamic programming approach of Section~\ref{sec:single-line}
in order to label them independently. The composition of those labelings
is then a labeling~$\mathcal L$ of~$I$.

\section{Alternative Approaches}
We now present the three approaches \ILPALG, \SCALEALG, \GREEDYALG,
which are adaptions of our workflow. We use those to experimentally
evaluate our approach against alternatives. While \GREEDYALG is a
simple and fast greedy algorithm, \ILPALG and \SCALEALG are based on
an ILP formulation.
\subsection{Integer Linear Programming Formulation}
To assess the impact of the second phase of our approach, we present
an integer linear programming formulation that optimally solves \MML
with respect to the required cost function. Let
$(\lines,\stops,\cands,w)$ be an instance of that problem, which we
obtain after the first phase of our approach. We first note that we
apply a specific cost function (see
Section~\ref{sec:weighting-function}). The cost function $w_3$, which
rates two consecutive switchovers $\sigma=(\ell_1,\ell_2)$ and
$\sigma'=(\ell'_1,\ell'_2)$ of a labeling~$\mathcal L$, does not rely
on the actual switchovers, but only on their positions on the
corresponding metro line. Hence, we may assume that $w_3$ expects the
stops of $\ell_1$ and $\ell'_1$. This assumption helps us reduce the
number of variables.

For each candidate~$\ell \in \cands$ we define a binary
variable~$x_\ell \in \{0,1\}$. If $x_\ell=1$, we interpret it such
that $\ell$ is selected for the labeling. We introduce the following
constraints.
\begin{align}
  x_\ell&=0 && \forall \ell \in \cands \text{ that intersect a metro line.}\label{ilp:c1}\\
  x_\ell+x_{\ell'}&\leq 1 && \forall \ell,\ell' \in \cands \text{ that intersect.}\label{ilp:c2}\\
  \sum_{\ell \in \cands_s}x_\ell&=1 && \forall s\in S \label{ilp:c3}  
\end{align}
  Moreover, for each metro line~$C\in \lines$ we define
the following variables. To that end, let $\mathcal P\in
\cands_C\times\cands_C$ be the set of all consecutive labels and let
$s_1,\ldots,s_n$ be the stops of $C$ in that particular order.
\begin{align*}
  y_{\ell,\ell'}&\in \{0,1\}, && \forall (\ell,\ell')\in \mathcal P.\\
  z_{i}&\in\{0,1\}, &&   1\leq i < n\\
 h_{i,j}&\in\{0,1\}, &&  1\leq i < n \text{ and } i < j < n 
\end{align*}
If $y_{\ell,\ell'}=1$, we interpret it such that both $\ell$ and
$\ell'$ are selected for the labeling. If $z_i=1$, we interpret it
such that the selected labels of the stops $s_i$ and $s_{i+1}$ form a
switchover. If $h_{i,j}=1$, we interpret it such that the selected
labels at $s_i$, $s_{i+1}$, $s_j$ and $s_{j+1}$ form two consecutive
switchovers.
Further, for each metro line~$C$  we introduce the following constraints.  In these constraints $L(\cands)$ denotes the set of labels that lie to the left of $C$, and $R(\cands)$ denotes the set of labels that lie to the right of~$C$. 
\begin{align}
  x_\ell + x_{\ell'}-1 &\leq y_{\ell,\ell'} && \forall (\ell,\ell') \in \mathcal P\label{ilp:c4}\\
  \smashoperator[r]{\sum_{\ell \in L(\cands_{s_{i}})}}x_\ell +\smashoperator[r]{\sum_{\ell \in R(\cands_{s_{i+1}})}}x_\ell-1 &\leq z_i && 1\leq i < n\label{ilp:c5a}\\
  \smashoperator[r]{\sum_{\ell \in R(\cands_{s_{i}})}}x_\ell +\smashoperator[r]{\sum_{\ell \in L(\cands_{s_{i+1}})}}x_\ell-1 &\leq z_i && 1\leq i < n\label{ilp:c5b}\\
  z_i+z_j-1-\smashoperator[lr]{\sum_{i<k<j}}z_k&\leq h_{i,j}& & 1\leq i < j < n\label{ilp:c6} 
\end{align}
We further define for each metro line~$C$ the following linear term.

\begin{align}
  w(C):=\sum_{\ell\in \cands_C}x_\ell\cdot w_1(\ell) +
  \smashoperator[lr]{\sum_{(\ell,\ell')\in \mathcal P}}y_{\ell,\ell'}
  \cdot w_2(\ell,\ell') +
  \smashoperator[lr]{\sum_{\stackrel{1\leq i < n}{i<j<n}}}h_{i,j}\cdot
  w_3(s_i,s_j)
\end{align}
Subject to the presented Constraints (\ref{ilp:c1})--(\ref{ilp:c6}) we then minimize 

\begin{align}
  \sum_{C\in \lines} w(C).\label{ilp:objective}
\end{align}
Consider a variable assignment that minimizes (\ref{ilp:objective}) and satisfies Constraints
(\ref{ilp:c1})--(\ref{ilp:c6}). We show that~\[\mathcal
L =\{\ell \in \cands \mid x_\ell =1 \}\] is an optimal labeling of the
given instance with respect to the given cost function.
First of all,~$\mathcal L$ is a valid labeling. Constraint~(\ref{ilp:c1})
ensures that no label in~$\mathcal L$ intersects any metro line. By
Constraint~(\ref{ilp:c2}) the labels in~$\mathcal L$ are pairwise
disjoint. Finally, by Constraint~(\ref{ilp:c3}) for each stop there is
exactly one label contained in~$\mathcal L$.
In particular, for a metro line~$C\in \lines$, the set~$\mathcal L_C =
\mathcal L\cap \cands_C$ is a valid labeling of~$C$.

We now show that~$w(\mathcal L_C)=w(C)$ for any metro line~$C\in
\lines$. Since we minimize (\ref{ilp:objective}), this implies the
optimality of~$\mathcal L$. Obviously, for a label $\ell \in \cands_C$
the cost~$w_1(\ell)$ is taken into account in~$w(C)$ if and only
if~$\ell$ belongs to~$\mathcal L$.

By Constraint~(\ref{ilp:c4}) for two consecutive labels $\ell$ and
$\ell'$ of $C$ we have $y_{\ell,\ell'}=1$ if both are contained in
$\mathcal L$. Further, by the minimality of (\ref{ilp:objective}), if
at least one of both labels does not belong to~$\mathcal L$, it holds
$y_{\ell,\ell'}=0$. Hence, $w_2(\ell,\ell')$ is taken into account
in~$w(C)$ if and only if both~$\ell$ and $\ell'$ belong
to~$\mathcal L$.

By Constraint~(\ref{ilp:c5a}) and Constraint~(\ref{ilp:c5b}) it holds
$z_{i,j}=1$ if the labels $\ell_i\in \mathcal L$ and $\ell_{i+1}\in
\mathcal L$ of the consecutive stops~$s_i$ and $s_{i+1}$ form a
switchover in $\mathcal L$. Hence, by Constraint~(\ref{ilp:c6}) it
further holds $h_{i,j}=1$ if the labels of $s_i$ and $s_{i+1}$ as well
as $s_j$ and $s_{j+1}$ form switchovers~$\sigma_i$ and $\sigma_j$ in
$\mathcal L$, and, furthermore, there is no other switchover in
between $\sigma_i$ and $\sigma_j$, i.e., both switchovers are consecutive. On
the other hand, by the minimality of~$w(C)$ in all other cases it
holds~$h_{i,j}=0$. Hence, $w_3(s_i,s_j)$ is taken
into account in~$w(C)$ if and only if~$\sigma_i$ and $\sigma_j$ are
consecutive switchovers in~$\mathcal L$.  Altogether we obtain the
following theorem.

\begin{theorem}
  Given an optimal variable assignment for the presented ILP
  formulation, the set~$\mathcal L =\{\ell \in \cands \mid x_\ell =1
  \}$ is an optimal labeling of $(\lines,\stops,\cands)$ with respect
  to~$w$.
\end{theorem}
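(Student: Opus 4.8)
The plan is to verify that the ILP faithfully encodes \MML under the specific cost function of Section~\ref{sec:weighting-function}, so that an optimal integral solution yields an optimal labeling. I would argue in two directions. First I would show that the set~$\mathcal L=\{\ell\mid x_\ell=1\}$ induced by \emph{any} feasible assignment is a valid labeling and that, at an optimal assignment, its cost equals the objective value, i.e.\ $w(\mathcal L_C)=w(C)$ for every~$C\in\lines$. Second I would show that \emph{every} labeling~$\mathcal L'$ can be realized as a feasible assignment whose objective equals~$w(\mathcal L')$. Combining the two, the optimal objective value equals $\min_{\mathcal L'}w(\mathcal L')$ and is attained by the induced~$\mathcal L$, which proves optimality.

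For validity, Constraints~(\ref{ilp:c1})--(\ref{ilp:c3}) correspond one-to-one to the three defining conditions of a labeling: (\ref{ilp:c1}) forbids label--line overlaps, (\ref{ilp:c2}) forbids label--label overlaps, and (\ref{ilp:c3}) enforces exactly one label per stop. Hence each~$\mathcal L_C=\mathcal L\cap\cands_C$ is a valid single-line labeling. Matching the cost is then done term by term. The~$w_1$ contribution is immediate from~(\ref{ilp:objective}). For~$w_2$, Constraint~(\ref{ilp:c4}) forces $y_{\ell,\ell'}=1$ whenever both consecutive labels are chosen, while minimality together with $w_2\ge 0$ forces $y_{\ell,\ell'}=0$ otherwise; thus $y_{\ell,\ell'}=x_\ell x_{\ell'}$ and the pair penalties are counted exactly for the selected consecutive pairs in~$\consLabels_{\mathcal L_C}$.

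The delicate part — and the step I expect to be the main obstacle — is the~$w_3$ term encoding consecutive switchovers. I would first read~(\ref{ilp:c5a}) and~(\ref{ilp:c5b}) as forcing $z_i=1$ on every genuine switchover between $s_i$ and $s_{i+1}$, and then read~(\ref{ilp:c6}) together with minimality (using $w_3>0$) as forcing $h_{i,j}=1$ exactly when $z_i=z_j=1$ and $\sum_{i<k<j}z_k=0$, i.e.\ for indices consecutive in the support $Z:=\{k\mid z_k=1\}$. The subtlety is that~(\ref{ilp:c5a})--(\ref{ilp:c5b}) only \emph{lower-bound} the $z_i$, and $z_i$ is not penalized in the objective, so a priori the optimizer could plant \emph{spurious} markers $z_k=1$ at non-switchover positions to split a long gap and thereby lower the $w_3$ sum. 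I would rule this out by invoking the concrete form $w_3(s_i,s_j)=200/d_{\sigma_i,\sigma_j}$: inserting an intermediate marker~$k$ replaces one term $200/d_{i,j}$ by $200/d_{i,k}+200/d_{k,j}$ with $d_{i,k},d_{k,j}<d_{i,j}$, which strictly increases the objective. Hence at an optimum the support of~$z$ is exactly the set of true switchovers, and $\sum_{i<j}h_{i,j}\,w_3(s_i,s_j)=\sum_{(\sigma,\sigma')\in\consSwitches_{\mathcal L_C}}w_3(\sigma,\sigma')$.

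Finally, for the reverse encoding I would start from an arbitrary labeling~$\mathcal L'$, set $x_\ell$ to its indicator, $y_{\ell,\ell'}=x_\ell x_{\ell'}$, let $z_i$ indicate the switchovers of~$\mathcal L'$, and set each $h_{i,j}$ to the indicator of consecutive switchovers; one checks this assignment satisfies~(\ref{ilp:c1})--(\ref{ilp:c6}) and has objective $\sum_C w(\mathcal L'_C)=w(\mathcal L')$, so the optimal objective is at most~$w(\mathcal L')$. Together with $w(\mathcal L)=\sum_C w(C)=\text{obj}^\ast$ from the first direction, this gives $w(\mathcal L)\le w(\mathcal L')$ for every labeling~$\mathcal L'$, so the induced~$\mathcal L$ is optimal. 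The only genuinely nontrivial ingredient is the sub-additivity argument for~$w_3$; everything else is a routine matching of constraints to the definition of a labeling and its cost.
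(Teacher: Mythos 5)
Your proposal is correct and follows essentially the same route as the paper's proof: validity of~$\mathcal L$ from Constraints~(\ref{ilp:c1})--(\ref{ilp:c3}), a term-by-term matching of the $w_1$-, $w_2$- and $w_3$-contributions with the objective~(\ref{ilp:objective}) using minimality to pin down the auxiliary variables, and the (in the paper only implicit) reverse encoding of an arbitrary labeling as a feasible assignment of equal cost. You are in fact more careful than the paper on the one delicate point: the paper asserts that Constraint~(\ref{ilp:c6}) forces $h_{i,j}=1$ for consecutive switchovers without ruling out spurious markers $z_k=1$ at non-switchover positions $i<k<j$ (which the constraints never forbid and which would relax that lower bound), whereas your argument that $w_3(s_i,s_k)+w_3(s_k,s_j)>w_3(s_i,s_j)$ for $w_3=200/d$ closes exactly that hole.
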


The approach~\ILPALG simply replaces the second phase of \DYNALG by
that ILP formulation. Hence, it solves the second phase optimally. The
approach \SCALEALG samples a predefined scaling range
$[x_\textit{min},x_\textit{max}]$, which is also used by \DYNALG. For
each scale $x$ it scales the candidates correspondingly. Using the ILP
formulation it then checks whether the candidates admit a
labeling. Hence, we approximately
obtain the greatest scaling factor that admits a labeling.

\subsection{Greedy-Algorithm}
The algorithm \GREEDYALG replaces the dynamic programming approach in
our workflow as follows.  Starting with the solution~$\mathcal L$
enforced by the 1st step, the greedy algorithm iterates once through
the stops of~$C$. For each stop~$s$ of~$C$ it selects the
candidate~$\ell \in \cands_s$ that minimizes
$w_1(\ell)+w_2(\ell_p,\ell)+w_2(\ell,\ell_s)$ among all valid
candidates in~$\cands_s$, where~$\ell_p\in \mathcal L$ is the
candidate selected for the previous stop~$s_p$ and $\ell_s\in\mathcal
L$ is the candidate for the successive stop. It replaces the candidate
of~$s$ in $\mathcal L$ with~$\ell$.

\section{Evaluation}\label{sec:evaluation}
To evaluate our approach presented in Section~\ref{sec:multi-lines},
we did a case study on the metro systems of Sydney (173 stops) and
Vienna (84 stops), which have been used as benchmarks before
\cite{fink,metroMap1,wang2011}.  For Sydney we took the curved layout
from~\cite[Fig.~1a]{fink} and the octilinear layouts
from~\cite[Fig.~9a,b]{metroMap1}\cite[Fig.~10.]{wang2011}, while for
Vienna we took the curved layout from~\cite[Fig.~8c]{fink} and the
ocitlinear layouts from~\cite[Fig.~13a,b]{metroMap1}. See also
Table~\ref{table:instances} for an overview of the instances. Since
the metro lines of Sydney are not only paths, we disassembled those
metro lines into single paths.  We did this by hand and tried to
extract as long paths as possible. Hence, the instances of Sydney
decompose into 12 lines and the instances of Vienna into 5 lines.  We
took the positions of the stops as presented in the corresponding
papers. In the curved layout of Sydney we removed the stops
\emph{Tempe} and \emph{Martin Place} (in
Fig.~\ref{fig:labeligns:paper:sydney-curved-dp} marked as red dots),
because both stops are tightly enclosed by metro lines such that only
the placement of very small labels is possible. This is not so much a
problem of our approach, but of the given layout. In a semi-automatic
approach the designer would then need to change the layout. For the
curvilinear layouts we used labels of \CurvedLabels and for the
octilinear layouts we used labels of \OctiLabels. For the layouts of
\emph{Sydney2}, \emph{Sydney3} and \emph{Vienna2} the authors present
labelings; see Fig.~\ref{fig:labeling:sydneyB},
Fig.~\ref{fig:labeling:comparison:complete} and
Fig.~\ref{fig:labeling:comparison:vienna}. For any other layout they
do not present labelings.

\begin{table}
  \centering
  \caption{Overview of considered instances. \emph{Style:} Style of map and applied labels. $\frac{s}{s_\textit{max}}$: Ratio of applied
scale factors.  The scale~$s_\textit{max}$ is a lower bound for the
largest possible scaling factor (obtained by \SCALEALG), and $s$ is the
scale computed by the first phase of our workflow. }
  \label{table:instances}
  \begin{tabular}{cccccccc}
    \toprule
    \textbf{Instance}\quad &\quad \emph{Sydney1}\quad&\quad \emph{Sydney2} \quad &\quad \emph{Sydney3} \quad &\quad \emph{Sydney4}\quad & \quad \emph{Vienna1} \quad &\quad \emph{Vienna2}\quad &\quad \emph{Vienna3}\quad\\
    \midrule
    
     \textit{Style} & Octi. & Octi. & Octi. & Curved & Octi. & Octi. & Curved.\\    
     $\frac{s}{s_\textit{max}}$& 0.69 & 0.57 & 0.57 & 0.67 & 0.59 & 0.81 & 0.54\\
\textit{Reference} & \cite[Fig. 9a]{metroMap1} & \cite[Fig. 9b]{metroMap1} &\cite[Fig.~10.]{wang2011} & \cite[Fig.~1a]{fink} & \cite[Fig. 13a]{metroMap1} & \cite[Fig. 13b]{metroMap1} & \cite[Fig.~8c]{fink} \\
    \bottomrule
  \end{tabular}
\end{table}

\newcommand{\SKIP}{$\star$}
\begin{table}[]
\centering
\caption{Running times in seconds of the workflow broken down into its two phases and their single steps (if applicable). \emph{Algo.}: The applied algorithm. Times less than 0.01 seconds are marked with~\SKIP.  }
\label{table:running-time}
\begin{tabular}{ccccccccccc}
\toprule
         &      &Algo.  & \multicolumn{3}{c}{\textit{Phase 1: Creation}}                    &        \multicolumn{3}{c}{\textit{Phase 2: Selection}} &  \\ \cmidrule(lr){4-6}\cmidrule(lr){7-9}
\textit{Instance} &Layout && \textit{1st}                  & \textit{2nd}                  & $\sum$                & \textit{1st}           & \textit{2nd}           & $\sum$           & \quad&  \textbf{Total}     \\\midrule
%% RUNNING TIME %%
% INSTANCE: Sydney1
\multirow{4}{*}{Sydney1} & \multirow{4}{*}{Octi.} & \DYNALG & 0.03 & 0.18 & 0.21 & 0.09 & 0.1 & 0.18 & & \textbf{0.39} \\
 & & \GREEDYALG & 0.02 & 0.18 & 0.2 & \SKIP & \SKIP & \SKIP & & \textbf{0.2} \\
 & & \SCALEALG & 0.02 & -- & 0.02 & -- & -- & 69.09 & & \textbf{69.11} \\
 & & \ILPALG & 0.02 & 0.17 & 0.19 & -- & -- & 25.43 & & \textbf{25.62} \\
\midrule
% INSTANCE: Sydney2
\multirow{4}{*}{Sydney2} & \multirow{4}{*}{Octi.} & \DYNALG & 0.03 & 0.28 & 0.31 & 0.07 & 0.09 & 0.15 & & \textbf{0.46} \\
 & & \GREEDYALG & 0.02 & 0.28 & 0.3 & \SKIP & \SKIP & \SKIP & & \textbf{0.3} \\
 & & \SCALEALG & 0.02 & -- & 0.02 & -- & -- & 17.29 & & \textbf{17.31} \\
 & & \ILPALG & 0.02 & 0.28 & 0.29 & -- & -- & 15.17 & & \textbf{15.46} \\
\midrule
% INSTANCE: Sydney3
\multirow{4}{*}{Sydney3} & \multirow{4}{*}{Octi.} & \DYNALG & 0.03 & 0.17 & 0.2 & 0.07 & 0.09 & 0.16 & & \textbf{0.36} \\
 & & \GREEDYALG & 0.02 & 0.16 & 0.17 & \SKIP & \SKIP & \SKIP & & \textbf{0.17} \\
 & & \SCALEALG & 0.02 & -- & 0.02 & -- & -- & 33.61 & & \textbf{33.63} \\
 & & \ILPALG & 0.02 & 0.15 & 0.17 & -- & -- & 21.56 & & \textbf{21.73} \\
\midrule
% INSTANCE: Sydney4
\multirow{4}{*}{Sydney4} & \multirow{4}{*}{Curved} & \DYNALG & 0.03 & 0.33 & 0.36 & 0.1 & 0.1 & 0.2 & & \textbf{0.56} \\
 & & \GREEDYALG & 0.03 & 0.33 & 0.35 & \SKIP & \SKIP & \SKIP & & \textbf{0.35} \\
 & & \SCALEALG & 0.03 & -- & 0.05 & -- & -- & 243.58 & & \textbf{243.63} \\
 & & \ILPALG & 0.02 & 0.34 & 0.36 & -- & -- & 98.81 & & \textbf{99.17} \\
\midrule
% INSTANCE: Vienna1
\multirow{4}{*}{Vienna1} & \multirow{4}{*}{Octi.} & \DYNALG & 0.01 & 0.06 & 0.07 & 0.01 & \SKIP & 0.02 & & \textbf{0.09} \\
 & & \GREEDYALG & 0.01 & 0.06 & 0.07 & \SKIP & \SKIP & \SKIP & & \textbf{0.07} \\
 & & \SCALEALG & 0.01 & -- & 0.01 & -- & -- & 3.31 & & \textbf{3.32} \\
 & & \ILPALG & 0.01 & 0.06 & 0.08 & -- & -- & 0.55 & & \textbf{0.63} \\
\midrule
% INSTANCE: Vienna2
\multirow{4}{*}{Vienna2} & \multirow{4}{*}{Octi.} & \DYNALG & 0.01 & 0.09 & 0.1 & \SKIP & \SKIP & \SKIP & & \textbf{0.11} \\
 & & \GREEDYALG & 0.01 & 0.1 & 0.11 & \SKIP & \SKIP & \SKIP & & \textbf{0.11} \\
 & & \SCALEALG & 0.01 & -- & 0.01 & -- & -- & 4.65 & & \textbf{4.66} \\
 & & \ILPALG & 0.01 & 0.08 & 0.09 & -- & -- & 0.36 & & \textbf{0.45} \\
\midrule
% INSTANCE: Vienna3
\multirow{4}{*}{Vienna3} & \multirow{4}{*}{Curved} & \DYNALG & 0.02 & 0.15 & 0.17 & 0.02 & 0.04 & 0.06 & & \textbf{0.23} \\
 & & \GREEDYALG & 0.02 & 0.14 & 0.16 & \SKIP & \SKIP & \SKIP & & \textbf{0.16} \\
 & & \SCALEALG & 0.02 & -- & 0.02 & -- & -- & 7.27 & & \textbf{7.29} \\
 & & \ILPALG & 0.01 & 0.13 & 0.15 & -- & -- & 0.49 & & \textbf{0.64} \\
\bottomrule
\end{tabular}
\end{table}

\begin{table}[]
\centering
\caption{Experimental results for Sydney and Vienna. \emph{Algo.}: The applied algorithm. \emph{Candidates}: Values concerning candidates; No.\ of candidates after the first and third step, \emph{A\ref{assumption:separation}\ref{assumption:transitivity-property}}= No.\  of labels removed to establish Assumption~\ref{assumption:separation} and Assumption~\ref{assumption:transitivity-property}, SO= No.\ of switchovers. \emph{Cost}: Ratio of costs; $\mathcal L_{A}=$ labeling obtained by procedure $A\in\{\DYNALG,\GREEDYALG,\ILPALG,\SCALEALG\}$,~$\mathcal L=$ labeling obtained by procedure \ILPALG. \emph{Sequence}: Values concerning sequences of labels lying on the same side of their metro line; \emph{min}=length of shortest sequence, \emph{max}=length of longest sequence, \emph{avg.}=average length of sequences.
}
\label{table:quality}
\begin{tabular}{cclccclccccccc}
\toprule
         &        &      & \multicolumn{4}{c}{Candidates}                         & \multicolumn{4}{c}{Cost}                                                                                                                                                        & \multicolumn{3}{c}{Sequence} \\ \cmidrule(lr){4-7} \cmidrule(lr){8-11} \cmidrule(lr){12-14}
Instance & Layout & Algo. & 1st & 3rd                  & A12                  & SO & $\frac{w_1(\mathcal L_A)}{w_1(\mathcal L)}$ & $\frac{w_2(\mathcal L_A)}{w_2(\mathcal L)}$ & $\frac{w_3(\mathcal L_A)}{w_3(\mathcal L)}$ & $\frac{w(\mathcal L_A)}{w(\mathcal L)}$ & min     & max     & avg.     \\\midrule
%% QUALITY %%
% INSTANCE: Sydney1
\multirow{4}{*}{Sydney1} & \multirow{4}{*}{Octi.} & \DYNALG & 1164 & 1004 & 9 & 4 & 1.73 & 1.12 & 0.63 & 1.16 & 3 & 41 & 9.53\\
 & & \GREEDYALG & 1164 & 1005 & 9 & 15 & 1.73 & 1.87 & 7.62 & 2.64 & 1 & 24 & 5.81\\
 & & \SCALEALG & 1164 & -- & -- & 13 & 2.46 & 1.67 & 2.91 & 1.97 & 3 & 20 & 6.97\\
 & & \ILPALG & 1164 & -- & -- & 7 & 1 & 1 & 1 & 1 & 3 & 20 & 8.1\\
\midrule
% INSTANCE: Sydney2
\multirow{4}{*}{Sydney2} & \multirow{4}{*}{Octi.} & \DYNALG & 1104 & 963 & 3 & 6 & 1.36 & 1.08 & 3.25 & 1.33 & 1 & 26 & 9.34\\
 & & \GREEDYALG & 1104 & 964 & 3 & 19 & 2.46 & 1.83 & 19.52 & 3.54 & 1 & 35 & 5.94\\
 & & \SCALEALG & 1104 & -- & -- & 15 & 3.91 & 1.92 & 8.88 & 2.91 & 1 & 19 & 6.76\\
 & & \ILPALG & 1104 & -- & -- & 5 & 1 & 1 & 1 & 1 & 3 & 25 & 9.48\\
\midrule
% INSTANCE: Sydney3
\multirow{4}{*}{Sydney3} & \multirow{4}{*}{Octi.} & \DYNALG & 1132 & 1013 & 0 & 6 & 1 & 1 & 1 & 1 & 3 & 25 & 8.76\\
 & & \GREEDYALG & 1132 & 1013 & 0 & 14 & 2.15 & 1.79 & 5.99 & 2.61 & 1 & 23 & 6.9\\
 & & \SCALEALG & 1132 & -- & -- & 12 & 2.72 & 2.68 & 3.06 & 2.76 & 2 & 21 & 7.02\\
 & & \ILPALG & 1132 & -- & -- & 6 & 1 & 1 & 1 & 1 & 3 & 25 & 8.76\\
\midrule
% INSTANCE: Sydney4
\multirow{4}{*}{Sydney4} & \multirow{4}{*}{Curved} & \DYNALG & 1275 & 1020 & 2 & 8 & 1 & 1.13 & 1.15 & 1.01 & 3 & 27 & 7.38\\
 & & \GREEDYALG & 1275 & 1021 & 2 & 28 & 1 & 1.08 & 12.81 & 1.16 & 1 & 24 & 4.82\\
 & & \SCALEALG & 1275 & -- & -- & 13 & 1.05 & 1.69 & 2.28 & 1.1 & 2 & 20 & 6.21\\
 & & \ILPALG & 1275 & -- & -- & 7 & 1 & 1 & 1 & 1 & 3 & 27 & 7.92\\
\midrule
% INSTANCE: Vienna1
\multirow{4}{*}{Vienna1} & \multirow{4}{*}{Octi.} & \DYNALG & 466 & 403 & 0 & 2 & 0.74 & 1.29 & 0.99 & 1.07 & 8 & 21 & 12.8\\
 & & \GREEDYALG & 466 & 403 & 0 & 7 & 1.26 & 2.43 & 8.59 & 2.48 & 1 & 21 & 9.41\\
 & & \SCALEALG & 466 & -- & -- & 6 & 2.04 & 3.14 & 6.22 & 2.99 & 1 & 19 & 10.05\\
 & & \ILPALG & 466 & -- & -- & 2 & 1 & 1 & 1 & 1 & 8 & 21 & 12.8\\
\midrule
% INSTANCE: Vienna2
\multirow{4}{*}{Vienna2} & \multirow{4}{*}{Octi.} & \DYNALG & 468 & 324 & 1 & 8 & 2.63 & 1.03 & 4.01 & 1.52 & 1 & 16 & 8.12\\
 & & \GREEDYALG & 468 & 324 & 1 & 18 & 4.79 & 1.72 & 14.23 & 3.36 & 1 & 12 & 4.31\\
 & & \SCALEALG & 468 & -- & -- & 6 & 2.63 & 1.07 & 2.55 & 1.4 & 1 & 12 & 7.87\\
 & & \ILPALG & 468 & -- & -- & 4 & 1 & 1 & 1 & 1 & 5 & 16 & 10\\
\midrule
% INSTANCE: Vienna3
\multirow{4}{*}{Vienna3} & \multirow{4}{*}{Curved} & \DYNALG & 606 & 535 & 0 & 7 & 1 & 1 & 1 & 1 & 2 & 22 & 10.7\\
 & & \GREEDYALG & 606 & 535 & 0 & 11 & 0.98 & 1.47 & 1.92 & 1.06 & 1 & 17 & 6.6\\
 & & \SCALEALG & 606 & -- & -- & 13 & 1.18 & 3.2 & 2.18 & 1.34 & 1 & 16 & 7.43\\
 & & \ILPALG & 606 & -- & -- & 7 & 1 & 1 & 1 & 1 & 2 & 22 & 10.7\\
\bottomrule       
\end{tabular}
\end{table}

The experiments were performed on a single core of an Intel(R)
Core(TM) i7-3520M CPU processor. The machine is clocked at 2.9 GHz,
and has 4096 MB RAM. Our implementation is written in Java. For each
instance and each algorithm we conducted 100 runs and took the average
running times. Each time before we started the 100 runs, we performed
50 runs without measuring the running time in order to \emph{warm up}
the virtual machine (Java(TM) SE Runtime Environment, build
1.8.0$\_$60-b27, Oracle). We did this in order to measure the actual
running times of our algorithms and not to measure the time that the
virtual machine of Java spends for loading classes and optimizing byte
code.

\newcommand{\scaleLabelings}{0.47}
\begin{figure*}[t]
  \centering

  \subfigure[\emph{Sydney4}: \DYNALG, \CurvedLabels\newline {Layout from \cite[Fig.~1a.]{fink}.}]{
    \label{fig:labeligns:paper:sydney-curved-dp} \includegraphics[width=\scaleLabelings\textwidth]{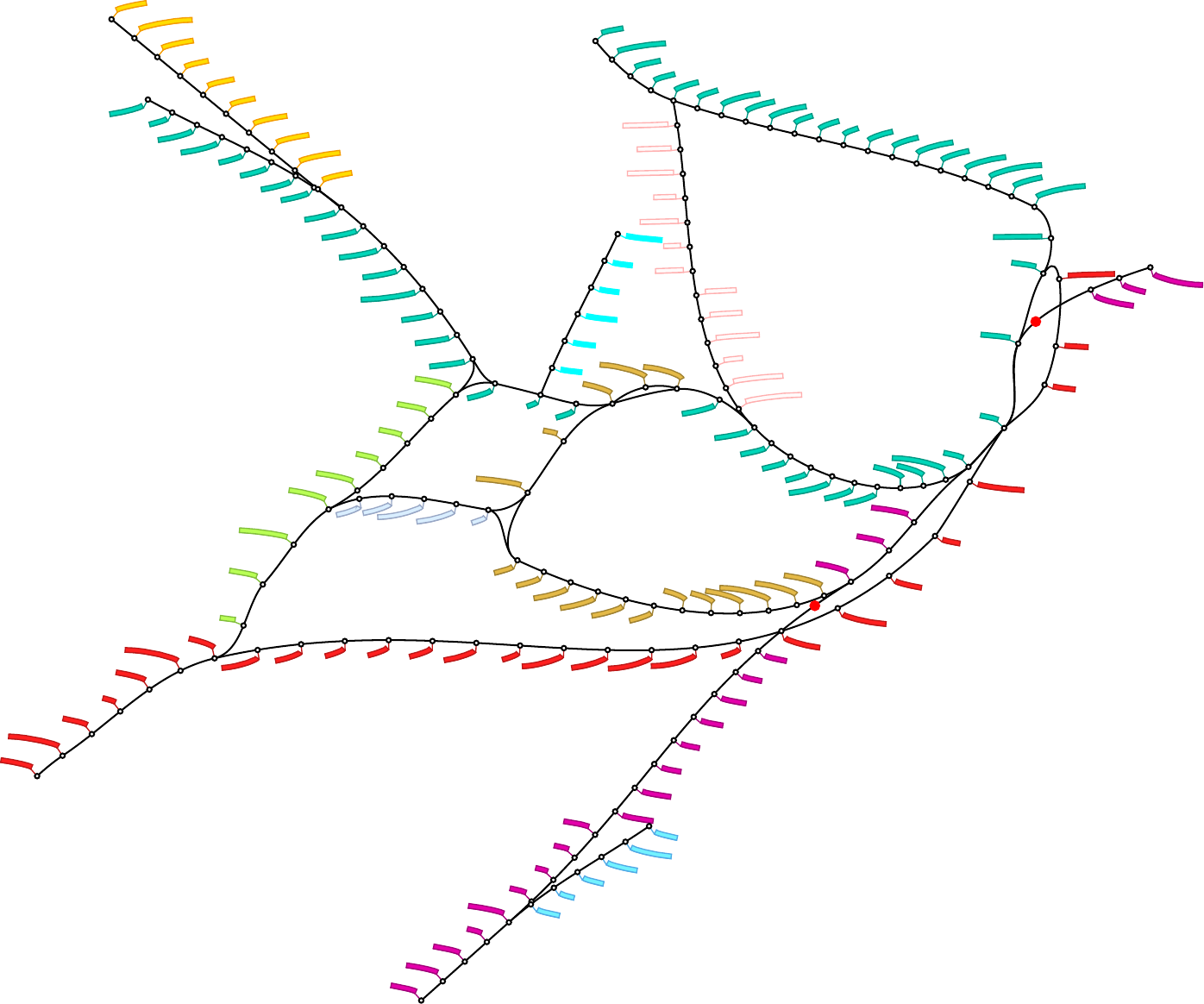}
  }
  \subfigure[\emph{Sydney4}: \GREEDYALG, \CurvedLabels\newline {Layout from \cite[Fig. 1a]{fink}.}]{
   \label{fig:labeligns:paper:sydney-curved-greedy} \includegraphics[width=\scaleLabelings\textwidth]{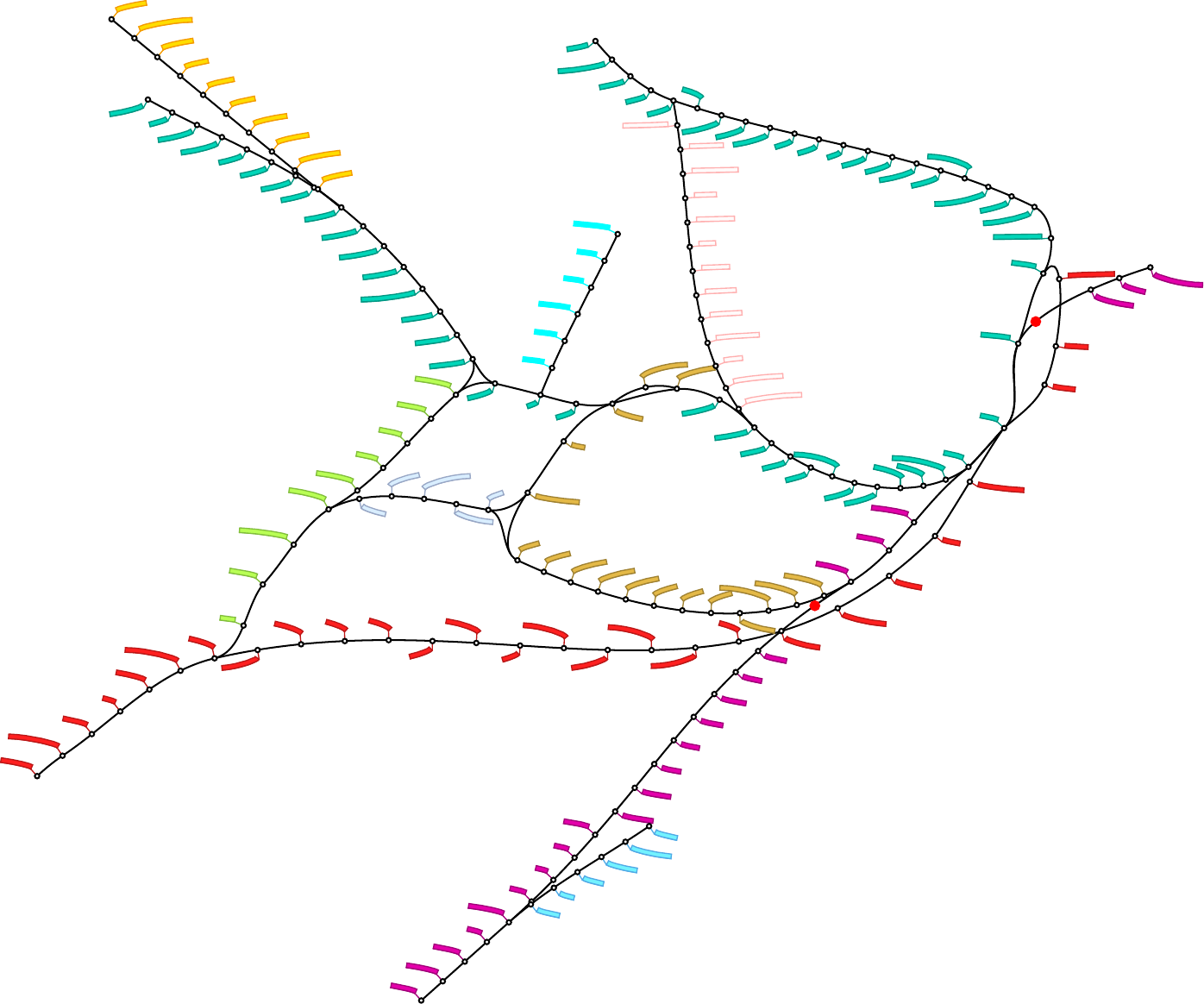}
  }

  \caption{Labelings for Sydney using \DYNALG and \GREEDYALG.}
  \label{fig:labeling:sydneyA}
\end{figure*}

Table~\ref{table:running-time} and Table~\ref{table:quality} present
our quantitative results for the considered instances. For
\emph{Sydney4} labelings are found in
Fig.~\ref{fig:labeling:sydneyA} and for \emph{Sydney2}, \emph{Sydney3} and \emph{Vienna2} labelings
created by \DYNALG are found in Fig.~\ref{fig:labeling:sydneyB}, Fig.~\ref{fig:labeling:comparison:complete}, Fig.~\ref{fig:labeling:comparison:vienna}, respectively.  The
labelings of all instances are found in the appendix.

We first note that with respect to the total number of created
candidates only few labels are removed for enforcing
Assumption~\ref{assumption:separation} and
Assumption~\ref{assumption:transitivity-property}; see
Table~\ref{table:quality},
\emph{A\ref{assumption:separation}\ref{assumption:transitivity-property}}. This
indicates that requiring those assumptions is not a real restriction
on a realistic set of candidates, even though they seem to be
artificial.
\begin{figure*}[t]
  \centering
  \subfigure[Original labeling by \cite{metroMap1}]{
    \includegraphics[width=\scaleLabelings\textwidth]{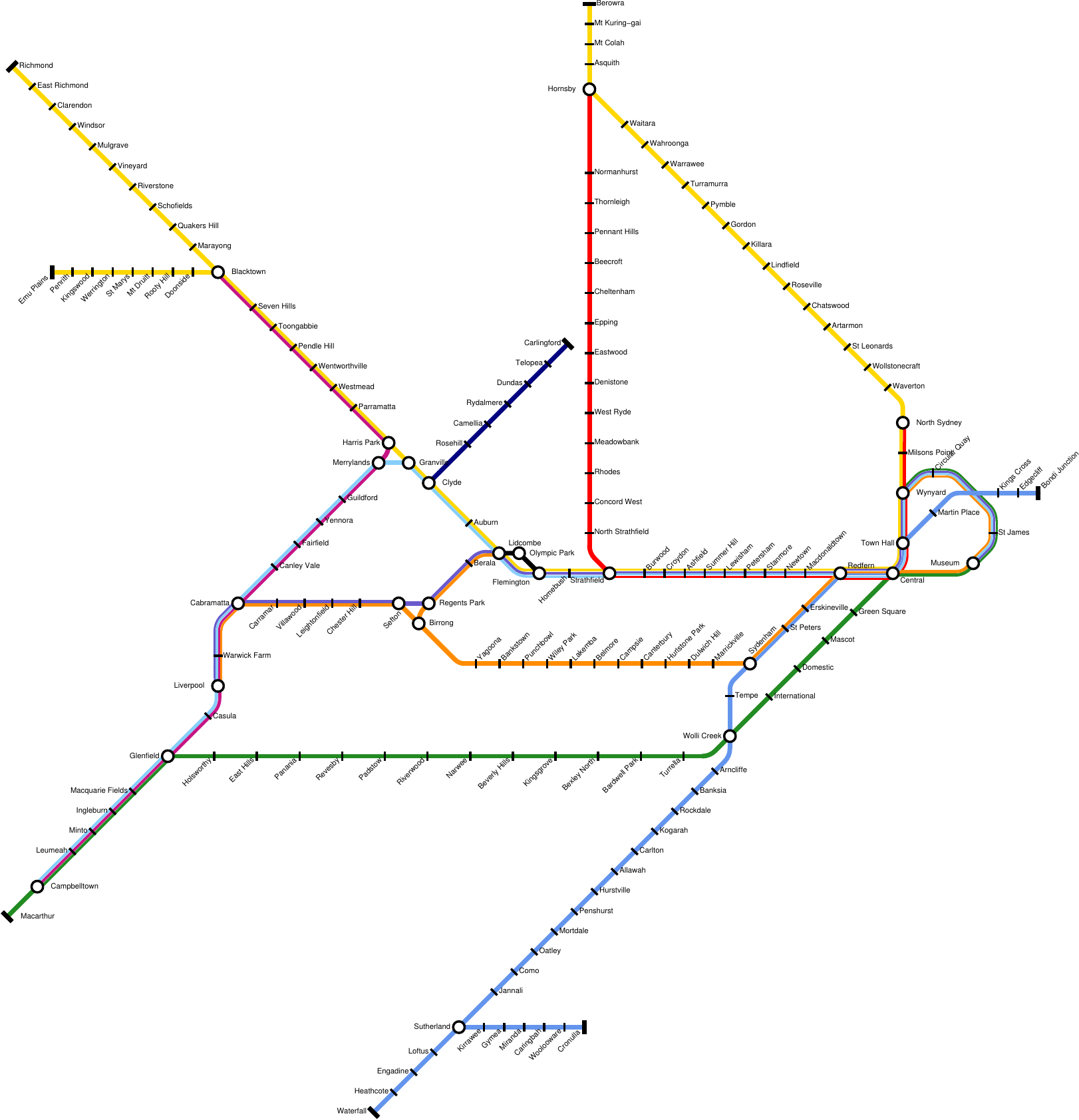}
    \label{fig:labeling:paper:sydney_original_labeling}
  }
  \subfigure[\emph{Sydney2}: \DYNALG, \OctiLabels\newline {Layout from \cite[Fig.~9b.]{metroMap1}.}]{
    \includegraphics[width=\scaleLabelings\textwidth]{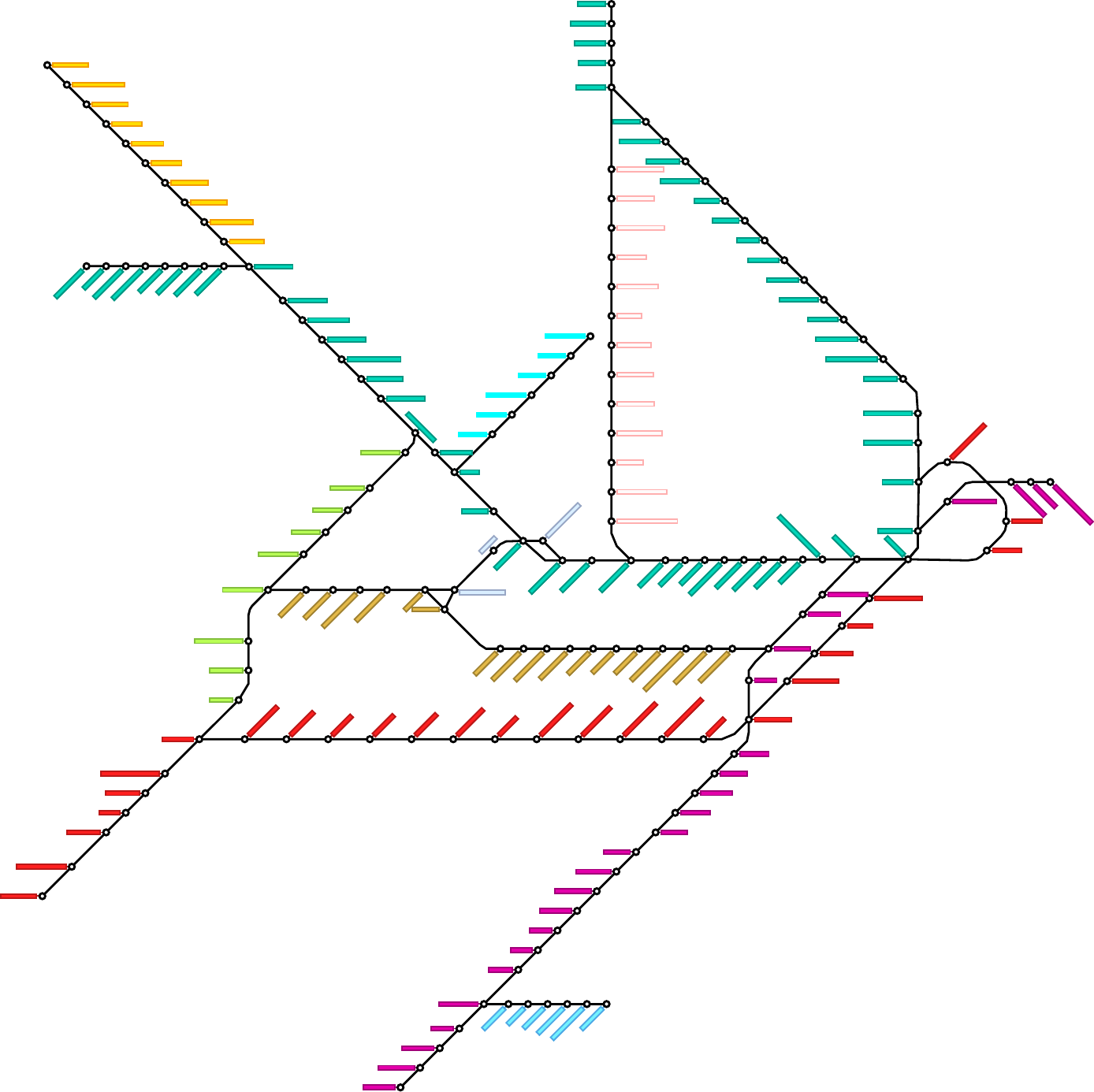}
    \label{fig:labeling:paper:sydney-curved-greedy}
  }
  \caption{Labelings for Sydney.}
  \label{fig:labeling:sydneyB}
\end{figure*}

\textbf{Running time.}  Even for large networks as Sydney, our
algorithm \DYNALG needs less than 0.6 seconds; see
Table~\ref{table:running-time}. This shows that our approach is
applicable for scenarios in which the map designer wants to adapt the
layout and its labeling interactively. In particular in those
scenarios not every of the four steps must be repeated each time,
which improves computing time. For example, after once applying the
scaling step (1st phase, 2nd step -- the most time consuming step),
the instance does not need to be rescaled again, but the relation
between label size and map size is determined.  Further,~\DYNALG is
only moderately slower than \GREEDYALG; $0.21$ seconds in maximum, see
Table~\ref{table:running-time}, \emph{Sydney4}. On the other hand, the
approaches \ILPALG and \SCALEALG are not alternatives, because their running times are
much worse; over 1 minute in maximum; see
Table~\ref{table:running-time}, \emph{Sydney4}.

\textbf{Quality.}  We observe that in all labelings created by \DYNALG
there are only few switchovers, namely 4--8; see
Table~\ref{table:quality}, column SO. Hence, there are long
\emph{sequences} of consecutive labels that lie on the same side of
their metro line; see corresponding figures and
Table~\ref{table:quality}, column Sequence. Together with the ILP
based approach \ILPALG, it yields the solution with the longest
sequences in average. In particular the switchovers are placed such
that those sequences are regularly sized. The labels of a single
sequence are mostly directed into the same~$x$-direction and in
particular they are similarly shaped so that those sequences of labels
form regular patterns as desired. The alignment of the labels is
chosen so that they blend in with the alignment of their adjacent
labels. In comparison with the solution of \ILPALG, the costs of
\DYNALG never exceed a factor of 1.52; see Table~\ref{table:quality},
column $\frac{w(\mathcal L_\DYNALG)}{w(\mathcal L)}$. For the
instances \emph{Sydney4} and \emph{Vienna3} it even obtains a solution
with the same costs. For the other instances, \DYNALG basically spends
its additional costs on the choice of the single labels
($\frac{w_1(\mathcal L_\DYNALG)}{w_1(\mathcal L)}$) and the distance
of switchovers ($\frac{w_3(\mathcal L_\DYNALG)}{w_3(\mathcal L)}$).

\begin{figure*}[t]
  \centering \subfigure[]{
    \includegraphics[scale=0.73]{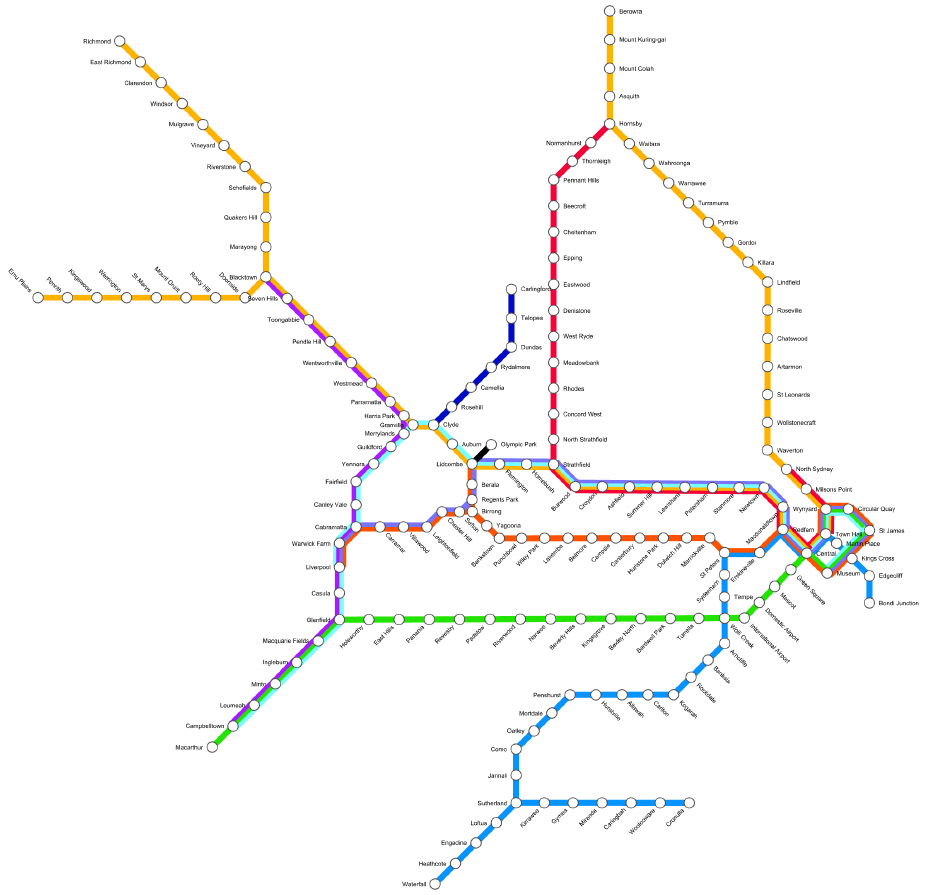}
    \label{fig:labeling:comparison:wang:complete}
  }
  \subfigure[]{
    \includegraphics[scale=0.48]{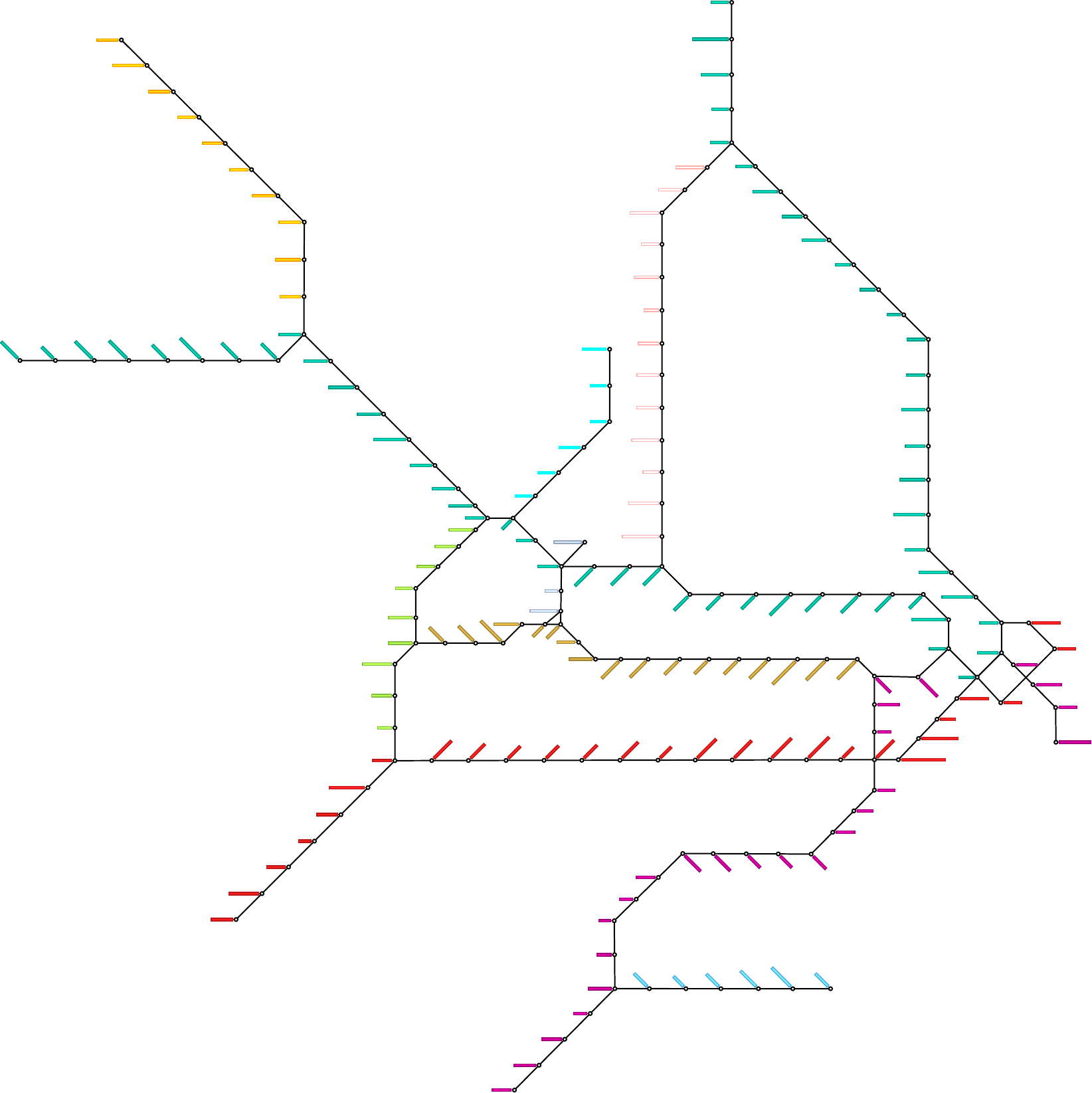}
    \label{fig:labeling:comparison:dp:complete}
  }
  \caption{Labelings for Sydney. \subref{fig:labeling:comparison:wang:complete} Original Labeling presented by Wang and Chi~\cite{wang2011} \subref{fig:labeling:comparison:dp:complete} \emph{Sydney3}: \DYNALG.}
\label{fig:labeling:comparison:complete}
\end{figure*}

\begin{figure}[!ht]
  \centering \subfigure[]{
   \includegraphics[scale=0.72]{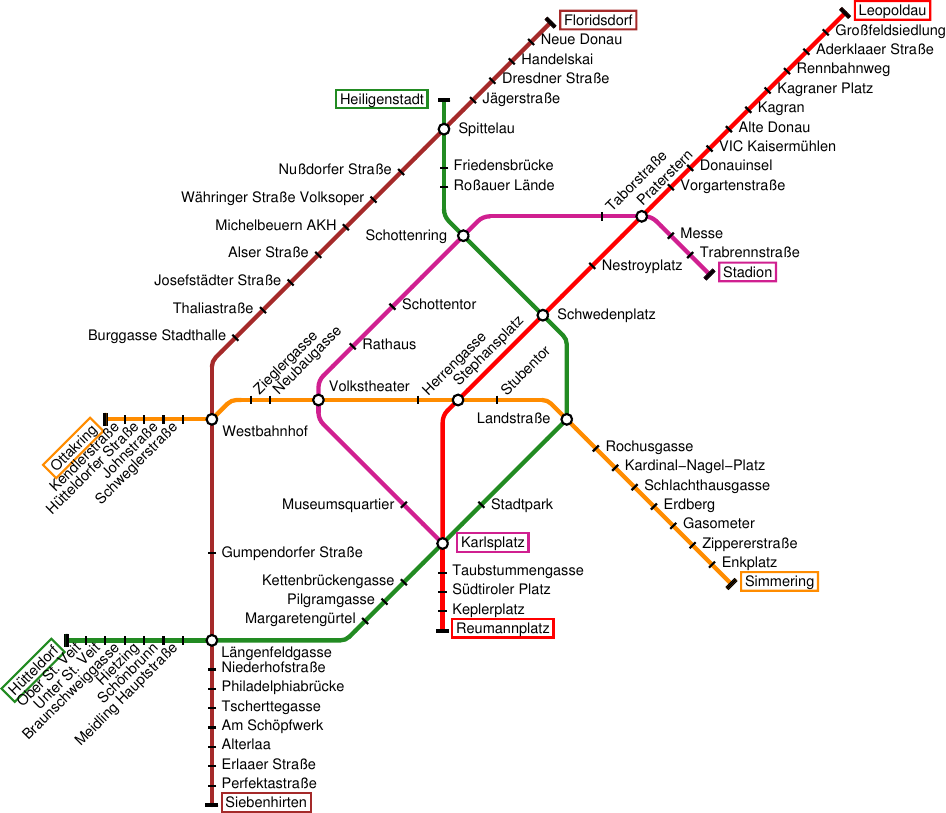}
    \label{fig:labeling:comparison:vienna:original}
  }
  \subfigure[]{
    \includegraphics[scale=0.5]{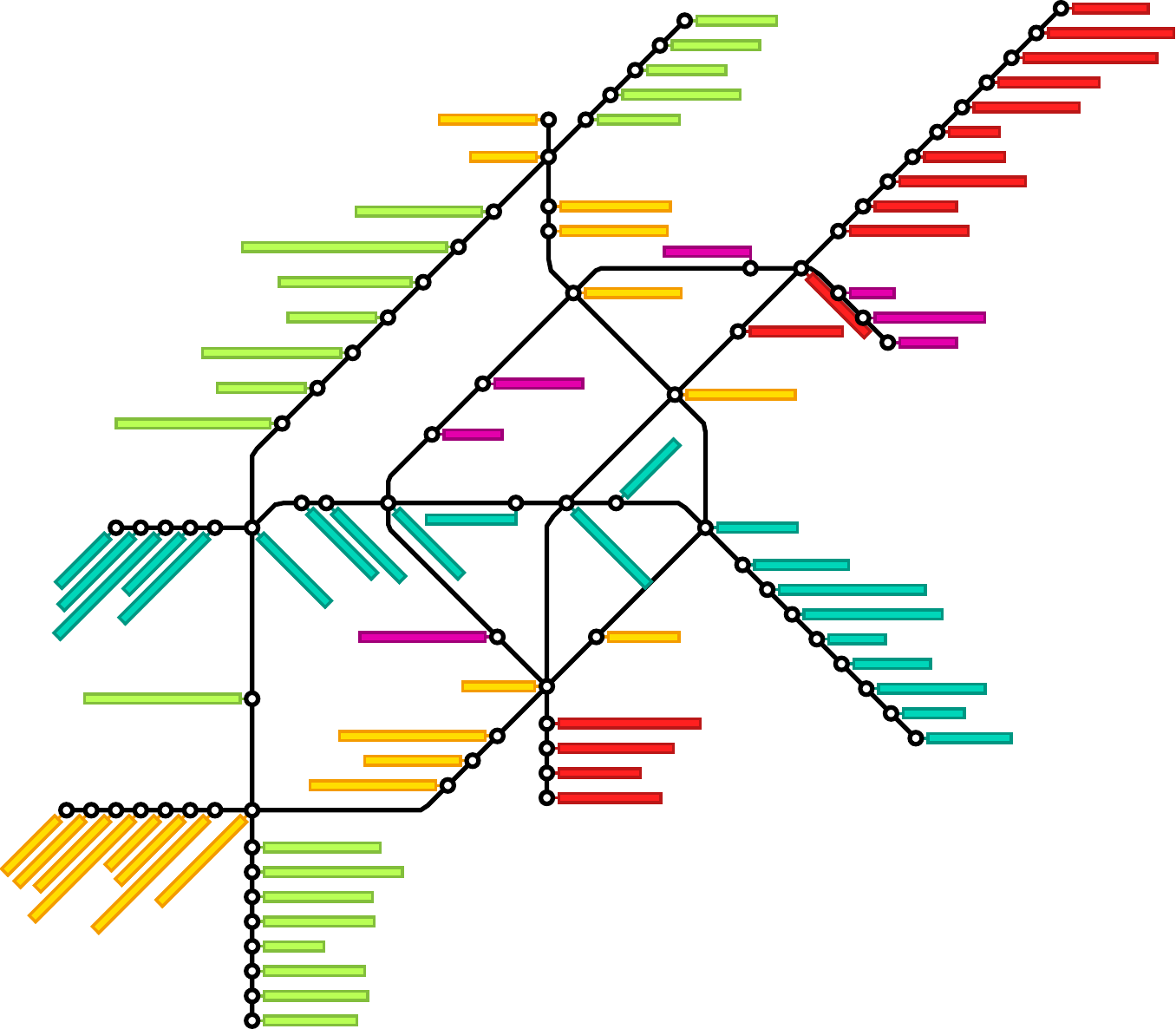}
    \label{fig:labeling:comparison:vienna:dp}
  }
  \caption{Labelings for Vienna. \subref{fig:labeling:comparison:vienna:original} Original Labeling presented by Nöllenburg and Wolff~\cite{metroMap1} \subref{fig:labeling:comparison:vienna:dp} \emph{Vienna2}: \DYNALG.}
\label{fig:labeling:comparison:vienna}
\end{figure}

In contrast, \GREEDYALG yields significantly more switchovers; in
maximum 20 switchovers more than \DYNALG, see \emph{Sydney4} . Consequently, there
are many distracting switches of labels from one side to the other of
the metro line; e.g. see Fig.~\ref{fig:labeling:sydneyA}. Although the
sequences of consecutive labels lying on the same side may be longer
in maximum compared to \DYNALG, they are much shorter in average; see
Table~\ref{table:quality}, column Sequence. Further, several adjacent
labels point in opposite $x$-directions, which results in distracting
effects; see corresponding figures. Altogether, the labelings that are
obtained by \GREEDYALG do not look regular, but cluttered. \DYNALG
solves those problems since it considers the metro line
\emph{globally} yielding an optimal labeling for a single line.  This
observation is also reflected in Table~\ref{table:quality}, column
$\frac{w(\mathcal L_\DYNALG)}{w(\mathcal L)}$, which shows that the
costs computed by \GREEDYALG are significantly larger than the costs
computed by~\DYNALG. In particular costs for positioning the
switchovers are much worse; \emph{Sydney2}: $\nicefrac{w_3(\mathcal
  L_\GREEDYALG)}{w_3(\mathcal L)}=19.52$, $\nicefrac{w_3(\mathcal
  L_\DYNALG)}{w_3(\mathcal L)}=3.25$ and \emph{Vienna2}:
$\nicefrac{w_3(\mathcal L_\GREEDYALG)}{w_3(\mathcal L)}=14.23$,
$\nicefrac{w_3(\mathcal L_\DYNALG)}{w_3(\mathcal L)}=4.01$.  Hence, the
better quality of \DYNALG prevails the slightly better running time of
\GREEDYALG.

Concerning the computed scale factor in the first phase of \DYNALG,
the labels are smaller than those produced by \SCALEALG by a factor of
0.54--0.81; see Table~\ref{table:instances}. While this seems to be a
drawback on the first sight, the smaller size provides necessary space
that is used to obtain a labeling of higher quality with respect to
the number and the placement of switchovers. Hence, the solutions
of \SCALEALG have more switchovers (except for \emph{Vienna2}) and
shorter sequences of labels lying on the same side in average than
\DYNALG; see Table~\ref{table:quality}, column SO and Sequence.

We observe that both Nöllenburg and Wolff's and our labelings of
Sydney look quite similar, whereas our labeling has less switchovers;
see Fig.~\ref{fig:labeling:sydneyB}. The same applies for the
labelings of the layout of Vienna; see
Fig.~\ref{fig:labeling:comparison:vienna}.  Recall that their approach
needed more than 10 hours to compute a labeled metro map of
Sydney. Since they need only up to 23 minutes to compute the layout
without labeling, it lends itself to first apply their approach to
gain a layout and then to apply our approach to construct a
corresponding labeling.

\begin{figure}[t]
  \centering \subfigure[]{
    \includegraphics[width=0.5\textwidth]{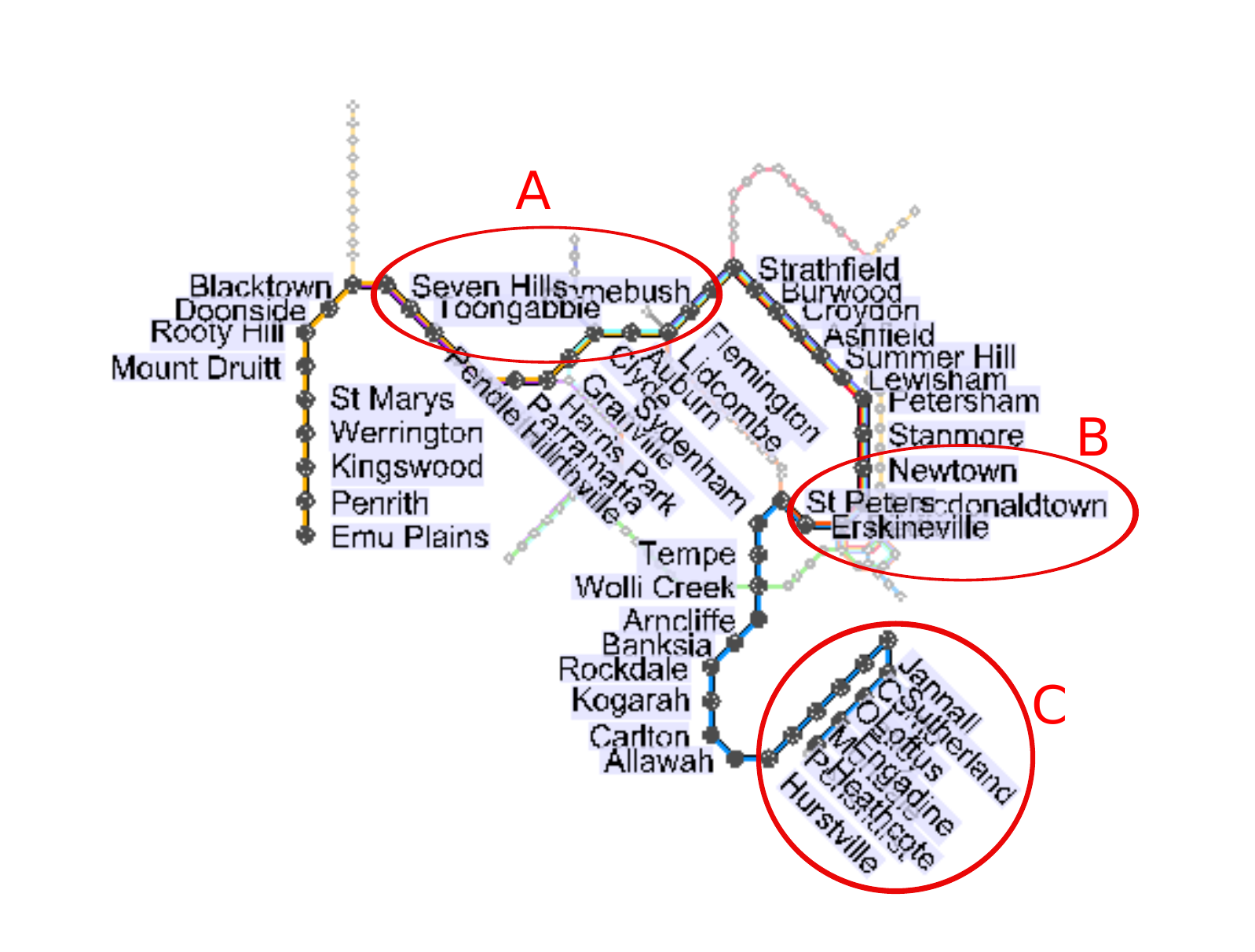}
    \label{fig:labeling:comparison:wang}
  }
  \subfigure[]{
\includegraphics[width=0.4\textwidth]{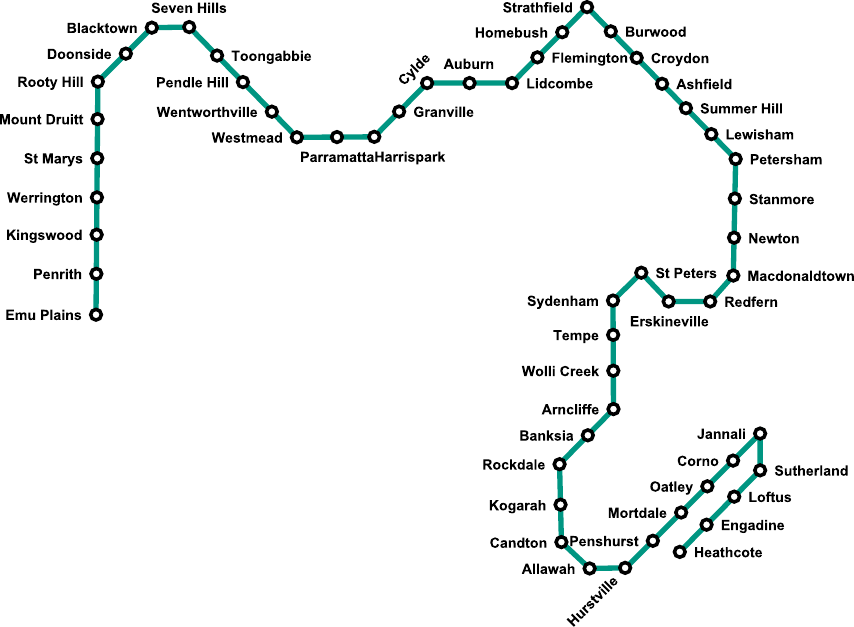}
    \label{fig:labeling:comparison:dp}
  }
  \caption{Comparison of two labelings for the same line. \subref{fig:labeling:comparison:wang} Labeling is created by the tool presented by Wang and Chi\cite{wang2011}. \subref{fig:labeling:comparison:dp} Labeling is created by \DYNALG.}
\label{fig:labeling:comparison}
\end{figure}

Wang and Chi present in their paper~\cite{wang2011} an approach that
is divided into two phases. In the first phase they compute the layout
of the metro map and then in the second phase they create a labeling
for that layout. For both steps they formulate energy functions
expressing their desired objectives, which then are locally
optimized. Figure~\ref{fig:labeling:comparison:wang:complete} shows
the metro map of Sydney created by their approach. In comparison,
Fig.~\ref{fig:labeling:comparison:dp:complete} shows the same layout
with a labeling created by our approach. Both labelings look quite
similar. While our approach needed 0.26s (see
Table~\ref{table:running-time}, \emph{Sydney3}), their approach needed
less than 0.1s on their machine. However, their approach does not
guarantee that the labels are occlusion-free, but labels may overlap
with metro lines and other labels. This may result in illegible
drawings of metro maps.  For example
Fig.~\ref{fig:labeling:comparison} shows two labelings of a metro line
of Sydney that has been laid out by the tool of Wang and
Chi. Figure~\ref{fig:labeling:comparison:wang} shows a labeling that
has been created by their tool, while
Fig.~\ref{fig:labeling:comparison:dp} shows a labeling that has been
created by our approach. The labeling of Wang and Chi has several
serious defects that makes the map hardly readable.
The marked regions A, B and C show labels that overlap each
other. Hence, some of the labels are obscured partly, while some of
the labels are completely covered by other labels. For example in
region B the label \emph{St. Peters} and the label \emph{Erskinville}
overlap the label \emph{Macdonaldtown} such that it is hardly
viewable. Further, region~C contains two diagonal rows of stops
aligned parallel. While the upper row is visible, the lower row is
almost completely covered by labels. Further, the labels of the upper
row obscure the labels of the lower row. In contrast our approach
yields an occlusion-free labeling, such that each label and each stop
is easily legible. We therefore think that our approach is a
reasonable alternative for the labeling step of Wang and Chi's
approach. In particular, we think that the better quality of our
approach prevails the better running time of Wang and Chi's approach.

In conclusion our workflow is a reasonable alternative and improvement
for the approaches presented both by Nöllenburg and Wolff, and by Wang
and Chi. In the former case, our approach is significantly faster,
while in contrast to the latter case we can guarantee occlusion-free
labelings.

\paragraph{Acknowledgment.} We sincerely thank Herman Haverkort,
Arlind Nocaj, Aidan Slingsby and Jo Wood for helpful and interesting
discussions.

\bibliographystyle{abbrv}

\newpage
\appendix

\section{Labelings}

 \newcommand{\scaleA}{0.48}
  \begin{figure}[!h]
    \centering 
    \subfigure[ \DYNALG]{
\includegraphics[width=\scaleA\textwidth]{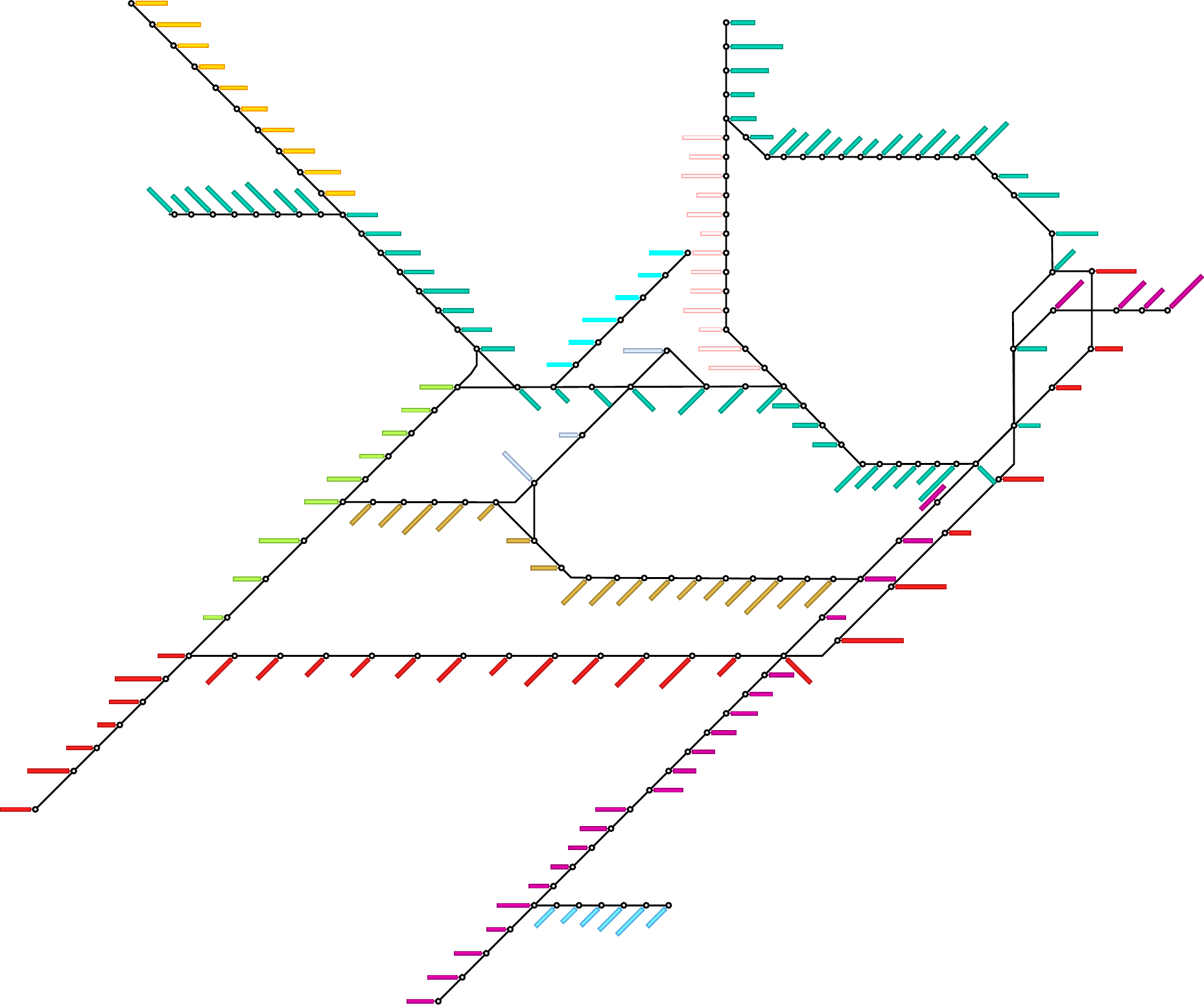}
      \label{fig:labelings:sydney1_dp}
   }
    \subfigure[ \GREEDYALG]{
\includegraphics[width=\scaleA\textwidth]{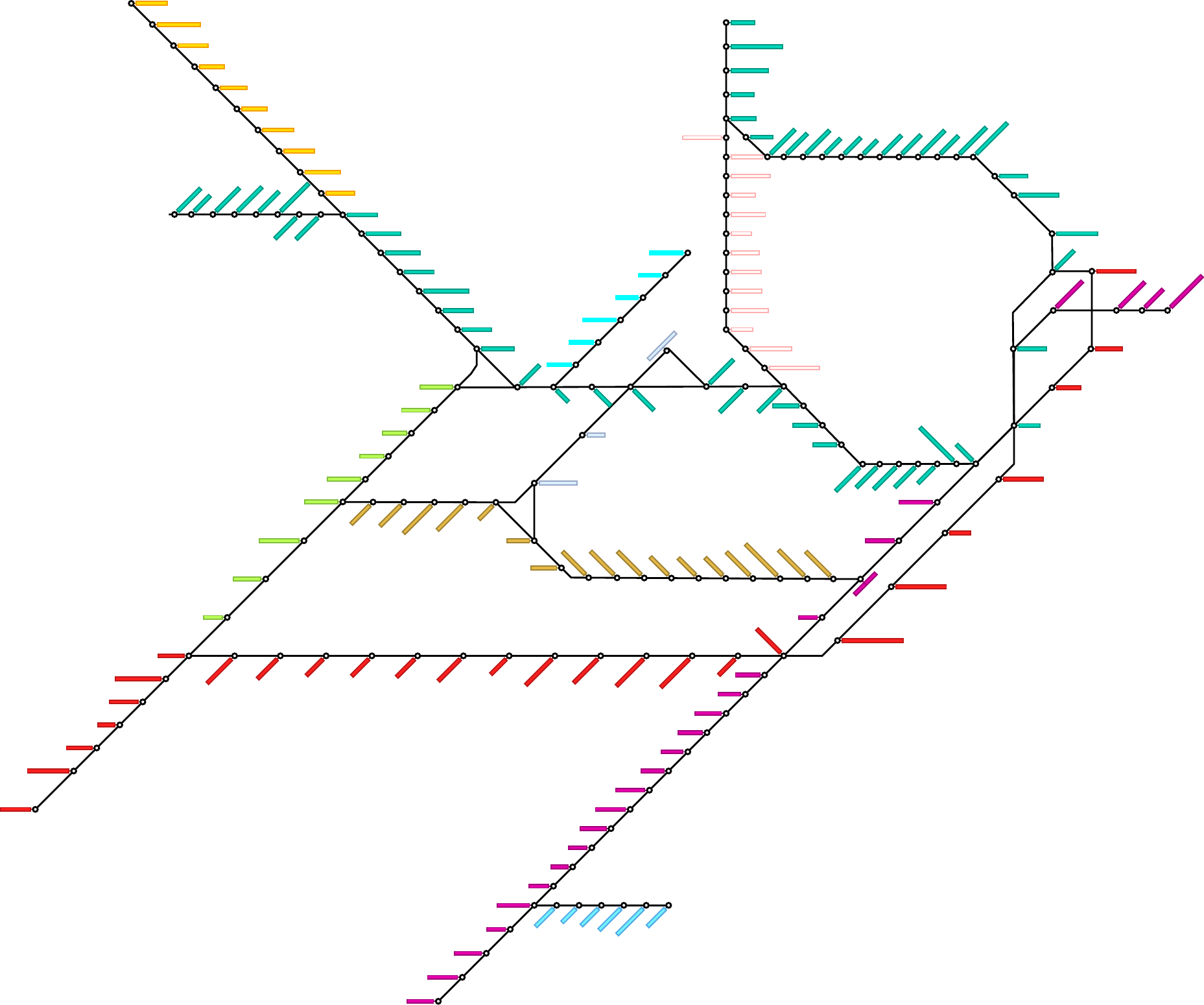}
      \label{fig:labelings:sydney1_g}
   } 

    \subfigure[ \SCALEALG]{
\includegraphics[width=\scaleA\textwidth]{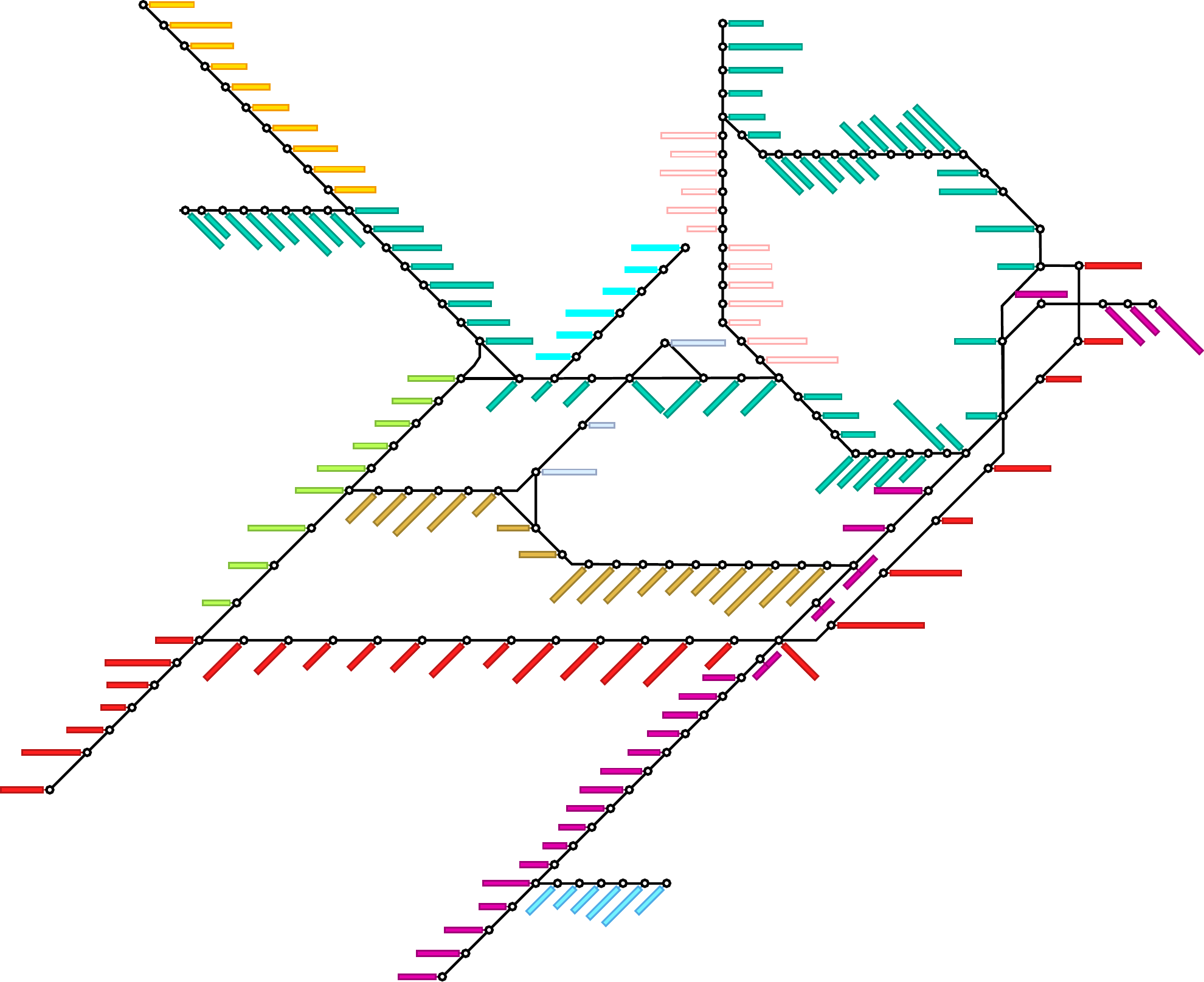}
      \label{fig:labelings:sydney1_scale}
   }
    \subfigure[ \ILPALG]{
\includegraphics[width=\scaleA\textwidth]{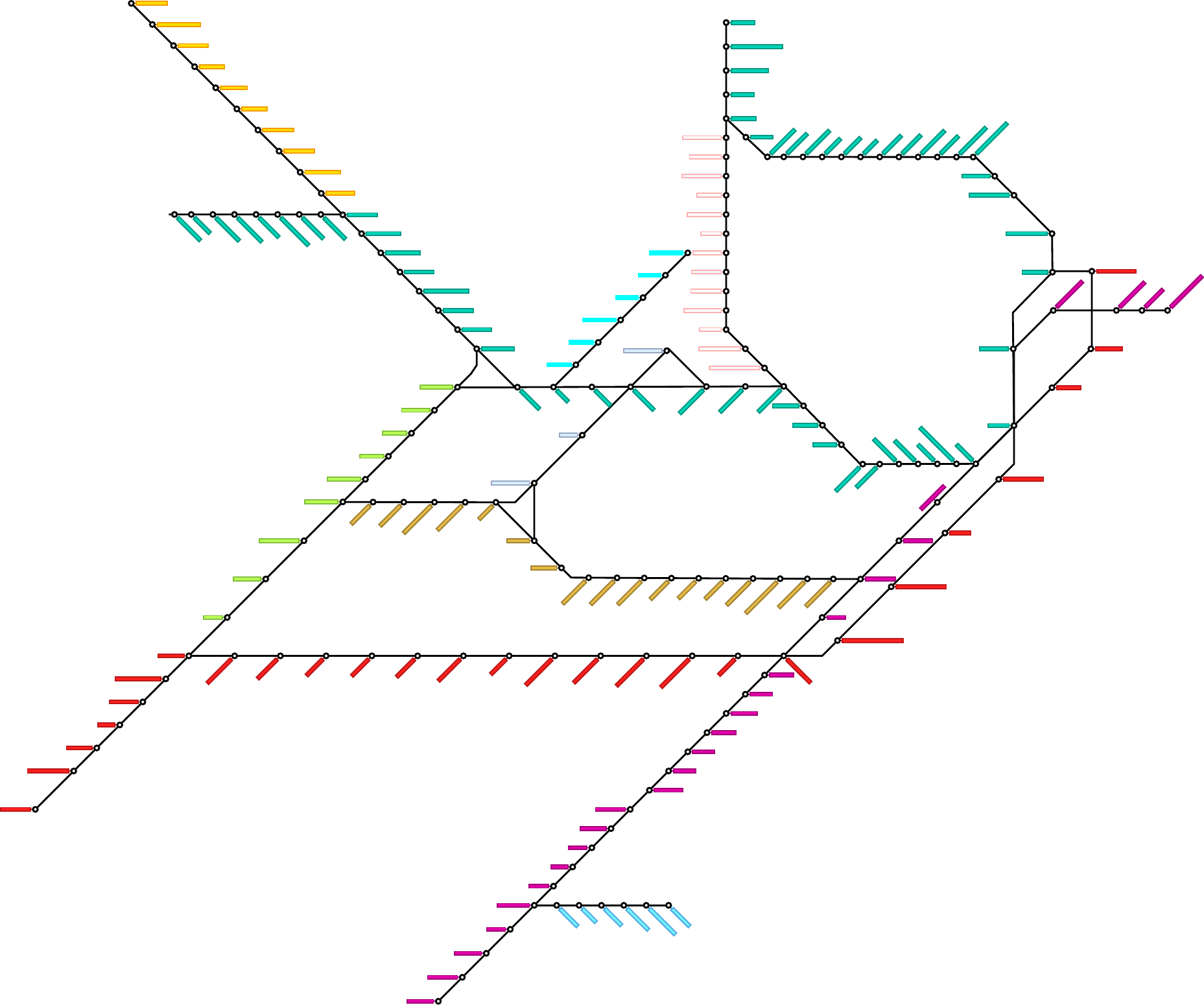}
      \label{fig:labelings:sydney1_ilp}
   } 
 \caption{Labelings for instance \emph{Sydney1}.}
 \end{figure}

 \newcommand{\scaleB}{0.48}
  \begin{figure}[!h]
    \centering 
    \subfigure[ \DYNALG]{
\includegraphics[width=\scaleB\textwidth]{octi_sydney_nw_fig9b_dp}
      \label{fig:labelings:sydney2_dp}
   }
    \subfigure[ \GREEDYALG]{
\includegraphics[width=\scaleB\textwidth]{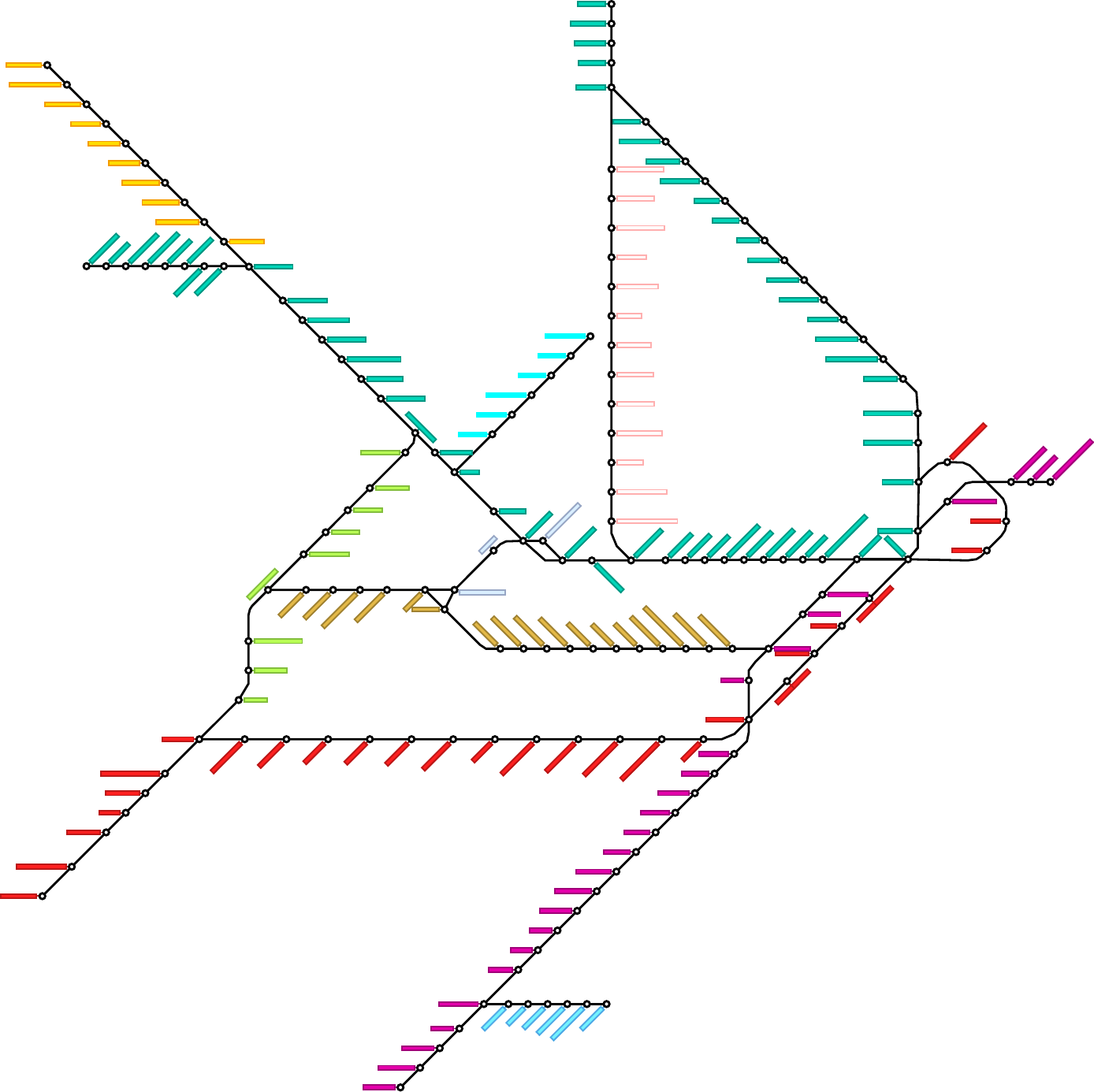}
      \label{fig:labelings:sydney2_g}
   } 

    \subfigure[ \SCALEALG]{
\includegraphics[width=\scaleB\textwidth]{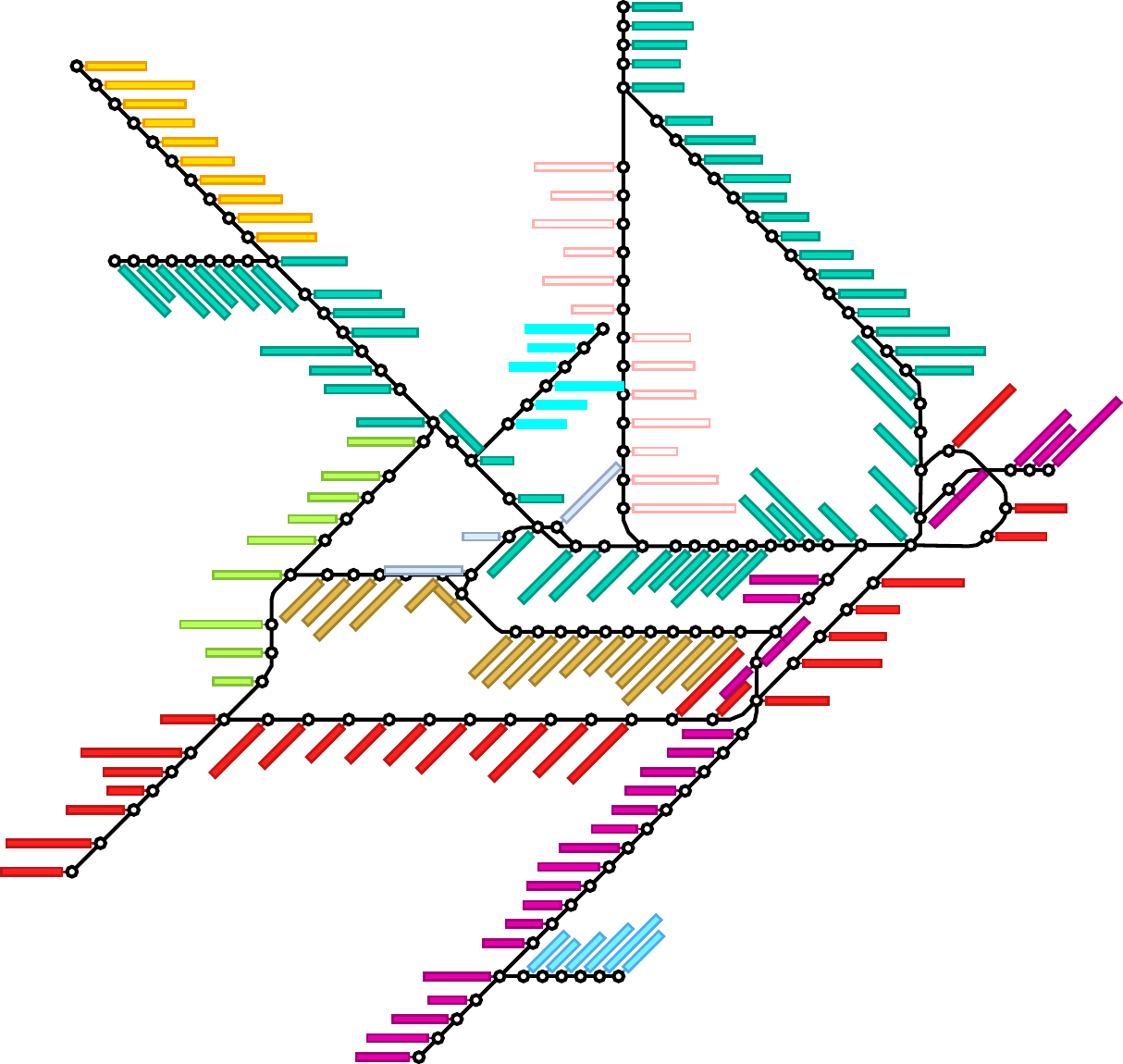}
      \label{fig:labelings:sydney2_scale}
   }
    \subfigure[ \ILPALG]{
\includegraphics[width=\scaleB\textwidth]{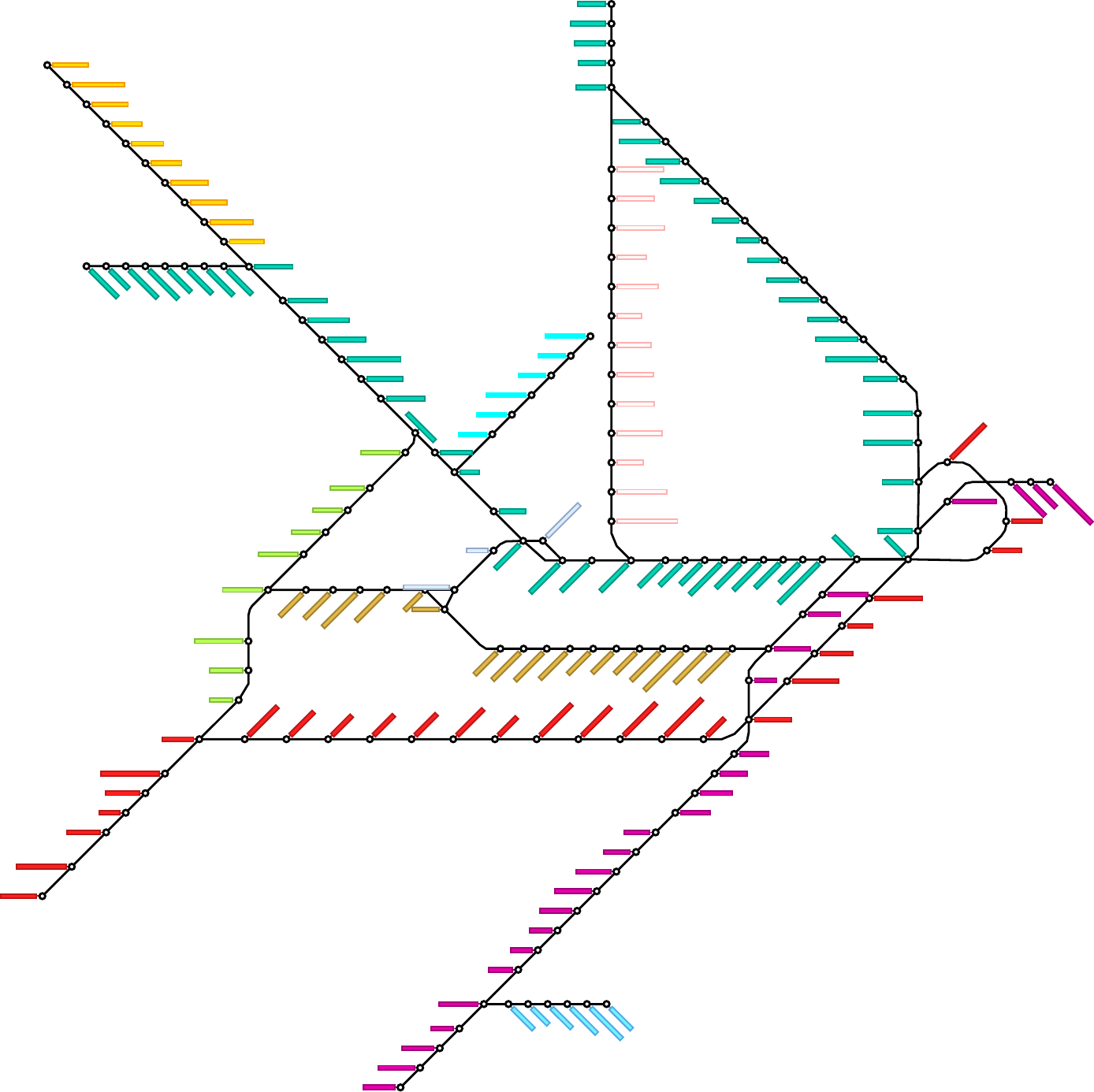}
      \label{fig:labelings:sydney2_ilp}
   } 
 \caption{Labelings for instance \emph{Sydney2}.}
 \end{figure}

 \newcommand{\scaleC}{0.48}
  \begin{figure}[!h]
    \centering 
    \subfigure[ \DYNALG]{
\includegraphics[width=\scaleC\textwidth]{octi_sydney_wang_dp}
      \label{fig:labelings:sydney3_dp}
   }
    \subfigure[ \GREEDYALG]{
\includegraphics[width=\scaleC\textwidth]{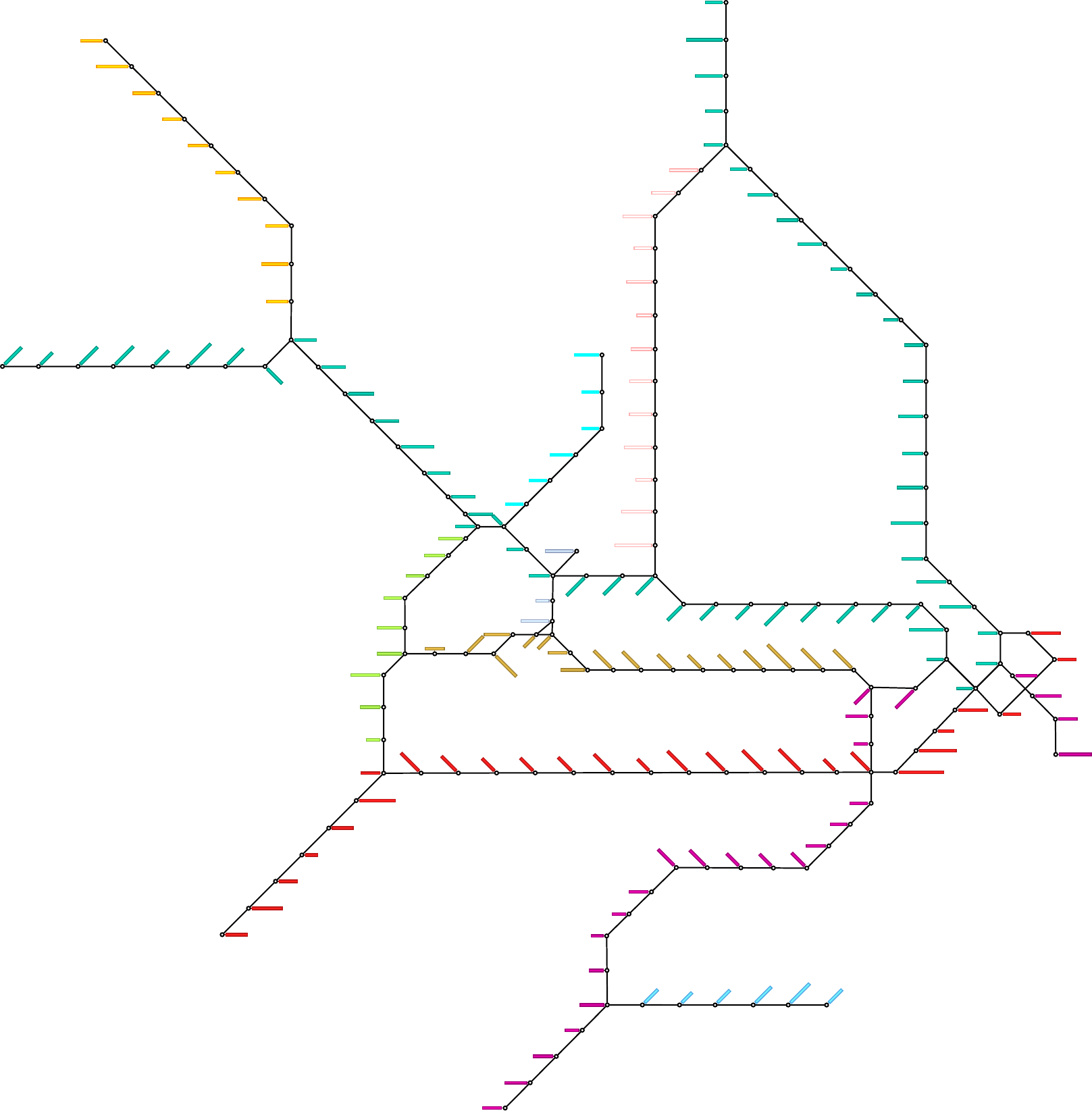}
      \label{fig:labelings:sydney3_g}
   } 

    \subfigure[ \SCALEALG]{
\includegraphics[width=\scaleC\textwidth]{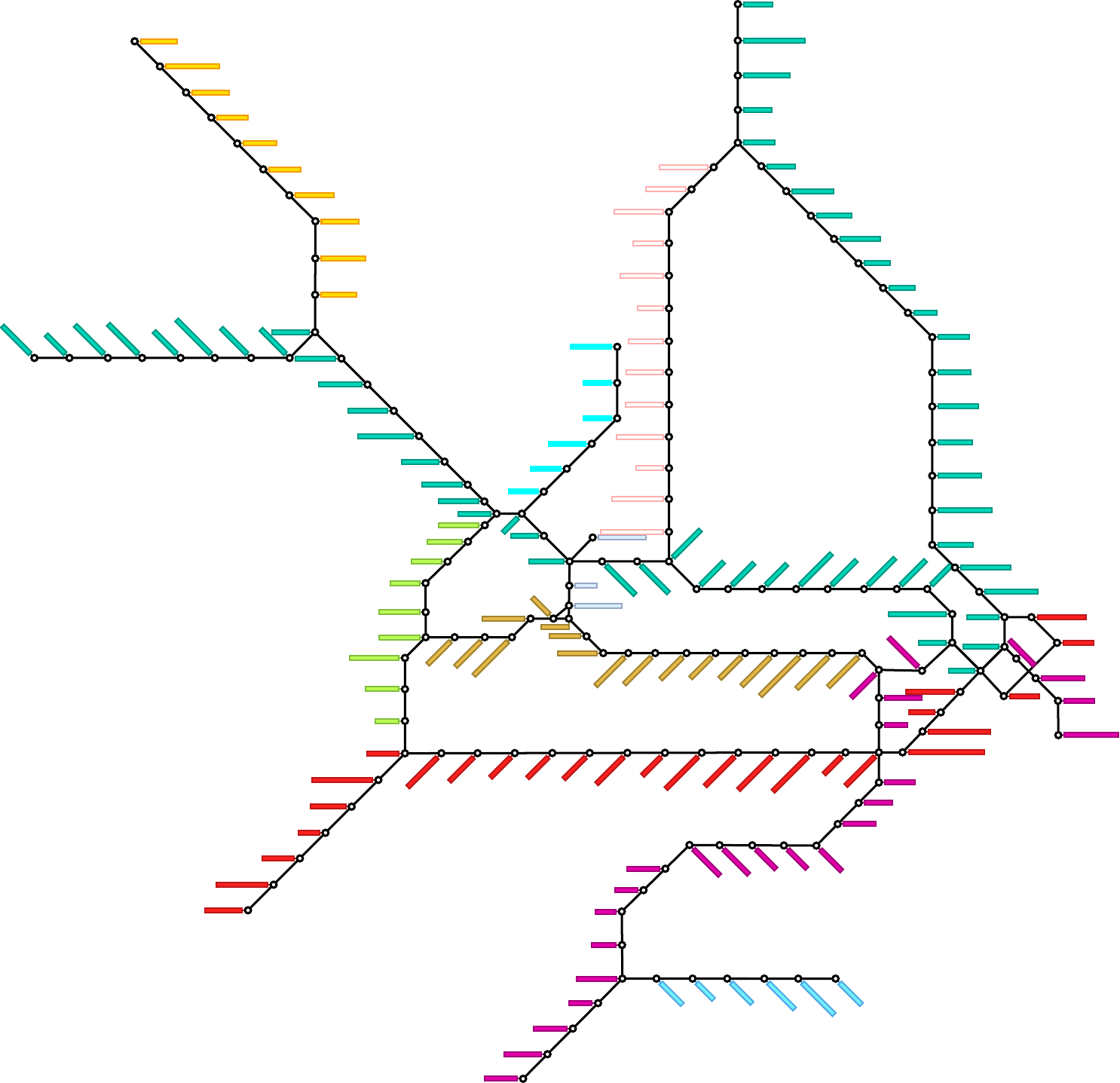}
      \label{fig:labelings:sydney3_scale}
   }
    \subfigure[ \ILPALG]{
\includegraphics[width=\scaleC\textwidth]{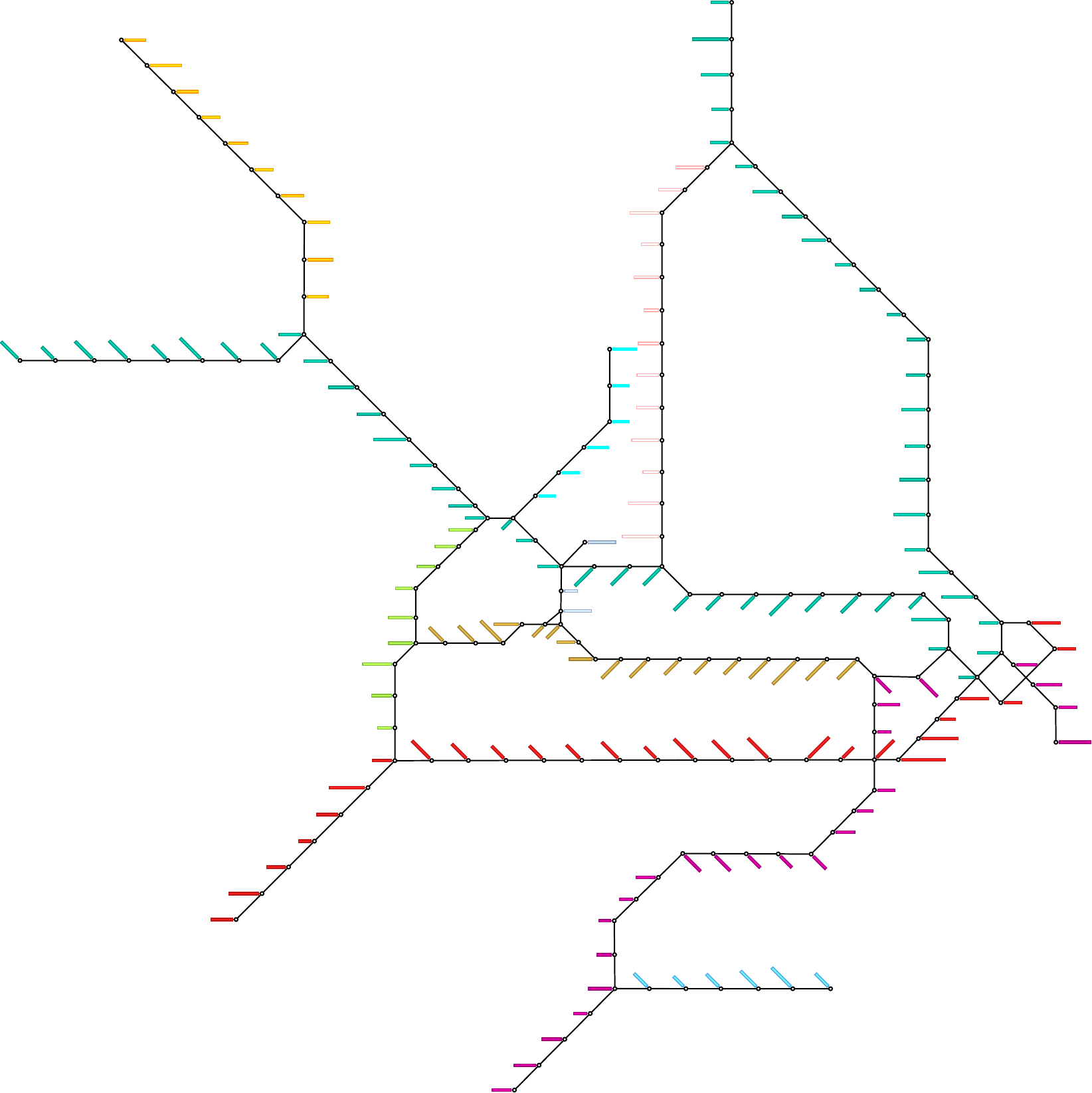}
      \label{fig:labelings:sydney3_ilp}
   } 
 \caption{Labelings for instance \emph{Sydney3}.}
 \end{figure}

 \newcommand{\scaleD}{0.48}
  \begin{figure}[!h]
    \centering 
    \subfigure[ \DYNALG]{
\includegraphics[width=\scaleD\textwidth]{curved_sydney_fink_dp}
      \label{fig:labelings:sydney4_dp}
   }
    \subfigure[ \GREEDYALG]{
\includegraphics[width=\scaleD\textwidth]{curved_sydney_fink_g}
      \label{fig:labelings:sydney4_g}
   } 

    \subfigure[ \SCALEALG]{
\includegraphics[width=\scaleD\textwidth]{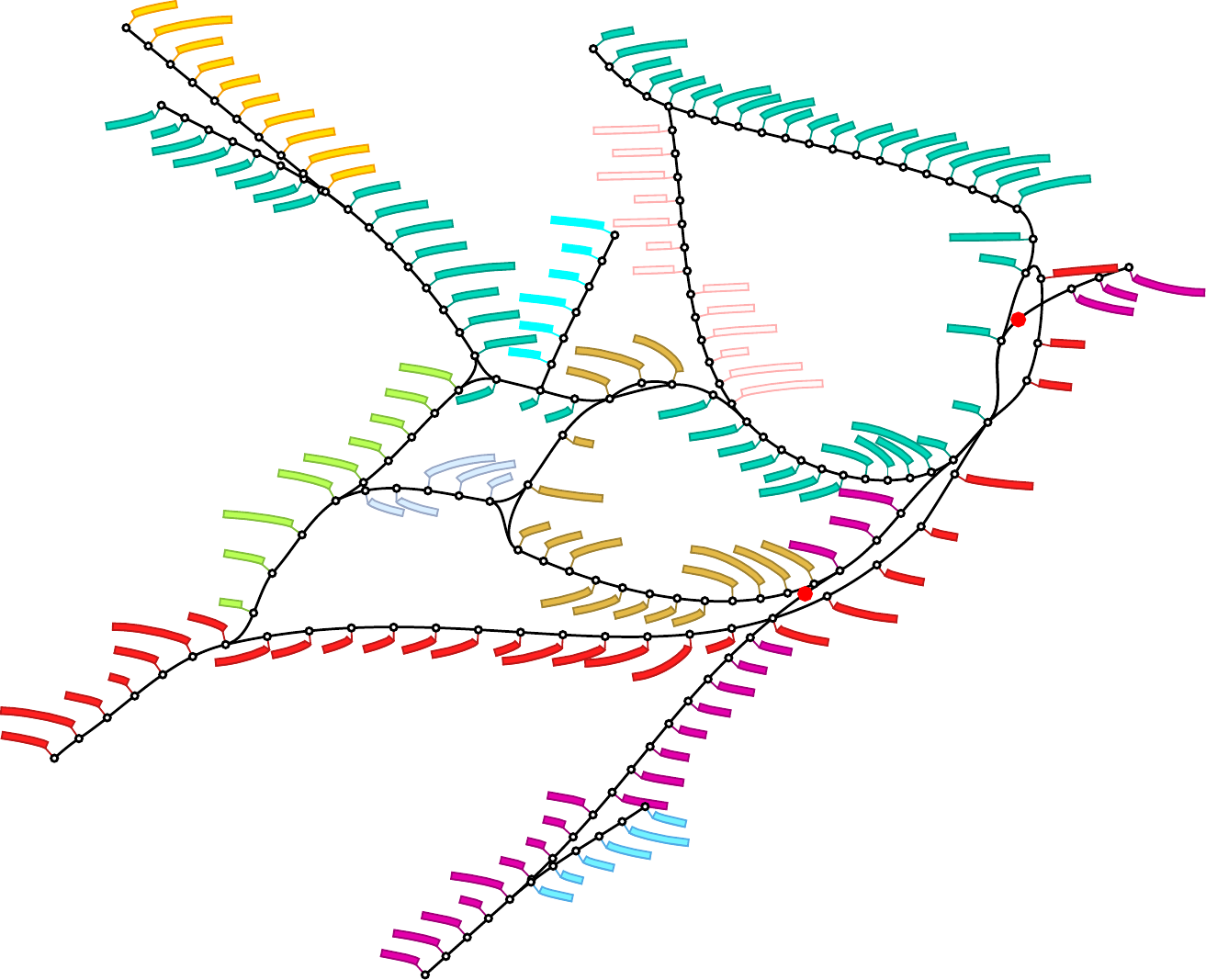}
      \label{fig:labelings:sydney4_scale}
   }
    \subfigure[ \ILPALG]{
\includegraphics[width=\scaleD\textwidth]{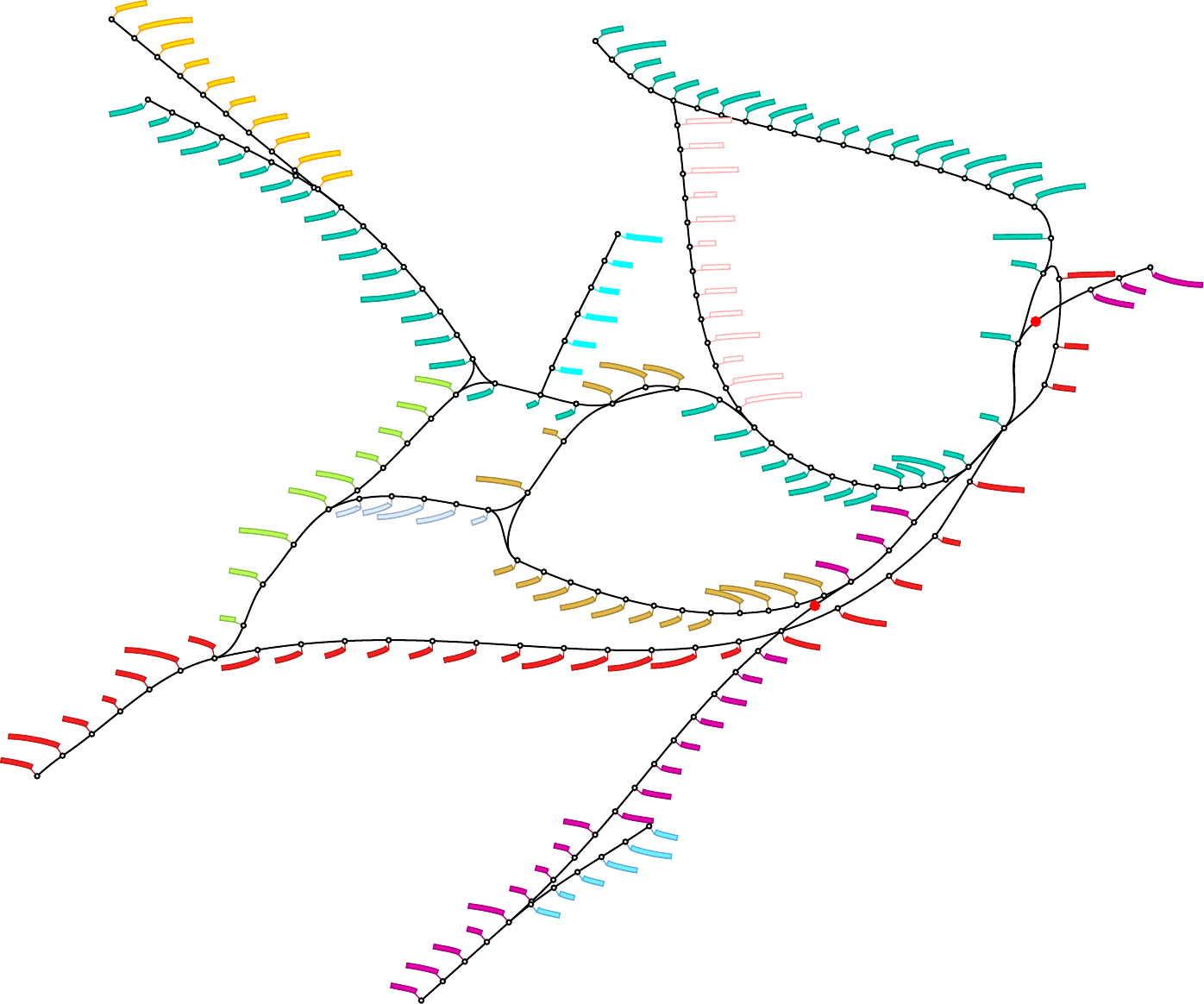}
      \label{fig:labelings:sydney4_ilp}
   } 
 \caption{Labelings for instance \emph{Sydney4}.}
 \end{figure}

 \newcommand{\scaleE}{0.48}
  \begin{figure}[!h]
    \centering 
    \subfigure[\DYNALG]{
\includegraphics[width=\scaleE\textwidth]{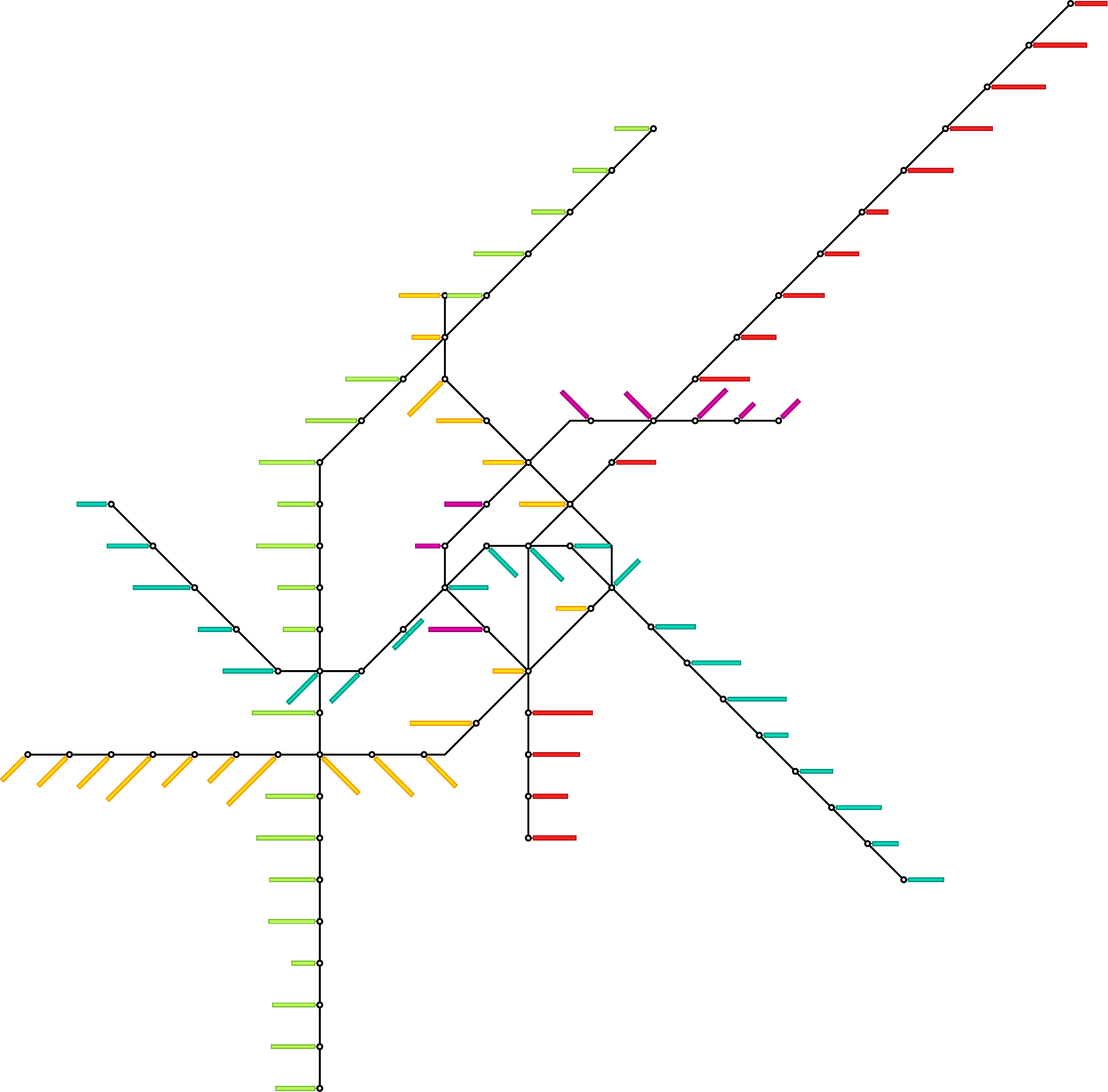}
      \label{fig:labelings:vienna1_dp}
   }
    \subfigure[ \GREEDYALG]{
\includegraphics[width=\scaleE\textwidth]{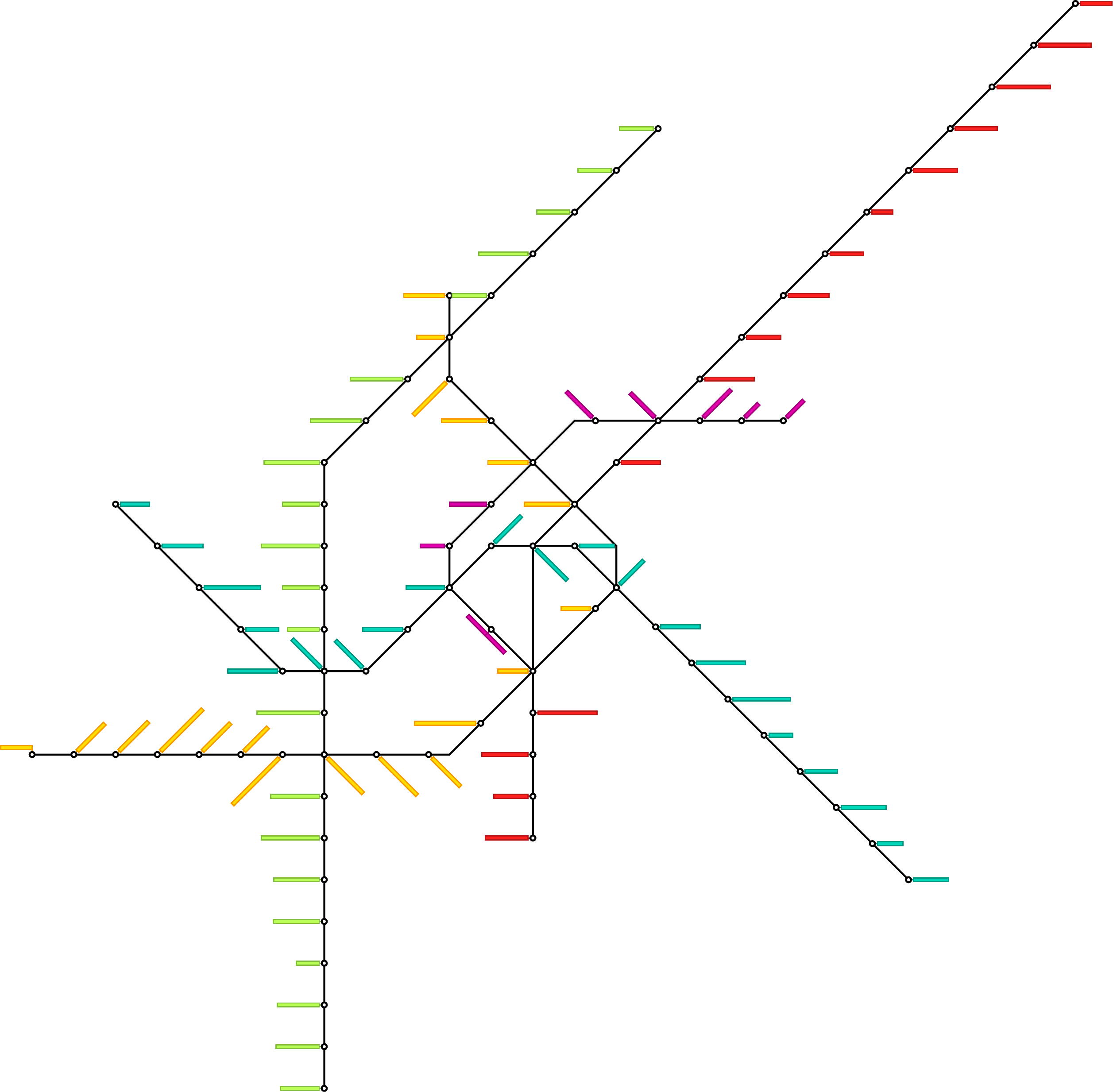}
      \label{fig:labelings:vienna1_g}
   } 

    \subfigure[ \SCALEALG]{
\includegraphics[width=\scaleE\textwidth]{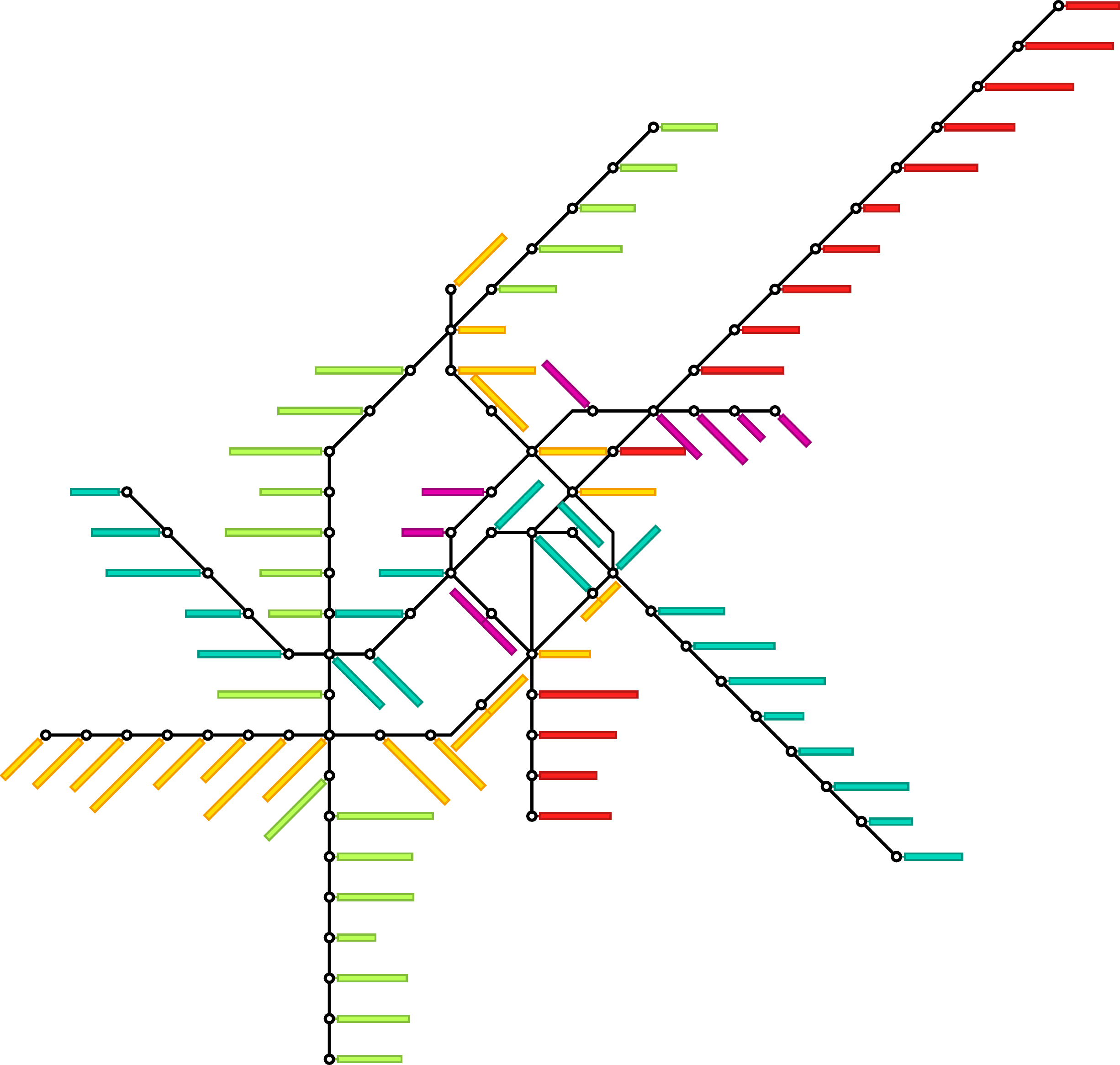}
      \label{fig:labelings:vienna1_scale}
   }
    \subfigure[\ILPALG]{
\includegraphics[width=\scaleE\textwidth]{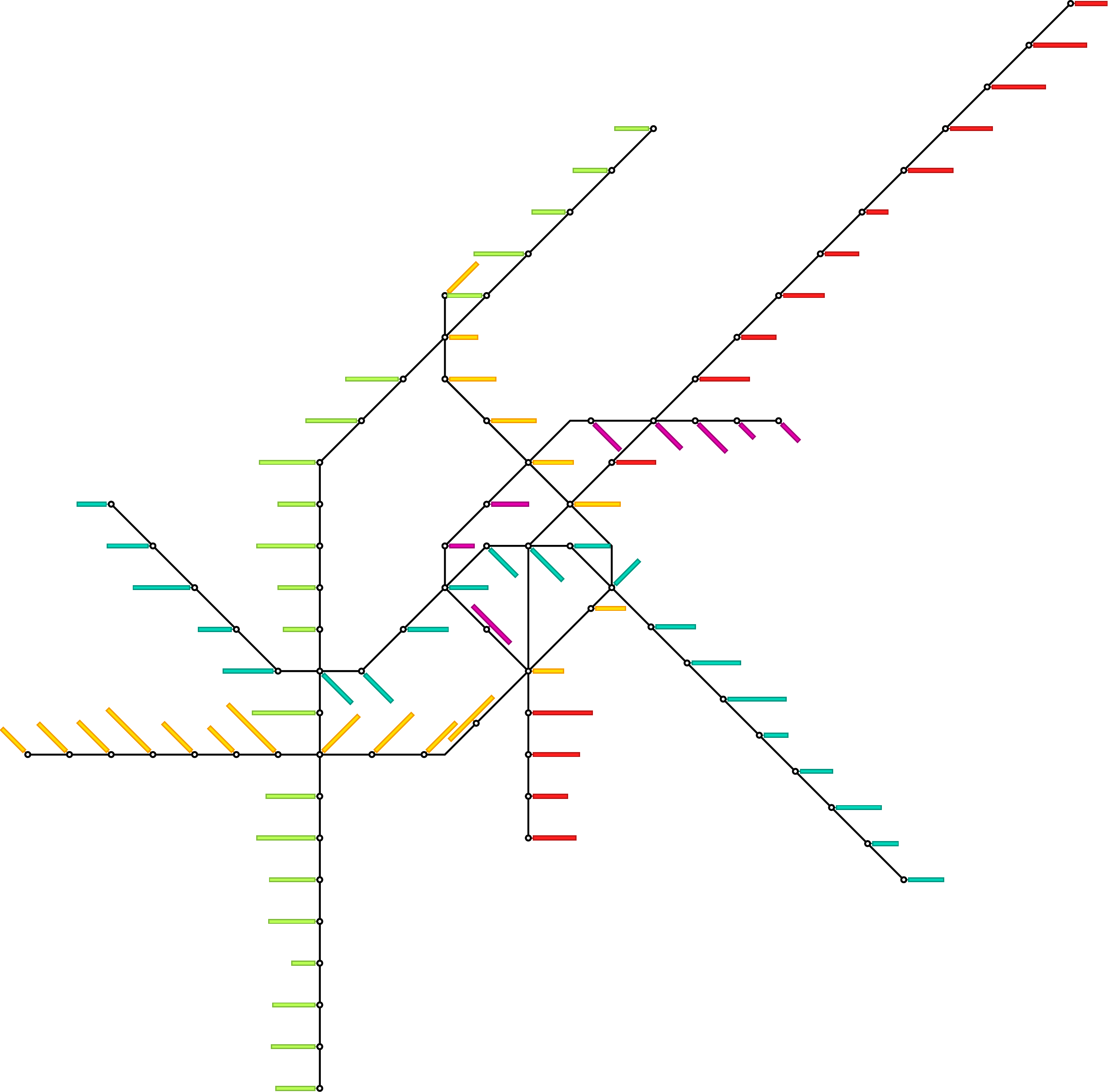}
      \label{fig:labelings:vienna1_ilp}
   } 
 \caption{Labelings for instance \emph{Vienna1}.}
 \end{figure}

 \newcommand{\scaleF}{0.48}
  \begin{figure}[!h]
    \centering 
    \subfigure[\DYNALG]{
\includegraphics[width=\scaleF\textwidth]{octi_vienna_nw_fig13b_dp}
      \label{fig:labelings:vienna2_dp}
   }
    \subfigure[ \GREEDYALG]{
\includegraphics[width=\scaleE\textwidth]{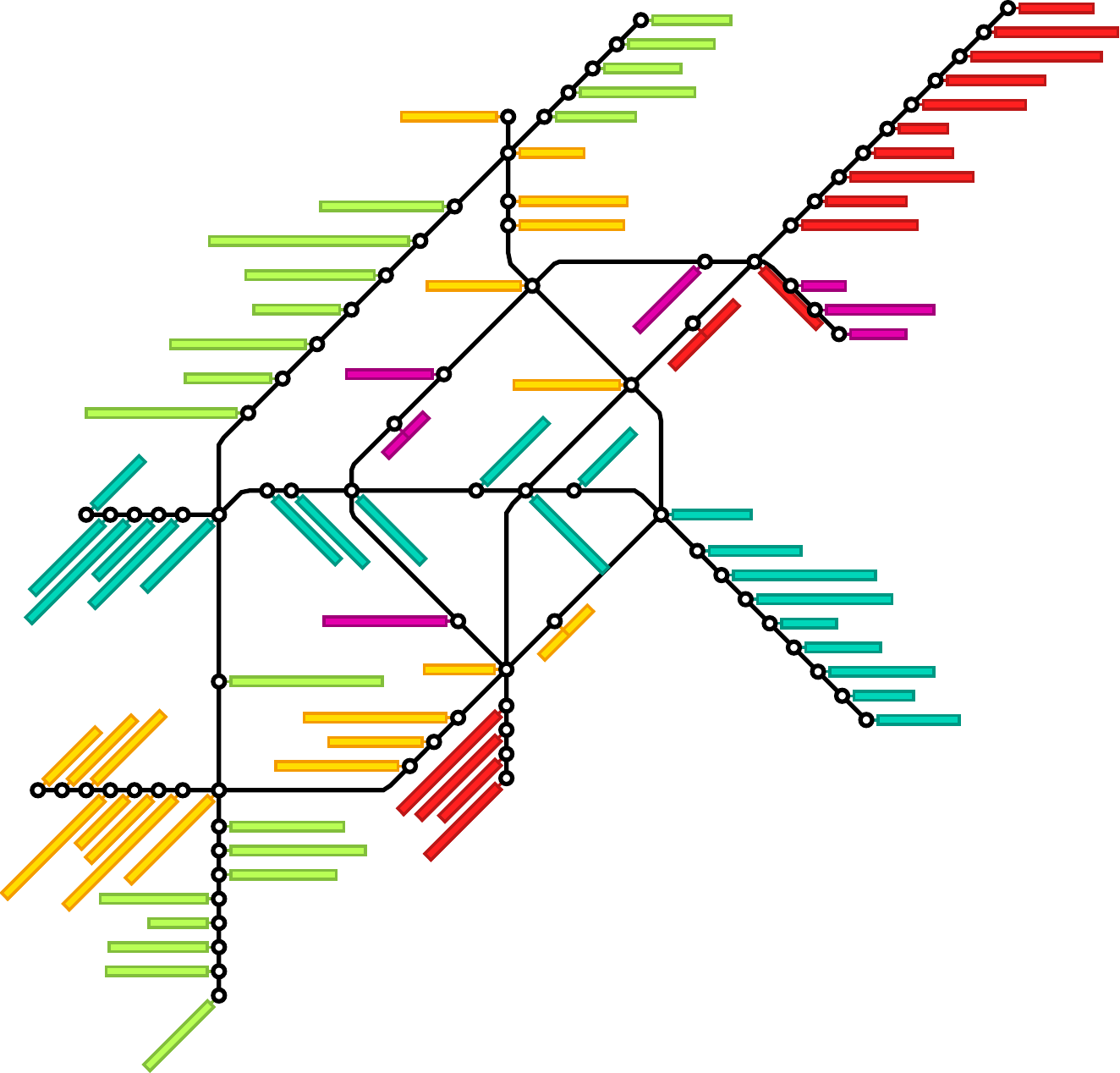}
      \label{fig:labelings:vienna2_g}
   } 

    \subfigure[ \SCALEALG]{
\includegraphics[width=\scaleE\textwidth]{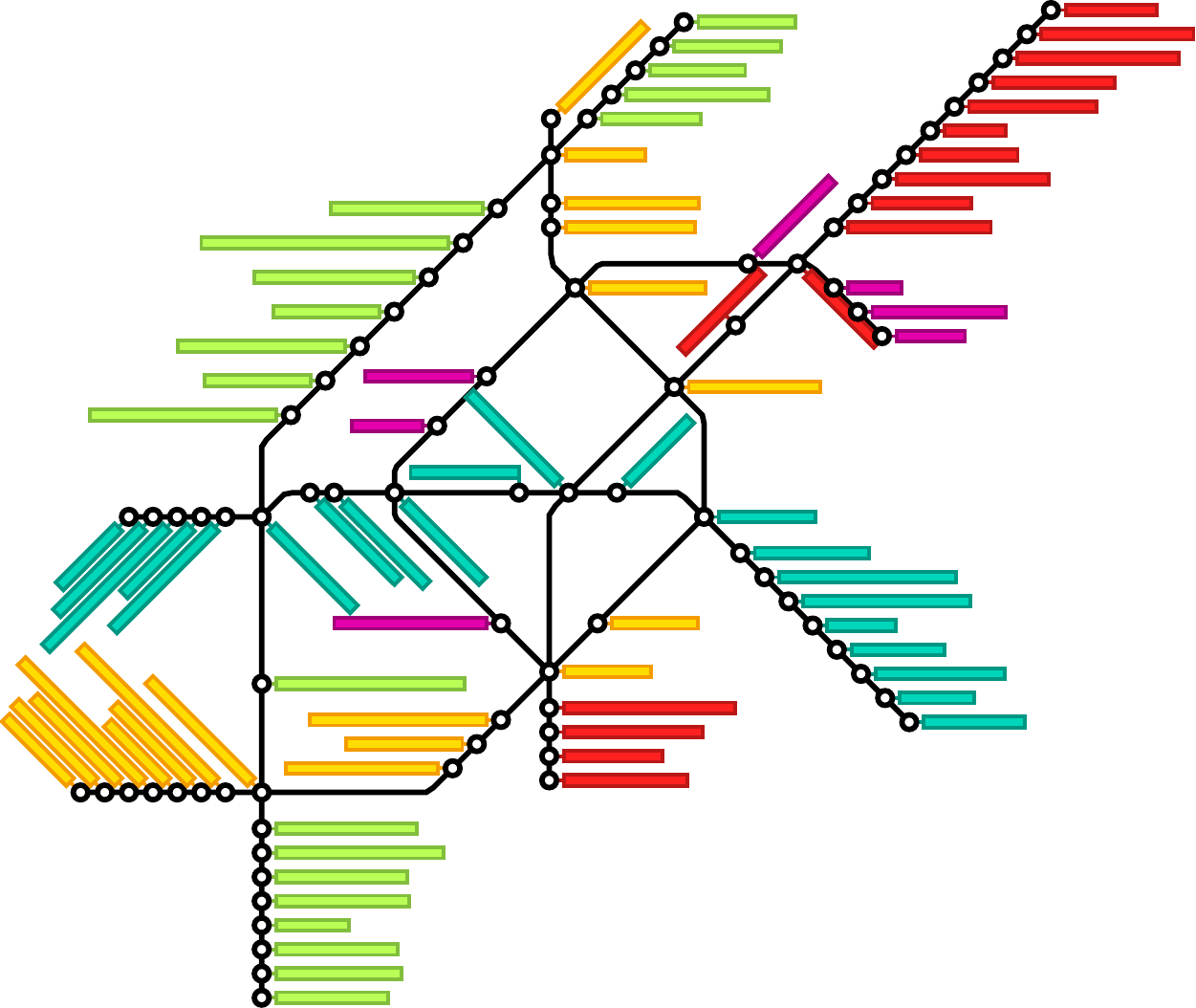}
      \label{fig:labelings:vienna2_scale}
   }
    \subfigure[\ILPALG]{
\includegraphics[width=\scaleE\textwidth]{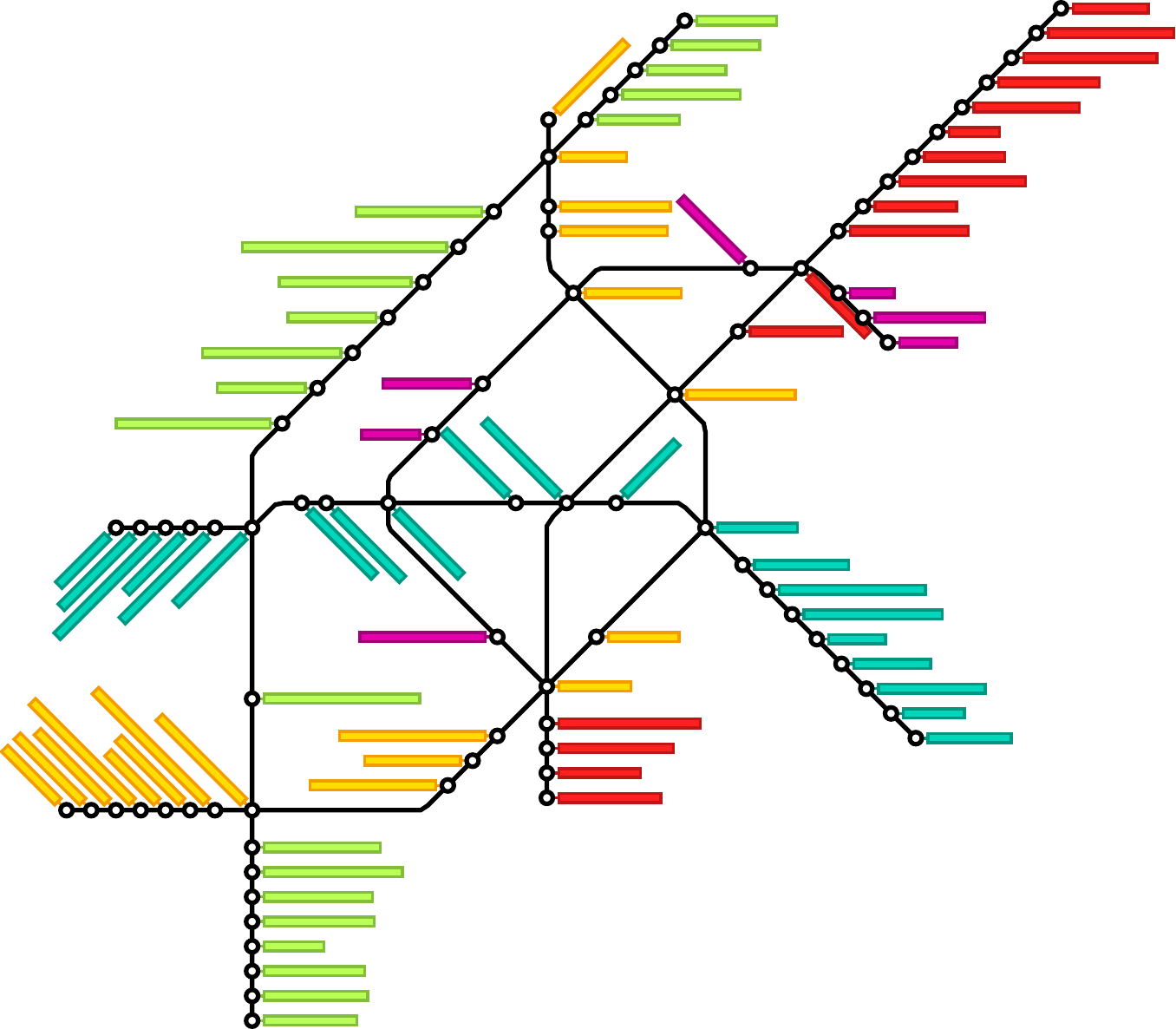}
      \label{fig:labelings:vienna2_ilp}
   } 
 \caption{Labelings for instance \emph{Vienna2}.}
 \end{figure}

 \newcommand{\scaleG}{0.48}
  \begin{figure}[!h]
    \centering 
    \subfigure[\DYNALG]{
\includegraphics[width=\scaleG\textwidth]{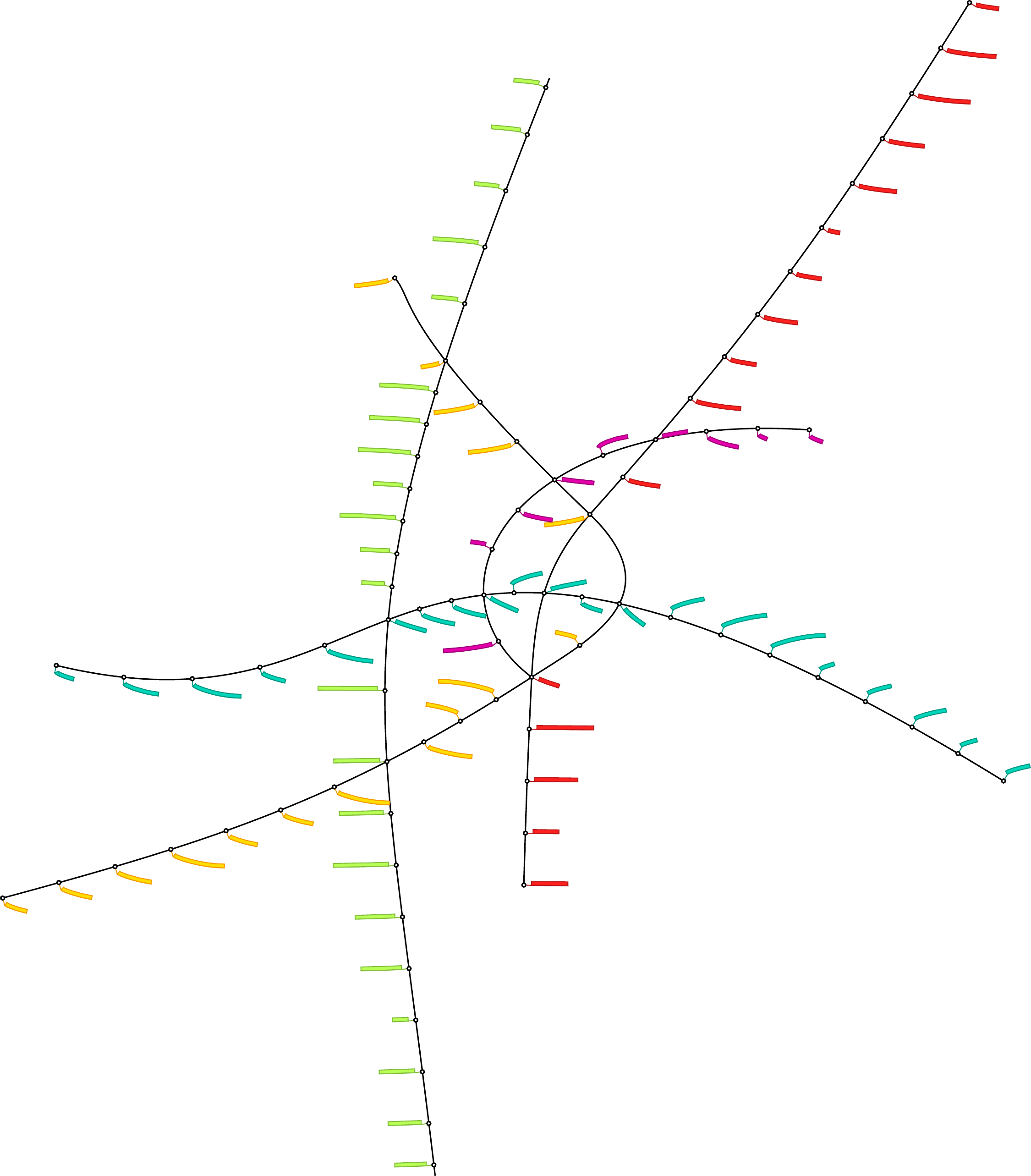}
      \label{fig:labelings:vienna3_dp}
   }
    \subfigure[ \GREEDYALG]{
\includegraphics[width=\scaleG\textwidth]{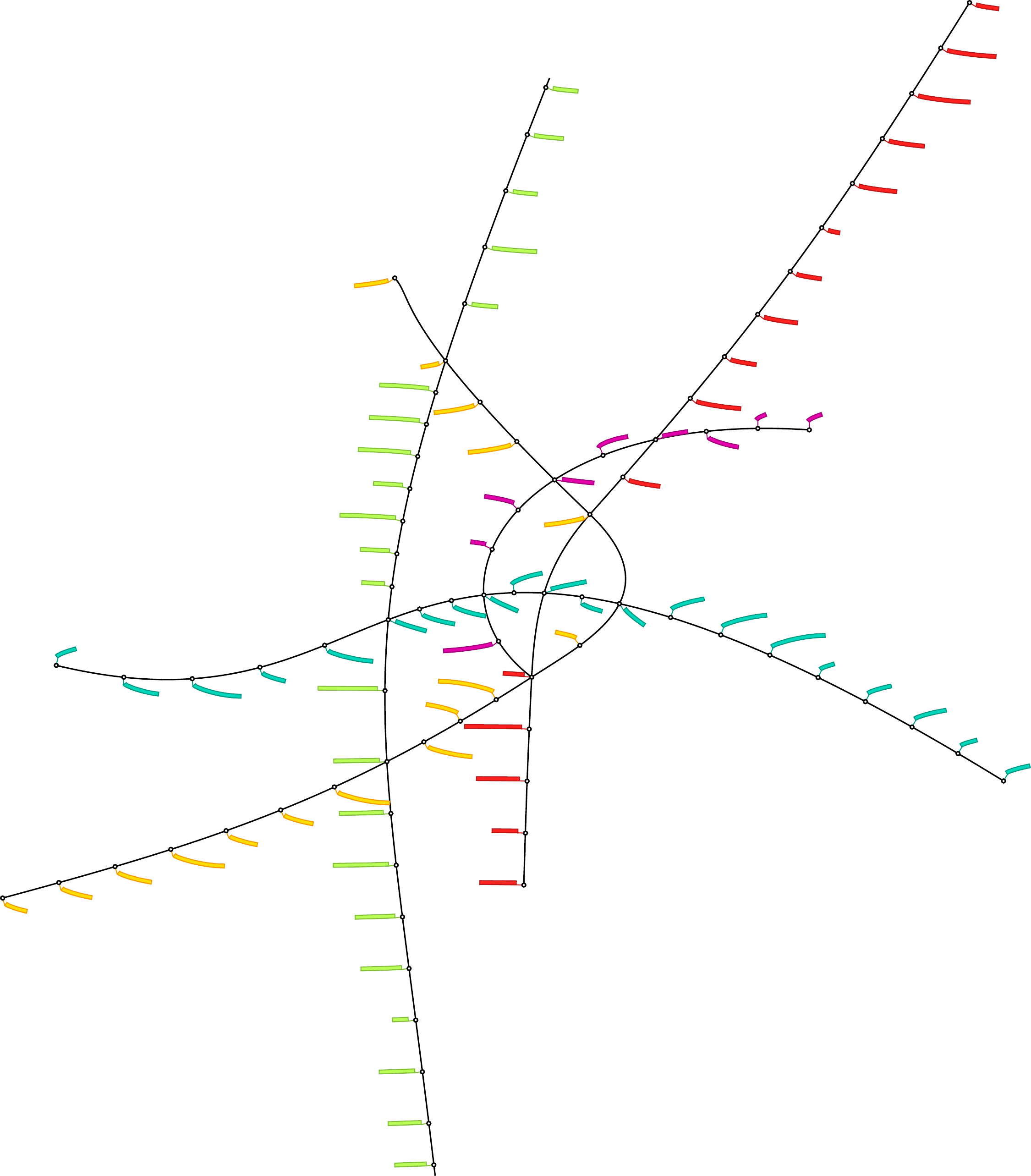}
      \label{fig:labelings:vienna3_g}
   } 

    \subfigure[ \SCALEALG]{
\includegraphics[width=\scaleG\textwidth]{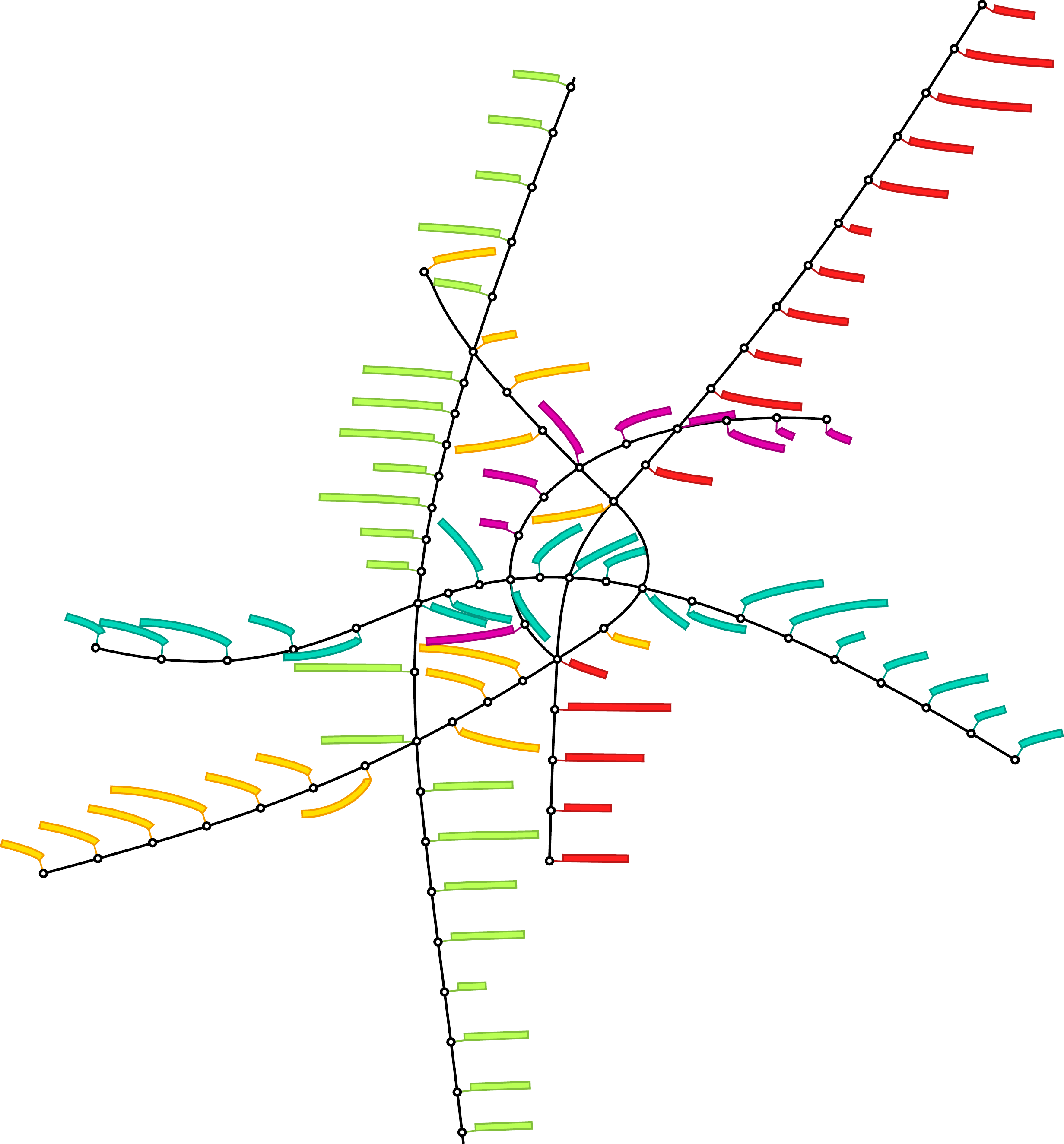}
      \label{fig:labelings:vienna3_scale}
   }
    \subfigure[\ILPALG]{
\includegraphics[width=\scaleG\textwidth]{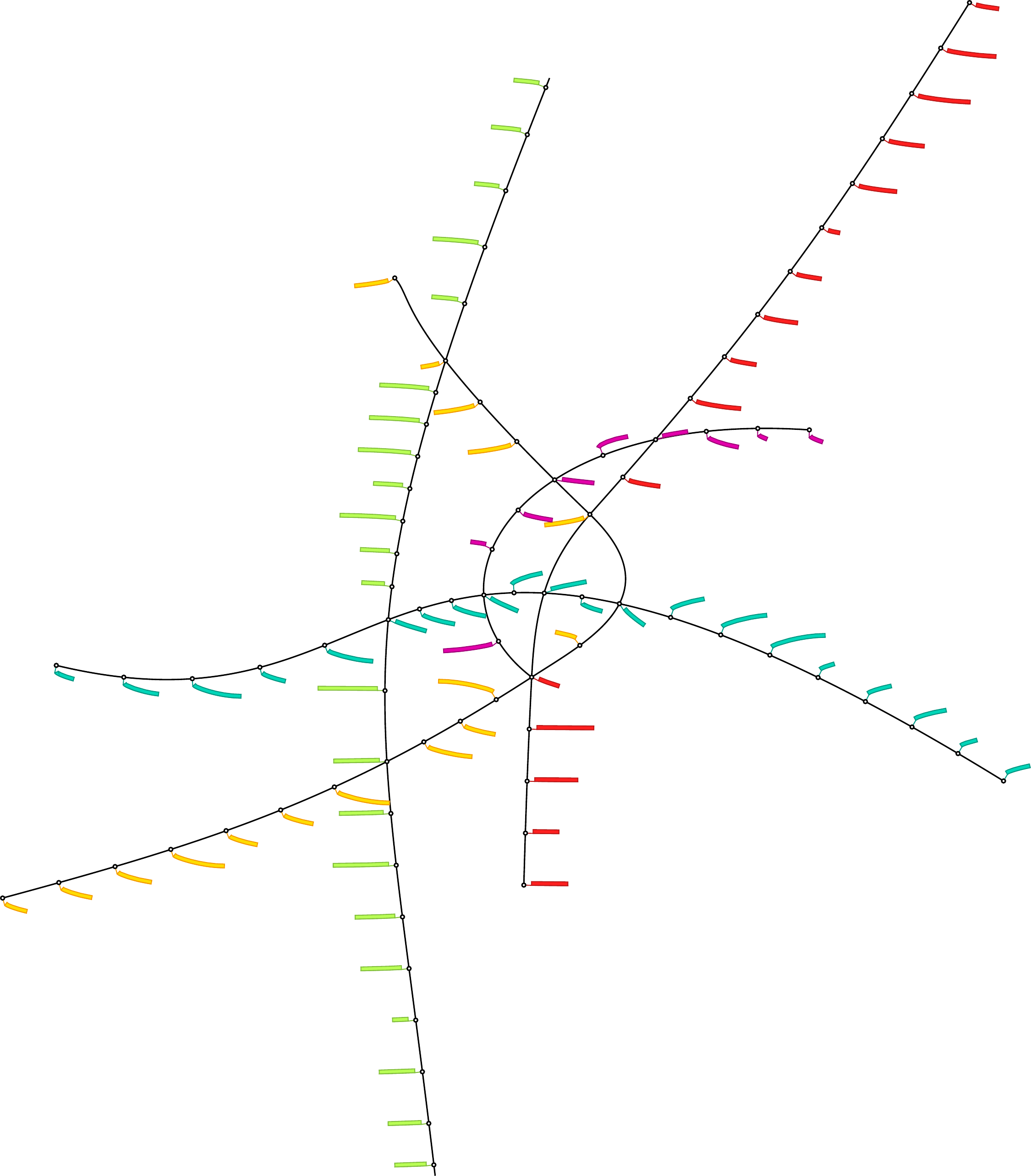}
      \label{fig:labelings:vienna3_ilp}
   } 
 \caption{Labelings for instance \emph{Vienna3}.}
 \end{figure}

\end{document}